\PassOptionsToPackage{longnamesfirst, authoryear}{natbib}
\PassOptionsToPackage{table}{xcolor}
\documentclass[acmsmall,nonacm]{acmart} %
\usepackage{graphicx} %

\usepackage{amsmath}
\usepackage{listings}
\usepackage{subcaption}
\usepackage{dsfont}

\usepackage{amsthm}

\usepackage{aliascnt}
\usepackage{hyperref}
\usepackage[capitalize,noabbrev]{cleveref}

\usepackage{xcolor}
\usepackage{algorithm}
\usepackage{algpseudocode} %

\newtheorem{theorem}{Theorem}[section]

\newaliascnt{lemma}{theorem}
\newtheorem{lemma}[lemma]{Lemma}
\aliascntresetthe{lemma}
\crefname{lemma}{lemma}{lemmas}
\Crefname{lemma}{Lemma}{Lemmas}

\newaliascnt{corollary}{theorem}

\aliascntresetthe{corollary}
\crefname{corollary}{corollary}{corollaries}
\Crefname{corollary}{Corollary}{Corollaries}

\newaliascnt{proposition}{theorem}

\aliascntresetthe{proposition}
\crefname{proposition}{proposition}{propositions}
\Crefname{proposition}{Proposition}{Propositions}

\theoremstyle{definition}
\newaliascnt{definition}{theorem}
\newtheorem{definition}[definition]{Definition}
\aliascntresetthe{definition}
\crefname{definition}{definition}{definitions}
\Crefname{definition}{Definition}{Definitions}

\theoremstyle{remark}
\newaliascnt{remark}{theorem}

\aliascntresetthe{remark}
\crefname{remark}{remark}{remarks}
\Crefname{remark}{Remark}{Remarks}

\crefname{theorem}{theorem}{theorems}
\Crefname{theorem}{Theorem}{Theorems}

\crefname{lemma}{lemma}{lemmas}
\Crefname{lemma}{Lemma}{Lemmas}

\crefname{corollary}{corollary}{corollaries}
\Crefname{corollary}{Corollary}{Corollaries}

\crefname{proposition}{proposition}{propositions}
\Crefname{proposition}{Proposition}{Propositions}

\crefname{definition}{definition}{definitions}
\Crefname{definition}{Definition}{Definitions}

\crefname{remark}{remark}{remarks}
\Crefname{remark}{Remark}{Remarks}

\usepackage{changepage}
\usepackage{multirow}

\usepackage[most]{tcolorbox}

\newtcolorbox[auto counter, number within=section]{example}[1][]{
  colback=blue!0!white,
  colframe=blue!0!white,
  coltitle=black,
  fonttitle=\bfseries,
  title=Example~\thetcbcounter: #1,
  boxrule=0.8pt,
  arc=4pt,
  outer arc=2pt,
  top=6pt, bottom=6pt, left=6pt, right=6pt,
  enhanced,
}

\usepackage{enumerate}

\usepackage{url}

\usepackage[colorinlistoftodos,prependcaption,textsize=tiny,disable]{todonotes}
\newcommand{\SiSe}[1]{\todo[linecolor=orange,backgroundcolor=orange!25,bordercolor=orange]{{\bf SiSe:} #1}}

\newcommand{\lr}[1]{\left (#1\right)}
\newcommand{\lrs}[1]{\left [#1 \right]}
\newcommand{\lrc}[1]{\left \{#1\right\}}
\newcommand{\lra}[1]{\left |#1\right|}

\renewcommand{\epsilon}{\varepsilon}

\newcommand{\Z}{\mathbb Z}
\newcommand{\R}{\mathbb R}

\newcommand{\MN}[1]{\mathcal N\left( #1 \right)}

\newcommand{\BL}[1]{\Omega\!\left( #1 \right)}
\newcommand{\BU}[1]{\mathcal{O}\!\left( #1 \right)}
\newcommand{\T}[1]{\Theta\!\left( #1 \right)}
\newcommand{\M}{\mathcal{M}}

\NewDocumentCommand{\E}{o}{\mathbb E\IfValueT{#1}{\lrs{#1}}}
\NewDocumentCommand{\1}{o}{\mathds 1{\IfValueT{#1}{\lr{#1}}}}

\newcommand{\norm}[1]{\left\lVert#1\right\rVert}

\title{A Fast Gaussian Mechanism under Continual Observation, with Applications}

\author{Rasmus Pagh}
\email{pagh@di.ku.dk}
\affiliation{%
  \institution{BARC, University of Copenhagen}
  \country{Denmark}
}

\author{Sia Sejer}
\email{sia.sejer@di.ku.dk}
\affiliation{%
  \institution{BARC, University of Copenhagen}
  \country{Denmark}
}

\date{\today}

\begin{document}

\begin{abstract}
    We consider the problem of privately releasing a $k$-dimensional vector under updates:
    Starting with a zero vector, at times $t_1, t_2,\dots$ the vector is updated by adding $x^{(1)}, x^{(2)},\dots$, respectively.
    For positive integers $T$, $k$ we model the updates as a data set $\{(t_i, x^{(i)})\}_i$, where $t_i \in [T]$ and $x^{(i)} \in B_k$ (the $k$-dimensional unit ball).
    Two such data sets are said to be neighboring if their symmetric difference has size at most $1$.
    The continual release consists of the sum $A^{(t)} = \sum_{i \; : \; t_i \leq t} x^{(i)}$ for each time step $t=1,\dots,T$.
    Classical continual release techniques allow us to release an approximation of $A^{(1)},\dots,A^{(T)}$ with additive noise of magnitude $\text{polylog}(T)$, computed in time $\BU{kT}$, even in the on-line, adaptive case where data is continually revealed for the current time step.

    Motivated by private sketching techniques, we consider the setting where only a \emph{subset} of entries in $A^{(t)}$ need to be released at time step $t$.
    Our new result is that it is possible to sample any desired entry in a given noise vector in \emph{constant time} while reproducing exactly the distribution of the binary tree mechanism with Gaussian noise.
    The improvement on the known time bound of $\BU{\log T}$ comes from a new data structure that allows us to sample a new noise value with the correct correlations in constant time using Brownian bridges.
    We present two data management applications, of independent interest, that use our technique in conjunction with differentially private CountSketches:
    1) A dynamic data structure for orthogonal range counting queries with a better privacy/accuracy/space trade-off than previous data structures, and
    2) Join size estimation, where in addition we show improved high-probability bounds. %
\end{abstract}

\begin{CCSXML}
<ccs2012>
<concept>
<concept_id>10002978.10003029.10011150</concept_id>
<concept_desc>Security and privacy~Privacy protections</concept_desc>
<concept_significance>500</concept_significance>
</concept>
<concept>
</ccs2012>
\end{CCSXML}
\ccsdesc[500]{Security and privacy~Privacy protections}

\keywords{Differential Privacy, Continual Observation, Sketching, Range Counting, Join Size Estimation}

\newpage
\setcounter{page}{1}
\maketitle

\section{Introduction}
\emph{Differential privacy} provides a strong information-theoretic guarantee of privacy with a tunable trade-off between privacy and \emph{utility}, measured by the magnitude of noise added to query answers.
It is relevant in settings where data points may reveal sensitive information about individuals.
The standard way of making a single $k$-dimensional vector private is by adding independent noise (e.g., zero-mean Gaussians) to its entries before release.
In the \emph{continual observation} setting, releases are interleaved with bounded-norm updates to the vector, and we want the whole sequence of vectors to be differentially private.
One way to handle this, with good dependence on the number of queries, is to instantiate $k$ versions of the classical binary mechanism of~\citet{Dwork2010ContinalObs}.
This approach yields noise of magnitude polylog$(T)$ over $T$ time steps, but updates all $k$ noise values at each time step and thus requires $\BL{k}$ time per release.
This is optimal if we wish to release the whole vector at each step.
However, updates may be sparse and full releases may be replaced by \emph{queries} that release only certain vector entries.
In this case we can hope to optimize query and update time to depend on the number of distinct times a vector entry is queried and updated.

\medskip

{\bf Sketches under continual observation.}
A canonical example of a sparse update/sparse query setting is private sketching.
Applying continual observation to private linear sketches has previously been considered by~\citet{DifferentiallyPrivateContinualReleasesofStreamingFrequencyMomentEstimations}, but their technique requires updating every sketch entry in each time step, making it inefficient in settings where sketches are large.
\citet{ScalableDifferentiallyPrivateSketchesUnderContinualObservation} developed a technique in which sketch updates are buffered and carried out in batches when the number of updates matches the sketch size.
While this improves time complexity, buffering updates introduces significant error for large sketches.

In this paper we show that such trade-offs are not necessary.
Our main result is the Fast Gaussian Mechanism under Continual Observation (FastGaMe), a time-efficient implementation of the binary tree mechanism of~\cite{Dwork2010ContinalObs} in the Gaussian noise setting.
FastGaMe \emph{only materializes the noise values that are actually queried}, and generates each such noise value in \emph{constant time}.
This implies that any sketch can be implemented with continual observation noise and the same asymptotic time complexity as in the non-private setting.
Like in~\citep{DifferentiallyPrivateContinualReleasesofStreamingFrequencyMomentEstimations,ScalableDifferentiallyPrivateSketchesUnderContinualObservation} the space usage grows by a factor $\Theta(\log T)$, where $T$ is the number of time steps.%

\medskip

{\bf Applications using CountSketch.}
We show two applications of FastGaMe, both based on \emph{Private CountSketch}, recently studied by \citet{PCS} and \citet{DifferentiallyPrivateLinearSketches}.
This sketch approximately represents a sparse vector under single-entry updates and queries, where each update and query only involves a small number of values from the sketch.
The applications are key primitives in data management:
\begin{itemize}

    \item \textbf{Private dynamic orthogonal range counting.} Given a dynamically changing point set in $[B]^d$, we seek to maintain a differentially private index that supports orthogonal range counting queries (points satisfying a conjunction of range predicates).
    Past work has relied on representing sparse vectors using additive noise with thresholding, and has supported dynamic updates only through expensive rebuilding techniques.
    The combination of FastGaMe and Private CountSketch significantly improves the time and space complexity achievable by past techniques while also providing a stronger privacy/utility guarantee (see \Cref{figure:rangecounting-overview-dynamic} for an overview).

    \item \textbf{Continual join size estimation.} Given two dynamically changing relations, we wish to privately maintain an estimate of their join size.
    Maintaining a CountSketch for the join attributes of each relation allows such estimates to be maintained in small space, but to our knowledge this problem has not been studied in the private setting, let alone the continual observation setting.
    We give bounds for the join size estimation accuracy achievable using Private CountSketch and show that the combination with FastGaMe allows such estimates to be maintained in constant time per update with space only slightly larger than in the non-private setting.
\end{itemize}

\subsection{Related Work}

{\bf Continual release of a counter.}
Continual counting is the basic problem of releasing, after each bounded size update of a scalar value, an approximation of the value while preserving the privacy of individual updates.
This problem and its generalization to $k$-dimensional bounded-norm vector updates is one of the canonical examples of privacy under continual observation, and is often used as a primitive in other continual-release algorithms.
The classical approach is the \emph{binary tree mechanism} of \citet*{Dwork2010ContinalObs} and the closely related construction of \citet{Chan2011ReleaseStatistics}, in which each output is represented as the sum of a logarithmic number of noisy dyadic partial sums.
Both approaches give polylogarithmic error over a horizon of length $T$, forming the standard baseline for continual release.
Recent work has improved these results by providing tighter constant factors, for example: \citet{AlmostTightErrorBoundsOnDifferentiallyPrivateContinualCounting} gave nearly tight bounds for continual counting (with noise value generation in time $\BU{T\log T}$), while \citet{SmoothBinaryMechanismForEfficientPrivateContinualObservation} presented a variant that reduced the variance while allowing all $T$ additive noise values to be generated in time $\BU{T}$.

{\bf Private Linear Sketching.}
Linear sketches are a central tool for summarizing high-dimensional frequency vectors under streaming updates, with CountSketch~\citep{CountSketchCCF} being a canonical example that supports unbiased point queries and self-join size estimation from a small randomized linear summary.
Privacy for sketches has been studied both in the pan-private streaming model, where the internal state of the algorithm must remain private against intrusions~\citep{PanPrivateAlgorithmsViaStatisticsOnSketches}, and in settings where a sketch is made differentially private by perturbing its counters (this paper considers the latter).
Recent work on Private CountSketch shows that adding appropriately calibrated noise to the sketch entries yields private sparse-vector representations with strong accuracy guarantees for point queries and downstream estimation tasks~\citep{PCS,DifferentiallyPrivateLinearSketches}.

Related private heavy-hitters methods that are \emph{not} linear sketches include the private Misra--Gries algorithm of \citet{DifferentiallyPrivateMisraGries}, which combines a deterministic frequency summary with privacy-preserving noise, and the differentially private weighted-sampling framework of \citet{DifferentiallyPrivateWeightedSampling}, which provides another compact summary for private vector and frequency estimation.
These methods work well in settings where data can only be inserted, in contrast to private linear sketches which also support deletions, and have not been shown to work in continual observation settings.

\medskip

{\bf Linear Sketching under Continual Observation.}
The closest prior work on private linear sketches under continual observation is due to \citet{DifferentiallyPrivateContinualReleasesofStreamingFrequencyMomentEstimations}, who study the continual release of frequency-moment estimates in insertion-only streams.
Their algorithms combine streaming sketches with continual-release primitives for counts, distinct elements, heavy hitters, and low-frequency elements, and obtain near-optimal space bounds up to polylogarithmic factors.
However, they use the standard way of applying continual observation to the sketch vector, which requires updating the noise for every sketch cell at every time step.
More recently, \citet{ScalableDifferentiallyPrivateSketchesUnderContinualObservation} considered this computational bottleneck directly and proposed a scalable framework for differentially private sketches under continual observation.
Their method groups stream updates into batches and applies lazy sketch updates, reducing the per-update cost relative to maintaining a fully updated noisy sketch at every time step.
This makes continual observation sketches more practical for high-throughput streams, especially for heavy-hitter detection, but the buffering step also creates an additional approximation term that grows with the sketch size, limiting what precision is possible.

\medskip

{\bf Private Orthogonal Range Counting Queries.}
An orthogonal range counting query finds the number of tuples in a database that satisfy range predicates on up to $d$ designated attributes.
Under widely held assumptions, even the decision problem of determining whether or not a range is empty requires near-linear query time when $d=\BL{\log n}$ and updates must be polynomial time~\cite{Afshani2022HierarchicalCategories,Chan2017ModerateDimensions}.
We thus focus our discussion on the more feasible case where $d$ is constant.

 \citet*{Xiao2011Wavelet} and \citet*{Chan2011ReleaseStatistics} independently proposed similar solutions with error polylogarithmic in the number of queries in the \emph{static case} where updates are not allowed.
The size of both data structures is $\BU{B^d}$ which may in general be much larger than the size $n$ of the dataset.
Improved space and error bounds for the ``sparse data'' setting were obtained by \citet*{PureDPRecQueries}.
When $B\gg n$ they use a reduction that transforms points in $[B]^d$ to points in $[n]^d$ at the cost of additive error $\BU{\log(B)/\varepsilon}$, where $\varepsilon$ is the privacy parameter.
Applied to this smaller problem, the techniques of \citep{Xiao2011Wavelet,Chan2011ReleaseStatistics} have query time that grows only logarithmically with $B$ and use space $n^d$.
To further reduce space, they propose a pruning technique that removes all small noisy counts in the multi-dimensional tree, reducing space to $\BU{n (\log n)^d}$ at the cost of increasing error.
Past upper bounds were for $\varepsilon$-differential privacy, but in this paper we focus on the more modern \emph{$\rho$-zero-concentrated differential privacy} (zCDP) notion which results in better error bounds under composition.
To make these bounds more easily comparable, we set $\rho = \varepsilon^2/2$, the zCDP parameter implied by $\varepsilon$-differential privacy.

In the \emph{dynamic} case we wish to provide privacy in the continual observation setting: Start with the empty point set and perform $n$ updates (insertion or deletion of a point) with arbitrarily many queries asked between updates, with a privacy guarantee for each point update over all query answers.
\Cref{figure:rangecounting-overview-dynamic} gives an overview of data structures for dynamic private orthogonal range counting in the grid $[B]^d$.
\citet{DifferentialPrivacyonFullyDynamicStreams} recently presented a general transformation from static to dynamic data structures under continual observation for a wide class of private data structures.
Applying their technique to the data structure of~\citet{Chan2011ReleaseStatistics} and the low-space data structure of~\citet{PureDPRecQueries} yields the results stated in the first two lines of~\Cref{figure:rangecounting-overview-dynamic}.

Using the discrepancy method introduced by \citet{OptimalPrivateHalfspaceCountingViaDiscrepancy}, lower bounds for private range counting can be derived from lower bounds on the hereditary discrepancy of axis-parallel rectangles.
Building on this framework, \citet*{FactorizationNormsAndHereditaryDiscrepancy} showed that rectangle workloads over $n$ points in $\mathbb{R}^d$ and over the grid $[B]^d$ have hereditary discrepancy $\BL{(\log n)^{d-1}}$ and $\Theta((\log B)^{d-1})$, respectively.
Via the discrepancy–privacy reduction, this implies that any $(\varepsilon,\delta)$-DP mechanism with sufficiently small $\delta$ must incur worst-case error at least $\BL{(\log n)^{d-1}/\varepsilon}$ (for some large enough $B$) or $\BL{(\log B)^{d-1}/\varepsilon}$ (for some large enough $n$).
Thus, the error upper bounds in \Cref{figure:rangecounting-overview-dynamic} are close to optimal up to polylogarithmic factors.

\begin{figure}[t]
\makebox[\linewidth]{
    \begin{tabular}{l|c|c|c}
         {\bf Reference} & {\bf Privacy} & {\bf Maximum error} & {\bf Space}\\ \hline\hline
        \citep{Xiao2011Wavelet} + \citep{DifferentialPrivacyonFullyDynamicStreams} &  $\varepsilon$-DP & $(\log B)^{1.5d+1}(\log n)^{3+\eta}/\varepsilon$ & $B^{d} \log(n)$ \\
    \hline
          \citep{PureDPRecQueries} + \citep{DifferentialPrivacyonFullyDynamicStreams} &  $\varepsilon$-DP & $(\log(B)+ (\log n)^{1.5d+2})(\log n)^{3+\eta}/\varepsilon$ & $n (\log n)^{d+1}$ \\
   \hline
          \multirow{2}{*}{\citep{OptimalPrivateHalfspaceCountingViaDiscrepancy} + \citep{FactorizationNormsAndHereditaryDiscrepancy}
        } & \multirow{2}{*}{$(\varepsilon,\tfrac{1}{100})$-DP} & \cellcolor{red!25} $\BL{(\log n)^{d-1}/\varepsilon}$ & \multirow{2}{*}{any}\\
\cline{3-3}
         &  & \cellcolor{red!25}$\BL{(\log B)^{d-1}/\varepsilon}$ & \\
   \hline
          {\bf This paper} &  $\varepsilon^2/2$-zCDP & $E\geq (\log B)^{d+2}\log(n)/\varepsilon$ & $n (\log B)^{d+1} \log(n)/E$ \\
          \hline
    \end{tabular}}
    \caption{Overview of results on dynamic private orthogonal range counting with $n$ insertions/deletions of points in $[B]^d$. %
    Lower bounds hold for worst-case \emph{static} inputs and are expressed in terms of $n$ or $B$ with no restriction on the other parameter.
    They hold under approximate DP and in particular for $\varepsilon^2/2$-zCDP.
    For simplicity, we assume that both $d$ and the privacy parameter $\varepsilon$ are $\BU{1}$, and order-notation is suppressed.
    Maximum error upper bounds hold with probability $1-1/B$.
    All listed static data structures support an orthogonal range counting query in time $\BU{(\log B)^d}$ or $\BU{(\log B)^{d+1}}$.
    After each update, an arbitrary number of queries can be performed with query time $\BU{(\log B)^{d+1}}$.
    }
    \Description{A table comparing privacy guarantees, maximum error, and space bounds for dynamic private orthogonal range counting.}
    \label{figure:rangecounting-overview-dynamic}
\end{figure}

\medskip

{\bf Sketch-based Join Size Estimation.}
Sketch-based join size estimation was first studied in the seminal work of \citet{AGMS99}, who observed that the size of an equi-join can be estimated from linear sketches of the join-attribute frequency vectors.
In this approach, each relation is summarized by a randomized sketch, and the join size is estimated as an inner product of the two sketches.
The linearity of the sketch makes the estimator naturally compatible with dynamic updates, since insertions and deletions only require updating the sketch entries touched by the changed tuple.
This idea has led to a large body of work on faster, smaller, and more accurate join-size sketches, including recent algorithms that improve the space--accuracy tradeoff and support efficient sketch maintenance under turnstile updates~\citep{WDLAS09,Stausholm21,LPT21,HogsgaardKLNS24}.
These works are closest to our non-private baseline: they show that join-size estimates can be maintained quickly under dynamic updates, but do not provide privacy guarantees.
Private join-size estimation has mostly been studied in different privacy models or as part of broader private query-processing systems.
For example, \citet{SketchesBasedJoinSizeEstimationUnderLocalDifferentialPrivacy} study join-size estimation from locally randomized user reports, and PrivateSQL~\citep{kotsogiannis2019privatesql} provides a differentially private SQL engine that can answer relational queries involving join size queries.
These works are complementary to ours: they address private join estimation or private SQL query answering, but not the problem of maintaining a sketch-based join-size estimator under central continual observation with dynamic updates.

\subsection{Definitions}\label{sec:definitions}

An overview of standard privacy-related definitions used in the paper can be found in~\Cref{app: Preliminaries}.

\medskip

{\bf Problem definition and notation.}
For positive integer $c$, we use $[c]$ to denote the set $\lrc{1, \cdots, c}$.
For parameters $p\in\{1,2\}$, $\Delta_p > 0$ and integer $n$, consider a data set $X = \lrc{(t_i, x^{(i)}) : i \in [n]}$, where $t_i \in \Z$, $x^{(i)} \in \R^k$ with $\norm{x^{(i)}}_p \leq \Delta_p$ for all $i$.
Two data sets are $\ell_p$ neighboring if and only if they have symmetric difference of size at most $1$.
In other words, we consider unbounded neighboring datasets with $\ell_p$-bounded updates.
We will almost exclusively consider the more powerful $\ell_2$ updates, but discuss $\ell_1$ updates when comparing to the classical Laplace binary tree mechanism.

We consider the private release of $A^{(t)} = \sum_{i \; : \; t_i \leq t} x^{(i)}$ for $t=1,\dots,T$.
This can be thought of as the continual release of a vector $A$ that is incremented by a vector $x^{(i)}$ at time $t_i$ (with several updates being possible at each time step).
A query at time step $t$ for entry $i \in [k]$ releases $A^{(t)}_i$ plus an additive noise term.
Any number of queries may be performed at each time step $t$, and we seek privacy guarantees that hold without restrictions on queries.
Updates must respect the timing of queries in the sense that once a time step has been queried, updates are only possible \emph{after} that timestamp.
Updates can skip time steps, jumping ahead any number of steps.

\medskip

{\bf CountSketch.}
A CountSketch~\citep{CountSketchCCF} with $t$ repetitions and table size $b$ uses random hash functions $h_1,\dots,h_t : [d] \to [b]$, and random sign functions $s_1,\dots,s_t : [d] \to \{-1,+1\}$.
The CountSketch of $x$, consisting of $t$ ``rows'' $\text{CS}_1(x),\dots,\text{CS}_t(x) \in \R^b$ has entries defined by
\(
\text{CS}_i(x)_j = \sum_{a : h_i(a)=j} s_i(a)x_a
\).
Given a CountSketch \(\text{CS}_i(x)\), an unbiased estimator for $x_a$ is $\text{median}_i(s_i(a)\,\text{CS}_i(x)_{h_i(a)})$.
The basic analysis of the error of this estimator only relies on pairwise independence of hash functions~\citep{CountSketchCCF}, but some more precise analyses assume the hash functions to be fully random~\citep{ImprovedConcentrationBoundsCountSketch}.
In this paper we will mostly assume full randomness, since this aspect is not our main focus, but we believe that all bounds can be realized with hash functions having constant independence.

\medskip

{\bf Private CountSketch.}
A Private CountSketch is a CountSketch where each sketch entry is perturbed with independent Gaussian noise.
That is, it has rows
\(
    \Tilde{\text{CS}}_i(x) = CS_i(x) + \nu_i
\)
where $\nu_i \sim \mathcal{N}\lr{0,\sigma^2 I_b}$.
CountSketch has $\ell_2$ sensitivity $\sqrt{t}$ and therefore Private CountSketch satisfies $\rho$-zero-concentrated differential privacy with $\rho = \frac{t}{2\sigma^2}$.
Its utility was first studied by \cite{DifferentiallyPrivateLinearSketches, PCS}, where the latter gave a precise analysis of the error distribution of single-entry estimates~$\tilde{x}_a$ obtained from Private CountSketch.

\subsection{Technical Overview}\label{sec:technical-overview}

{\bf New Analysis of the Binary Tree Mechanism (\Cref{sec:continual-counting}).}
Our Fast Gaussian Mechanism under Continual Observation (FastGaMe) is based on the binary tree mechanism for continual counting due to~\citet*{Dwork2010ContinalObs}.
Their method assigns independent noise values to the nodes in a complete binary tree and outputs $A^{(t)}$ with noise obtained from summing the noise of all nodes on the path from the root to leaf number $t$.
Unlike~\citep{Dwork2010ContinalObs} we use the binary tree mechanism with \emph{Gaussian} noise rather than Laplace noise, a combination we refer to as $\M_{BT}$.
Some tree-based continual counting mechanisms have an explicit strategy matrix in the factorization (matrix) mechanism framework of~\citet{li2015matrix}, e.g.~\citep{Chan2011ReleaseStatistics}.
However, the strategy matrix of $\M_{BT}$ has not been studied in the literature to the best of our knowledge, so in \Cref{sec:continual-counting} we provide such an analysis.
Surprisingly, this classical approach improves several binary tree-based methods that have been proposed later~\citep{honaker2015efficient, SmoothBinaryMechanismForEfficientPrivateContinualObservation}.

\medskip

{\bf Fast Gaussian Mechanism (\Cref{sec:FastGaMe}).}
We show how to exploit the structure of noise addition in $\M_{BT}$ to efficiently simulate the mechanism when many noise values are skipped and never accessed.
We assume a standard Word RAM supporting constant-time rank/select on $\BU{\log T}$-bit words (see \Cref{app: Preliminaries}).
It is simple to do ``on demand'' sampling of entries of $\M_{BT}$ in logarithmic time by maintaining the noise values on the path to the most recently released leaf.
It is also straightforward to generate the sequence of all noise values in linear time.
Our contribution is a new method for sampling from the same distribution in constant time per value, regardless of how many time steps have passed since the last noise value was released.
The key idea is to maintain a data structure representing the conditional distributions implied by previous noise releases, and show how one can sample from any such conditional distribution in constant time using a combination of word-level parallelism and Brownian bridges.
The latter technique has previously been used in ``lossless multiple release'' settings where several releases at different privacy levels must be correlated to ensure good combined privacy guarantees~\cite{Koufogiannis2017gradual,andersson2025multiple}.

\medskip

{\bf Orthogonal range counting (\Cref{sec:private-ortogonal-range-queries}).} We use a standard dyadic decomposition that conceptually reduces the orthogonal range counting problem to that of representing a collection of sparse vectors containing counts for fixed dyadic ranges.
Instead of relying on classical ``noise and threshold'' techniques \citep{korolova2009releasing,cormode2012differentially,aumuller2022representing} to represent these vectors, we use a collection of Private CountSketches~\citep{PCS,DifferentiallyPrivateLinearSketches}.
This combination, which does not seem to have been explored in the literature, has multiple advantages:
1) The estimate from each sketch is unbiased and independent, so the total error when summing estimates benefits from cancellation of errors,
2) The sketch size can be chosen such that the error from sketching matches the error from privacy noise, significantly reducing space usage, and
3) Since sketches are linear, continual observation techniques including FastGaMe can be applied.
An interesting aspect of our analysis is that it relies on an $\ell_1$ error guarantee rather than the usual $\ell_2$ error guarantee of CountSketch --- we show how this guarantee is a consequence of the residual $\ell_2$ guarantee of~\citep{PCS}.
We stress that the improvement in \Cref{figure:rangecounting-overview-dynamic} comes from algorithmic advances rather than from adapting a relaxed privacy notion, and would persist if all results were cast in the same privacy notion.

\medskip

{\bf Join Size Estimation (\Cref{sec:private-join-size-estimation}).}
Our results on join size estimation in \Cref{sec:private-join-size-estimation} can be seen as a private version of the CountSketch bounds of \citet*{LPT21}.
To generalize this to Private CountSketch, we need to take into account the bilinear error stemming from sketch error multiplied by noise error, as well as the error stemming from multiplying noise terms with each other.
We also extend the analysis of \citep{LPT21} to allow high-probability bounds rather than just variance and moment bounds, even when the number of repetitions is a small constant.
This enables high probability bounds on join size estimates under continual observation.

\section{Binary Tree Mechanism with Gaussian Noise}\label{sec:continual-counting}

We revisit the classical \emph{binary tree mechanism} of \citet*{Dwork2010ContinalObs}. They considered the problem of privately releasing estimates of sums $A^{(t)}$ with an $\ell_1$ neighboring relation ($p=1$) as defined in \Cref{sec:definitions}.
Traditionally this problem was considered in the setting where all timestamps $t_i$ were distinct numbers in $[T]$ (so that the input can be considered an input stream) and for $k=1$, but the privacy guarantees hold even for the more general setting with repeated timestamps and vectors in $\R^k$.
Without loss of generality we assume that $T> 1$ is a power of two.

The binary tree mechanism is defined in terms of a complete, rooted binary tree with $T$ leaves numbered $0,\dots,T-1$.
We associate a noise variable $N_v$ with each node $v$ of the tree, and let $P_v$ denote the set of nodes on the path from the root node to $v$.
The mechanism releases the estimate
\(
\tilde{A}^{(t)} = A^{(t)} + \sum_{v\in P_t} N_v
\).
In the matrix factorization framework of \citep{li2015matrix}, this corresponds to a mechanism with reconstruction matrix $L$ where $L_{t,v} = 1$ if $v\in P_t$ and~$L_{t,v} = 0$ otherwise and unspecified strategy matrix $R$ such that $LR$ is the lower-triangular all-1s counting matrix.
\Cref{fig:binary-tree-factorization-T4} shows such a factorization for $T=4$.

\begin{figure}
\centering
\small
\[
L R =
\begin{pmatrix}
1 & 1 & 0 & 1 & 0 & 0 & 0\\
1 & 1 & 0 & 0 & 1 & 0 & 0\\
1 & 0 & 1 & 0 & 0 & 1 & 0\\
1 & 0 & 1 & 0 & 0 & 0 & 1
\end{pmatrix}
\quad
\begin{pmatrix}
1/2 & 1/2 & 1/2 & 1/2\\
0 & 0 & -1/2 & -1/2\\
1/2 & 1/2 & 0 & 0\\
1/2 & -1/2 & 0 & 0\\
1/2 & 1/2 & 0 & 0\\
0 & 0 & 1/2 & -1/2\\
0 & 0 & 1/2 & 1/2
\end{pmatrix}
\]
\caption{Factorization with reconstruction matrix $L$ and strategy matrix $R$ for the binary tree mechanism with $T=4$. Columns/rows follow the natural breadth-first order, input coordinates are ordered from left to right. %
}
\Description{Two matrices for the binary tree factorization with four leaves. The first matrix has four rows and seven columns and records root-to-leaf path incidence. The second matrix has seven rows and four columns and contains the corresponding signed strategy coefficients.}
\label{fig:binary-tree-factorization-T4}
\end{figure}

\citet{Dwork2010ContinalObs} showed, without presenting a strategy matrix, that if $N_v \sim \text{Lap}((\log_2(T)+1)/\varepsilon)$ the sequence $\tilde{A}^{(1)},\dots,\tilde{A}^{(T)}$ of estimates satisfies $\varepsilon$-differential privacy under the $\ell_1$ neighboring relation.
In turn, this implies that it satisfies $\rho$-zero-concentrated differential privacy with $\rho = \varepsilon^2/2$~\citep{Bun2026ConcentratedDP}.
The variance of each estimate is $\text{Var}[\tilde{A}^{(t)}] = 2(\log_2(T)+1)^3/\varepsilon^2 = (\log_2(T)+1)^3 / \rho$.

In this paper we consider the binary tree mechanism with \emph{Gaussian} noise, that is, $N_v\sim\mathcal{N}(0,\sigma^2)$, under $\rho$-zero-concentrated differential privacy and the $\ell_2$ neighboring relation ($p=2$).
For simplicity we present the argument for scalars ($k=1$) but all arguments remain valid for vectors in $\R^k$.
It is clear that $\text{Var}[\tilde{A}^{(t)}] = (\log_2(T)+1) \sigma^2$, so what remains is to figure out how $\sigma$ should depend on~$\rho$ and $T$.
By describing an explicit strategy matrix (as a linear transformation) we find
that $\sigma^2 \;=\; (\log_2(T)+2)/(8\rho)$ suffices.
It is interesting to compare the resulting maximum variance to that of other classical binary tree-based mechanisms.
For example, the mechanism of \citet{Chan2011ReleaseStatistics} has a less symmetric factorization where the sensitivity of leaves varies between $1$ and $\log T$, meaning that noise must be scaled to the highest sensitivity.
\citet{SmoothBinaryMechanismForEfficientPrivateContinualObservation} showed how a more symmetric method, the \emph{smooth binary mechanism}, achieves smaller error by only using a small subset of the leaves.
Our method matches the asymptotic error of this much more complicated mechanism, and eliminates a lower-order term in the error bound that is significant for small $T$.
We defer the details of the factorization and its analysis to \Cref{app:binary-mechanism-analysis}.

\section{Fast Gaussian Mechanism for Continual Observation}\label{sec:FastGaMe}

For simplicity we consider the case where $k = 1$ such that the maintained vector aggregate satisfies $A^{(t)} \in \R$ at time $t$.
Like in \Cref{sec:continual-counting} we can extend the data structure to arbitrary $k > 1$ by running it independently for each coordinate.

The $i$th update has timestamp $t_i$, where we assume monotonicity, i.e., $t_1 \leq t_2 \leq t_3 \leq \dots$.
Queries are always with respect to the aggregate $A^{(t)}$ at the time $t$ of the latest update.
Recall that once $A^{(t)}$ has been queried, no further updates with timestamp $t$ are allowed.
We next describe how to extend the Gaussian binary tree mechanism with a data structure that allows constant time noise generation.
The idea is to maintain and lazily sample prefix sums along the current ``active root-to-leaf path'':
For a query at time $t \in \lrc{1, \dots, T}$, we define the active path as the path from the root to the leaf labeled $\text{bin}(t-1)$.
When a query is performed at time $t$, the data structure is updated such that it can output noise value $\nu^{(t)}$ having the same joint distribution with previous noise values as in the Gaussian binary tree mechanism.
Our data structure consists of:
\begin{enumerate}
    \item the timestamp $l$ of the last time a query was made to $A$,
    \item a zero-indexed bit vector $w^{(l)} \in \lrc{0,1}^{\log(T)+1}$ where $w^{(l)}_j = 1$ if and only if the prefix sum at level $j$ on the active path has been sampled at time $l$, and
    \item a zero-indexed vector $P^{(l)} \in \R^{\log(T)+1}$ where $P^{(l)}_j$ stores the level-$j$ prefix sum on the active path whenever $w^{(l)}_j = 1$ (these are described as ``valid''; entries with  $w^{(l)}_j = 0$ are ``invalid'' and undefined).
\end{enumerate}

\noindent
Note that $P^{(l)}_{\log(T)}$ exactly matches the noise of the Gaussian binary tree mechanism at step $l$.
For a bit vector $v\in \{0,1\}^*$ and $c\leq |v|$ let $\text{rank}_c(v) = \sum_{i=1}^c v_i$ be the number of 1s in the first $c$ bits of $v$ and let $\text{select}_i(v) = \min \{ c: \text{rank}_c(v)\geq i \}$ be the position of the $i$th $1$ in $v$.
We use $\text{bin}(z)$ to denote the $\log_2(T)$-bit string encoding $z\in\{0,\dots,T-1\}$ in binary.
Based on the above data structure, whose operations are discussed below, we show the following result:
\begin{theorem}%
    For $T>0$ a power of 2, assume that a machine word has size $\BL{\log(T)}$ and rank/select operations on a bit vector of size $\BU{\log(T)}$ are supported in $\BU{1}$ worst-case time.
    For every choice of $\sigma^2 > 0$ there exists a sequence of noise values $\nu^{(1)}, \dots, \nu^{(T)}$ with the following properties:
    \begin{itemize}
        \item For all $t \in [T]$, $\nu^{(t)} \sim \MN{0,(\log(T)+1)\sigma^2}$,
        \item for all $t_1,t_2 \in [T]$, $Cov(\nu^{(t_1)}, \nu^{(t_2)}) = (|m|+1)\sigma^2$ where $m$ is the longest common prefix of $\text{bin}(t_1-1)$ and $\text{bin}(t_2-1)$, and
        \item for every sequence $t_1 \leq t_2 \leq t_3 \leq \dots$, the sequence of values $\nu^{(t_j)}$, $j = 1, 2, 3, \dots$ can be computed in $\BU{1}$ worst-case time per sample using a data structure $\text{DS}$ of $\BU{\log(T)}$ words.
    \end{itemize}
    \label{thm: Constant look-up time}
\end{theorem}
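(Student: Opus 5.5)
Define $\nu^{(t)} = \sum_{v\in P_{t-1}} N_v$ with independent $N_v\sim\MN{0,\sigma^2}$. Here $P_{t-1}$ is the root-to-leaf path to leaf $\text{bin}(t-1)$, which has $\log(T)+1$ nodes.

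The first two properties then follow by direct computation. A sum of $\log(T)+1$ independent $\MN{0,\sigma^2}$ variables has variance $(\log(T)+1)\sigma^2$. The paths to $t_1-1$ and $t_2-1$ share exactly the root plus the $|m|$ nodes indexed by the common prefix $m$, and independence gives the stated covariance $(|m|+1)\sigma^2$. The real content is the third property: we must produce samples with exactly this joint law without materializing the tree.

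The plan is to view the prefix sums along the active path as a discrete Gaussian random walk. Let $P_j$ be the sum of the noise on the first $j+1$ nodes of the path, so $P_0\sim\MN{0,\sigma^2}$ and the increments are i.i.d.\ $\MN{0,\sigma^2}$. The data structure $(l,w^{(l)},P^{(l)})$ stores the sampled levels of this walk. We maintain the invariant that, conditioned on everything released so far, the unsampled levels are distributed as the original walk conditioned only on the valid entries. By the Markov property of the walk, an invalid level $j$ lying strictly between valid levels $a<j<b$ then follows a Brownian bridge:
\[
P_j \mid P_a, P_b \;\sim\; \MN{P_a + \tfrac{j-a}{b-a}(P_b-P_a),\; \sigma^2\tfrac{(j-a)(b-j)}{b-a}}.
\]
An invalid level above the last valid level $a$ (or with no valid level at all) is instead a free walk, $\MN{P_a,(j-a)\sigma^2}$.

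A query at time $t>l$ proceeds as follows. Let $c=|m|$ be the length of the longest common prefix of $\text{bin}(l-1)$ and $\text{bin}(t-1)$; this is the most significant set bit of their XOR. Levels $0,\dots,c$ are shared by the old and new paths, while levels above $c$ are fresh nodes independent of all past releases. Therefore:
\begin{enumerate}
\item Keep the entries of $w$ at levels $\le c$ and clear all higher bits with a mask.
\item If level $c$ is invalid, sample it in $\BU{1}$ time. Use $\text{rank}$/$\text{select}$ to find the nearest valid levels $a<c<b$ in the old $w^{(l)}$, sample $P_c$ from the bridge formula, and set its bit. If $a$ does not exist, use the bridge from a virtual $P_{-1}=0$. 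Only the old bit vector is consulted, before truncation.
\item Sample the leaf directly as $P_{\log T}=P_c+\MN{0,(\log(T)-c)\sigma^2}$ and mark only levels $c$ and $\log T$ valid above $c$.
\item Output $\nu^{(t)}=P_{\log T}$. If $t=l$, return the stored value.
\end{enumerate}
Every step is a constant number of word operations, rank/select calls and Gaussian draws, and the storage is $\BU{\log T}$ words.

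For correctness, argue by induction that the invariant holds and that the joint law of the outputs equals that of the tree definition. At each query, the released leaf depends on the past only through the walk at level $c$, and $P_c$ depends on the past only through its two neighbouring valid levels on the old path. This holds because the old path's levels beyond $c$ involve nodes off the new path, and those are conditionally independent given $P_c$. Sampling $P_c$ from its exact conditional law and then the fresh increment yields exactly the correct conditional law of the new leaf.

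The main obstacle is this conditional-independence bookkeeping. One must check that discarding old valid levels above $c$ loses no information relevant to future queries. Future paths diverge from the current one at a level $\ge c$, or at a lower level $c'<c$, in which case only levels $\le c'$ matter and bridging across the retained valid levels suffices. Once that is checked, the result is a chain of Gaussian Markov updates that matches the covariance structure computed above.
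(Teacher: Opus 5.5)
Your proposal is correct and follows essentially the same route as the paper: you treat the prefix sums on the active path as a Gaussian walk, draw the shared level from a Brownian bridge between the nearest valid levels found by rank/select, add a fresh $\MN{0,(\log(T)-c)\sigma^2}$ increment for the new branch, and invalidate the deeper old levels with a mask. The differences are cosmetic: you use a virtual $P_{-1}=0$ where the paper samples level $0$ at the first query as a sentinel, and your conditional-law invariant is actually better suited to justifying the bridge step than the paper's joint-distribution invariant. Two points to tighten are the first query, which you leave implicit, and the claim that the new branch is fresh. For the latter, cite monotonicity of query times, as the paper does: the new path enters the right subtree of the common-prefix node, and no earlier queried leaf lies there.
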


We can assume without loss of generality that $t_1 < t_2 < t_3 < \dots$ since repeated queries for the same noise value are answered by returning $P^{(l)}_{\log(T)}$.
It is easy to see that the noise vector $\nu = (\nu^{(1)}, \dots, \nu^{(T)})$ has the same covariance matrix as that of the Gaussian binary tree mechanism, which has covariance between two noise values determined by the number of shared nodes on the paths to the corresponding leaves.
Since a Gaussian distribution is uniquely determined by its mean (in this case zero) and covariance matrix, the noise vector $\nu$ has the same distribution as in the Gaussian binary tree mechanism.
As argued in \Cref{sec:continual-counting} setting $\sigma^2 \;=\; (\log_2(T)+2)/(8\rho)$ ensures $\rho$-zCDP under updates with $\ell_2$ sensitivity~1, and in general $\sigma^2$ should be scaled according to the squared $\ell_2$ sensitivity.

\medskip

{\bf Data structure and update rule.}
Let \(h=\log_2(T)\), and index the levels of the binary tree by $j \in \lrc{0,1,\ldots,h}$, where level $0$ is the root and level $h$ consists of the leaves.
We use $b(t)=\operatorname{bin}(t-1)$ as the label of the leaf $t$, and let $(b(t)_{\le j}$ denote the prefix of length $j$, corresponding to the ancestor on the $j$th level.
In the Gaussian binary tree mechanism we can use these prefixes to index node noise such that the noise released at time $t$ is $\nu^{(t)} = \sum_{j=0}^{h} \eta_{b(t)_{\le j}}$,
where the variables $\eta_s\sim \mathcal{N}(0,\sigma^2)$ are independent over all $s$.
Equivalently, if
\[
        S_t(j)=\sum_{r=0}^{j}\eta_{b(t)_{\le r}}, \qquad 0\le j\le h,
\]
then the released noise is $\nu^{(t)}=S_t(h)$.
The data structure will store only partial sums along the currently active root-to-leaf path.
After the most recent query time $l$, the state is $ DS=\langle l,w^{(l)},P^{(l)}\rangle$.
If $w^{(l)}_j=1$, then $P^{(l)}_j$ stores the sampled value of $S_l(j)$ on the active path to $b(l)$, and if $w^{(l)}_j=0$, the corresponding entry of $P^{(l)}$ is undefined (and ignored).
The bit vector fits in one machine word, and the vector $P^{(l)}$ uses $\log(T)+1$ words.

We now describe how to answer a query at time $t \ge l$. %
The case $t=l$ was discussed above.
If this is the first query (i.e., $l$ is not set), we sample
\begin{align*}
        P^{(t)}_0\sim \mathcal{N}(0,\sigma^2),
        \qquad
        P^{(t)}_h=P^{(t)}_0+Z,
        \qquad
        Z\sim \mathcal{N}(0,h\sigma^2),
\end{align*}
with $Z$ independent of $P^{(t)}_0$.
Next, we set $w^{(t)}_0=w^{(t)}_h=1$, mark all other entries invalid, store the two sampled values in $P^{(t)}$, and return $P^{(t)}_h$.
(An initial cost of $\BU{\log T}$ for creating $P^{(t)}$ can be avoided by lazy initialization.)
This is exactly the joint distribution of $(S_t(0),S_t(h))$.

It remains to handle the case $t>l$, where $l$ is set.
We create $P^{(t)}$ by updating $P^{(l)}$.
Let $u$ be the longest common prefix of $b(l)$ and $b(t)$, and write $m=|u|$.
The old and new active paths agree up to level $m$ and are disjoint below level $m$.
Thus, the only value from the old path that is needed for the new path is the shared prefix sum $S_l(m)=S_t(m)$.
If $w^{(l)}_m=1$, this shared value is already stored as $P^{(l)}_m$.
Otherwise, let
\begin{align*}
        a = \max\{j < m : w^{(l)}_j=1\},
        \qquad
        c = \min\{j > m : w^{(l)}_j=1\}.
\end{align*}
The sentinel values $w^{(l)}_0=w^{(l)}_h=1$ ensure that such indices exist whenever $w^{(l)}_m=0$.
By the Word RAM assumption, $a$ and $c$ are found in $\BU{1}$ worst-case time using rank/select operations on~$w^{(l)}$.
Conditioned on the already sampled values $P^{(l)}_a=S_l(a)$ and $P^{(l)}_c=S_l(c)$, the missing value $S_l(m)$ is a Gaussian
Brownian-bridge point (similar to the application in~\citep{andersson2025multiple}):
\begin{align*}
        S_l(m) \sim \mathcal{N} \lr{P^{(l)}_a + \tfrac{m-a}{c-a} \lr{P^{(l)}_c - P^{(l)}_a}, \tfrac{(m-a)(c-m)}{c-a}\sigma^2}
\end{align*}
We sample from this conditional distribution, store the value in $P^{(l)}_m$, and set $w^{(l)}_m=1$.
This is the exact conditional distribution of the missing prefix sum in the Gaussian tree.

Below level $m$, the new path enters a subtree that has not appeared on any previous active path.
Because query times are non-decreasing, at the first bit where $b(l)$ and $b(t)$ differ, the old active path goes to the left subtree and the new active path goes to the right subtree (i.e. all earlier queried leaves lie in a left subtree).
Consequently, conditioned on the shared prefix sum $S_t(m)$, the remaining contribution on the new path is independent of the past.
We therefore sample
\begin{align*}
        P^{(t)}_h = P^{(l)}_m + Z',
        \qquad
        Z' \sim \mathcal{N}(0, (h-m) \sigma^2),
\end{align*}
with $Z'$ independent of all previously sampled values, and release $\nu^{(t)} = P^{(t)}_h$.

Lastly, we update the state.
All valid prefix sums at levels $j \le m$ remain valid, since they lie on the shared part of the old and new active paths.
All valid levels $j>m$ from the old path are invalidated, except for level $h$, as it is set to the newly sampled value $P^{(t)}_h$.
This update of $w$ requires only a constant number of word operations, and the procedure samples only a constant number of Gaussian variables.
Thus, each queried noise value is generated in $\BU{1}$ worst-case time using $\BU{\log T}$ words.

\medskip

{\bf Example.}
\Cref{fig: Example of two queries} illustrates an update.
The old active path goes to $b(l)=10010$, and the new active path goes to $b(t)=10111$.
Their longest common prefix is $u=10$, so $m=2$.
The value at level $2$ has not yet been stored; it is sampled by conditioning on the nearest stored prefix sums above and below it, namely $p_0$ and $p_3$.
The final value $p'_5$ is then obtained by adding an independent Gaussian increment for the new branch below $u$.
The resulting state stores exactly the valid prefix sums on the new active path.
Remaining details in the proof of \Cref{thm: Constant look-up time} can be found in \Cref{app:fastgame}.

\begin{figure}[t]
\centering

\begin{subfigure}{0.45\textwidth}
    \centering
    \resizebox{.75\textwidth}{!}{\begin{tikzpicture}[
    roundnode/.style={circle, draw, minimum size=7mm, inner sep=1pt},
]

\node[red!50!black, roundnode] (r) {$p_0$};

\node[below right=1 of r, red!50!black, roundnode] (1) {};
\draw[red!50!black] (r) -- (1);
\node[above=0.00001 of 1, red!50!black, font=\scriptsize] {$1$};

\node[below left=1 of 1, red!50!black, roundnode, text=green!60!black] (10) {$p'_2$};
\draw[red!50!black] (1) -- (10);
\node[above=0.00001 of 10, red!50!black, font=\scriptsize] {$u = 10$};

\node[below left=1 of 10, red!50!black, roundnode] (100) {$p_3$};
\draw[red!50!black] (10) -- (100);
\node[above=0.00001 of 100, red!50!black, font=\scriptsize] {$100$};

\node[below right=1 of 100, red!50!black, roundnode] (1001) {};
\draw[red!50!black] (100) -- (1001);
\node[above=0.00001 of 1001, red!50!black, font=\scriptsize] {$1001$};

\node[below left=1 of 1001, red!50!black, roundnode] (10010) {$p_5$};
\draw[red!50!black] (1001) -- (10010);
\node[below=0.00001 of 10010, red!50!black, font=\scriptsize] {$\text{bin}(l-1) = 10010$};

\node[below right=1 of 10, green!60!black, roundnode] (101) {};
\draw[green!60!black] (10) -- (101); 
\node[above=0.00001 of 101, green!60!black, font=\scriptsize] {$101$};

\node[below right=1 of 101, green!60!black, roundnode] (1011) {};
\draw[green!60!black] (101) -- (1011); 
\node[above=0.00001 of 1011, green!60!black, font=\scriptsize] {$1011$};

\node[below right=1 of 1011, green!60!black, roundnode] (10111) {$p'_5$};
\draw[green!60!black] (1011) -- (10111); 
\node[below=0.00001 of 10111, green!60!black, font=\scriptsize] {$\text{bin}(t-1) = 10111$};

\end{tikzpicture}}
    \caption{The active path at time $l_i$, where entry $i$ was queried last, and the new active path at time $t$. The green part shows where the paths are independent. The longest common prefix of $\text{bin}(l-1)$ and $\text{bin}(t-1)$ is $u=10$. }
\end{subfigure}
\hfill
\begin{subfigure}{0.45\textwidth}
    \centering
    \resizebox{.9\textwidth}{!}{\begin{tikzpicture}

\node at (0,0) {\textcolor{red!50!black}{$P^{(l)}$}};
\foreach \i/\val in {1/p_0,2/,3/,4/p_{3},5/,6/p_5}
{
    \node[draw=red!50!black, minimum width=1cm, minimum height=0.7cm] 
    at (0,-\i*0.8) {\textcolor{red!50!black}{$\val$}};
}

\node at (2,0) {\textcolor{red!50!black}{$w^{(l)}$}};
\foreach \i/\val in {1/1,2/0,3/0,4/1,5/0,6/1}
{
    \node[draw=red!50!black, minimum width=1cm, minimum height=0.7cm] 
    at (2,-\i*0.8) {\textcolor{red!50!black}{$\val$}};
}

\node at (5,0) {\textcolor{green!60!black}{$P^{(t)}$}};
\foreach \i/\val in {1/p_0,2/,3/p'_2,4/p_3,5/,6/p'_5}
{
    \node[draw=green!60!black, minimum width=1cm, minimum height=0.7cm] 
    at (5,-\i*0.8) {\textcolor{green!60!black}{$\val$}};
}

\node at (7,0) {\textcolor{green!60!black}{$w^{(t)}$}};
\foreach \i/\val in {1/1,2/0,3/1,4/0,5/0,6/1}
{
    \node[draw=green!60!black, minimum width=1cm, minimum height=0.7cm] 
    at (7,-\i*0.8) {\textcolor{green!60!black}{$\val$}};
}

\end{tikzpicture}}
    \caption{Left, the structure of $w^{(l)}$ and $P^{(l)}$, which corresponds to the red path in panel (a). Right, the structure of $w^{(t)}$ and $P^{(t)}$ which is the structure of $DS$ after updating from time $l$ to time $t$. The latter corresponds to the green path in panel~(a). The $w$ vector shows which values correspond to prefix sums in the current active path.}
\end{subfigure}

\caption{An example of the structure saved for each index $i \in [k]$ where $\log(T)+1 = 6$.}
\Description{Two panels showing the old and new active paths in a binary tree and the corresponding stored prefix-sum vectors.}
\label{fig: Example of two queries}
\end{figure}

\section{Private Orthogonal Range Queries}
\label{sec:private-ortogonal-range-queries}
Orthogonal range counting is a fundamental problem in databases, with applications in query optimization (for cardinality estimation), OLAP analytics, indexing and spatial search, and approximate query processing.
Its private analogue, private orthogonal range counting, is thus a potentially central primitive in private data analysis and database systems, but private range counting data structures in the literature (with polylogarithmic error) have been rather theoretical and we are not aware of any implementations.

We consider a point multiset in the grid $[B]^d$, where $d$ is considered a constant.
An orthogonal range query is specified by an axis-aligned rectangle $R \subseteq [B]^d$ and asks for the number of points contained in $R$.
Equivalently, the query is defined by $d$ range predicates and returns the number of points satisfying all range predicates.
The problem can be considered in both the static setting where all $n$ points are fixed in advance and the dynamic setting where points can be inserted and deleted.
Changing the viewpoint, we can consider the point set as a vector indexed by positions in $[B]^d$, where each entry contains the number of points in that position.
An update either increments or decrements the vector at a single position.
We do not require the data structure to enforce that deletions correspond to earlier insertions, so the vector may take negative values.
Given a sequence of 1-sparse update vectors $x^{(1)}, x^{(2)}, \dots$, the data structure should answer orthogonal range queries at time $t \leq T$ on the vector $\sum_{i \leq t}x^{(i)}$ while ensuring privacy of point updates.

\subsection{Private Orthogonal Range Queries using Private CountSketches} \label{sec:private-ortogonal-range-queries-using-CS}

{\bf Static data structure.}
We can use a standard dyadic decomposition to construct a collection of Private CountSketches that can be used to answer $d$-dimensional range queries in the static setting.
In each dimension, the interval $[B]$ is decomposed into $\log_2(B)$ dyadic components, each a partition of $[B]$ into intervals of $2^k$ points, for $k \in \{0,\dots,\log_2(B)-1\}$.
Combining these across dimensions results in the $d$-dimensional grid being decomposed into $(\log_2 B)^d$ dyadic components, each a partition into rectangles of size $2^{k_1}\times 2^{k_2}\times\dots\times 2^{k_d}$.

For each component $D$ we consider the vector $y^{(D)}$ that records the number of points in each rectangle.
Each point belongs to exactly one rectangle in each dyadic decomposition, and thus each such vector has sensitivity~1.
For each component we instantiate an independent copy of Private CountSketch (see \Cref{sec:definitions}) with independent noise and hash functions.
We will refer to all these copies as the $\widetilde{CS}$ collection.
Let $n$ denote an upper bound on the number of points so that $\norm{y^{(D)}}_1 \leq n$ for all $D$.
Let $\M$ be the family of $(\log_2 B)^d$ independent instances of $\widetilde{CS}$.

\medskip

{\bf Answering queries.}
It is not hard to show that every axis-aligned rectangle in $[B]^d$ can be represented as a disjoint union of at most $(2\log_2 B)^d$ rectangles from these dyadic components, at most $2^d$ from each.
To answer a query, $\M$ queries the corresponding $\widetilde{CS}$ instances and sums their estimates.
The output is therefore the sum of $(2\log_2 B)^d$ CountSketch estimates.
Each queried $\widetilde{CS}$ entry is an unbiased estimate of the corresponding dyadic count, since the Private CountSketch is unbiased.
Therefore the sum is an unbiased estimate of the range count.

\medskip

{\bf Space guarantees.}
To bound the space usage of Private CountSketch we invoke a fact from the compressive sensing literature (reformulated to fit our notation):
\begin{lemma}[\cite{MathematicalIntroductionToCompressiveSensing}, Theorem 2.5]
    Let $b \in \Z_+$, $y\in \R^d$ where $d\geq b$, and let $\text{tail}_b(y)$ be the vector obtained by removing from $y$ the $b$ largest absolute-value coordinates.
    Then $\norm{\text{tail}_b(y)}_2 \leq \norm{y}_1/(2\sqrt{b})$. \label{Lemma: Lower bound on tail}
\end{lemma}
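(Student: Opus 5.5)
The plan is to prove the bound with the classical rearrangement argument, in the form used by \citet{MathematicalIntroductionToCompressiveSensing}. First I would reduce to a clean setting. Sort the absolute values of the coordinates of $y$ in non-increasing order, $y^*_1 \ge y^*_2 \ge \dots \ge y^*_d \ge 0$. Then $\norm{\text{tail}_b(y)}_2^2 = \sum_{i>b} (y^*_i)^2$ and $\norm{y}_1=\sum_i y^*_i$. By homogeneity we may assume $\norm{y}_1 = 1$, and the target inequality becomes $\sum_{i>b}(y^*_i)^2 \le 1/(4b)$.

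The key step combines a pointwise bound with a mass bound. Let $s=\sum_{i\le b} y^*_i$ be the mass on the top $b$ coordinates. By monotonicity, every tail coordinate satisfies $y^*_i \le y^*_b \le s/b$ for $i>b$. This gives
\[
\sum_{i>b}(y^*_i)^2 \le \max_{i>b} y^*_i \cdot \sum_{i>b} y^*_i \le \frac{s}{b}\,(1-s).
\]
Since $s(1-s)\le 1/4$ for all $s$, the right-hand side is at most $1/(4b)$. Taking square roots and undoing the normalization yields $\norm{\text{tail}_b(y)}_2 \le \norm{y}_1/(2\sqrt b)$.

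Two edge cases need a remark. The hypothesis $d\ge b$ only ensures that the tail is well defined; if $d=b$, the tail is empty and the bound is trivial. If $y=0$, everything vanishes.

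The only real idea is the trade-off step. One must not bound the tail mass by $\norm{y}_1$ and the maximum by $\norm{y}_1/b$ separately, because that only gives $\norm{y}_1/\sqrt b$. Coupling the two through the parameter $s$ is what produces the factor $1/2$. I expect no other obstacle, since the remaining steps are routine.
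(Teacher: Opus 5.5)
Your proof is correct, but it takes a different route from the one behind the citation. The paper does not prove this lemma itself; it cites Foucart--Rauhut, Theorem~2.5, whose proof is a convexity argument. With $\alpha_j=y^*_j/\norm{y}_1$, the tail sum $\sum_{j>b}\alpha_j^2$ is a convex function to be maximized over the polytope $\{\alpha_1\ge\dots\ge\alpha_d\ge 0,\ \sum_j\alpha_j\le 1\}$. Its maximum is attained at a vertex, i.e.\ at a flat vector $(1/k,\dots,1/k,0,\dots,0)$, so the problem reduces to bounding $\max_{k>b}(k-b)/k^2$. That maximum is at most $1/(4b)$, with the optimum at $k=2b$.

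You instead use the elementary ``largest tail entry times tail mass'' estimate, which alone gives only constant $1$. You sharpen it by keeping the head mass $s$ as a free parameter, so no convexity or vertex enumeration is needed. Nothing is lost by this. The same coupling with $r=q/p>1$ and $\alpha_j=(y^*_j)^p$ gives $\sum_{j>b}\alpha_j^r\le (s/b)^{r-1}(1-s)$. Maximizing $s^{r-1}(1-s)$ at $s=1-1/r$ reproduces exactly the sharp constant $c_{p,q}$ of the general $\ell_p$-to-$\ell_q$ theorem.

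The polytope argument makes the extremal vector explicit: a flat vector with $2b$ equal nonzero entries, showing that the factor $1/2$ cannot be improved when $d\ge 2b$. Your equality conditions ($s=1/2$, all nonzero entries equal to $s/b$) single out the same vector. The only thing to fix is your attribution: the argument you give is not the one used for Theorem~2.5 in that book.
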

\vspace{-4mm}
We choose the sketch size to balance the noise arising from privacy and the sketching error for a single instance of $\widetilde{CS}$ as follows:
\begin{lemma}
    Let $y$ be an input vector, let $\sigma^2>0$ be a variance parameter, and let $E\geq \sigma$ be a target error.
    For any $\beta\in(0,1)$, there is a Private CountSketch with noise parameter $\sigma^2$, row size $b=\BU{1+\norm{y}_1/E}$, and $t=\T{\log(1/\beta)}$ repetitions such that every fixed coordinate estimate has error $\BU{E}$ with probability at least $1-\beta$.
    The sketch uses $\BU{t(1+\norm{y}_1/E)}$ words and satisfies $\rho$-zero-concentrated differential privacy with $\rho = t/(2\sigma^2)$.
    \label{Lemma: Space of one instance of PCS}
\end{lemma}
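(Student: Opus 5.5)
We will prove the lemma by analyzing a single row of the Private CountSketch and then amplifying with the median over $t$ rows. Fix a coordinate $a$ and a row $i$. The row estimate is $s_i(a)\widetilde{\text{CS}}_i(x)_{h_i(a)} = y_a + \sum_{a'\neq a,\, h_i(a')=h_i(a)} s_i(a)s_i(a')y_{a'} + s_i(a)\nu_{i,h_i(a)}$. The error therefore splits into a sketching error $Z_i$, from colliding coordinates, and a noise term $G_i\sim\mathcal N(0,\sigma^2)$. The two terms are independent, and errors in different rows are independent.

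\textbf{Sketching error.} We use the standard heavy/tail split. Choose $b = \lceil C(1+\norm{y}_1/E)\rceil$ for a suitable constant $C$, and let $H$ be the set of the $b'=\lfloor b/C'\rfloor$ largest-magnitude coordinates of $y$. Then:
\begin{itemize}
\item With probability at least $1-b'/b \ge 1-1/C'$, no element of $H\setminus\{a\}$ hashes to $h_i(a)$.
\item Conditioned on that event, the remaining collision term has mean zero (by the random signs) and variance at most $\norm{\text{tail}_{b'}(y)}_2^2/b$.
\end{itemize}
By \Cref{Lemma: Lower bound on tail}, $\norm{\text{tail}_{b'}(y)}_2^2 \le \norm{y}_1^2/(4b')$, so the variance is at most $\norm{y}_1^2/(4b'b) = \BU{E^2/C^2}$ (using $b, b' = \Omega(\norm{y}_1/E)$). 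Chebyshev then gives $|Z_i|\le E$ with probability at least, say, $0.9$ for a large enough constant $C$. If $\norm{y}_1<E$ the bound is trivial, since $|Z_i|\le\norm{y}_1$; this is why the $1+$ appears in $b$. Only pairwise independence of $h_i$ and $s_i$ is needed for this step.

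\textbf{Noise term and amplification.} Since $E\ge\sigma$, we have $\Pr[|G_i|> 3E]\le\Pr[|G_i|>3\sigma]<0.01$. Hence each row satisfies $|\text{error}_i|\le 4E$ with probability at least $0.89>1/2$, independently across rows. The median exceeds $4E$ in absolute value only if at least half the rows fail. A Chernoff bound shows this happens with probability $e^{-\Omega(t)}\le\beta$ when $t=\T{\log(1/\beta)}$.

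\textbf{Space and privacy.} The sketch stores $t$ rows of $b$ words each, which is $\BU{t(1+\norm{y}_1/E)}$ words; the hash functions take $\BU{t}$ words with constant independence. For privacy, a single update changes one entry per row by $\pm1$, so the $\ell_2$ sensitivity is $\sqrt t$. The Gaussian mechanism then yields $\rho=t/(2\sigma^2)$, as stated in \Cref{sec:definitions}.

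The main obstacle is the sketching-error step: choosing the heavy set $H$ so that \Cref{Lemma: Lower bound on tail} converts the $\ell_1$ norm into an $\ell_2$ tail bound of order $E^2$ per bucket, while $b'$ stays a constant fraction of $b$. Once that balancing is done, the noise bound and the median amplification are routine.
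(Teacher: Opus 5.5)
Your proof is correct, but it takes a different route from the paper. The paper never analyzes individual rows. It cites Theorem~3.3 of Pagh--Thorup as a black box: $\Pr[|\tilde x_a - x_a| > \alpha\max\{\varphi,\sigma\}] < 2\exp(-\Omega(\alpha^2 t))$, where $\varphi = \norm{\text{tail}_b(y)}_2/\sqrt{b}$. It then takes $b=\lceil \norm{y}_1/(2E)\rceil$, so the compressive-sensing tail lemma gives $\varphi \le \norm{y}_1/(2b) \le E$. Using $E\ge\sigma$ and setting $\alpha=1$ and $t=\Theta(\log(1/\beta))$ finishes the proof.

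You instead re-derive the needed per-coordinate guarantee with the classical Charikar--Chen--Farach-Colton argument:
\begin{itemize}
\item a heavy set of size $b/C'$ avoids $a$'s bucket with constant probability;
\item the same tail lemma plus Chebyshev controls the residual collisions;
\item the Gaussian term is absorbed using $E\ge\sigma$;
\item a Chernoff bound on the median amplifies the per-row success probability.
\end{itemize}

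The two routes trade off as follows. The paper's route is a few lines and strips the entire top-$b$ set in the tail, with no extra constant $C'$. The cited theorem is also stronger than what is used here: it gives a sub-Gaussian-type tail in $\alpha$. However, it depends on an external result and on the paper's blanket full-randomness assumption. Your route is self-contained, yields explicit constants (error at most $4E$), and needs only pairwise independence of $h_i$ and $s_i$ (with $h_i$ independent of $s_i$). This supports the paper's remark that its bounds should survive with limited independence.

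One small point: under mere pairwise independence, conditioning on ``no heavy element collides with $a$'' can distort the distribution of the tail hashes. You can either bound the unconditional second moment of the tail collision term and union-bound with the heavy-collision event, or note that the conditioning inflates the second moment by at most a factor $1/(1-1/C')$. Either fix leaves your conclusion unchanged.
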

If $E$ is a constant-factor upper bound on $\norm{y}_1$, the sketch can be replaced by the zero estimator, which has error at most $\norm{y}_1$ and uses constant space.
For $b=\Theta\lr{\norm{y}_1/E}$ the CountSketch tail error is at most the target error $E$; in our applications, $E$ is chosen to be comparable to the privacy noise scale.
The proof of \Cref{Lemma: Space of one instance of PCS} can be found in Appendix~\ref{app: lemma space one PCS}.

\medskip

We can now use the fact that the estimate has at most $2^d$ contributions from each of $(2\log_2 B)^d$ independent instances of $\widetilde{CS}$, which means that we can employ a Bernstein inequality to show concentration.
The Gaussian noise must be calibrated to this $\ell_2$ sensitivity, so the full $\widetilde{CS}$ collection satisfies $\rho$-zCDP by the Gaussian mechanism and composition.
More details can be found in \Cref{app: Static M}.

\subsection{Private Orthogonal Range Queries under Continual Observation} \label{sec:range queries under continual observation}

One key property of Private CountSketch~\cite{DifferentiallyPrivateLinearSketches, PCS} is that it is linear and can thus be \emph{updated} when the vector changes.
A naive release of Private CountSketch does not by itself provide scalable privacy under continual observation.
Instead, by using the results in \Cref{sec:FastGaMe}, we will extend the static data structure to the dynamic setting.
We can model the collection of $\widetilde{CS}$ that constitutes the mechanism described in the previous section as a single vector that supports turnstile updates.
By combining the static structure and the mechanism FastGaMe described in \Cref{sec:FastGaMe}, we get a data structure for answering $d$-dimensional range queries under continual observation with the following guarantees:
\begin{theorem}
      For any choice of positive integers $d \leq B$, $n$, and for any $\beta \in (0,1/\log B), \; \rho > 0$, let
      $E = \BL{d(\log B)^{d+1}\log(n)\log(d/\beta)/\sqrt{\rho}}$
      denote some fixed error.
      Let $\M_{CC}$ be the dynamic range counting mechanism supporting $n$ updates,
      combining $(\log_2 B)^d$ instances of Private CountSketch with the FastGaMe mechanism.
     Then $\M_{CC}$ can be represented as a data structure with the following properties:
    \begin{enumerate}
            \item[$\circ$] It answers $d$-dimensional orthogonal range queries over $[B]^d$ under continual observation with worst-case additive estimation error $\BU{E}$ with probability at least $1 - \beta$.
            \item[$\circ$] The answers to any sequence of queries satisfy $\rho$-zCDP with respect to point updates.
            \item[$\circ$] Its space usage is $\BU{\frac{n(\log B)^d\log(d/\beta)\log(n)}{E} + 1}$ words.
            \item[$\circ$] Each query and update takes $\BU{(\log B)^d\max\lr{\log(d/\beta), d\log(B)}}$ worst-case time.
            \SiSe{Check this again + that it persists}
    \end{enumerate}
    \label{Theorem: zCDP upper bound under continual observation}
\end{theorem}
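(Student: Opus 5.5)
The plan is to view the whole collection of $(\log_2 B)^d$ Private CountSketches, each with $t=\T{\log(d/\beta)}$ rows of width $b$, as one long vector of length $K=(\log_2 B)^d\, t\, b$. This vector is updated by turnstile updates, and each point update changes exactly $(\log_2 B)^d t$ coordinates by $\pm 1$. Its $\ell_2$ sensitivity is therefore $\sqrt{(\log_2 B)^d t}$. Instead of adding independent Gaussian noise to each sketch cell, we add the noise of a Gaussian binary tree mechanism over the $T=n$ update time steps, independently per coordinate, using FastGaMe (\Cref{thm: Constant look-up time}).

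\textbf{Privacy.} By \Cref{sec:continual-counting}, the choice $\sigma^2=(\log_2 n+2)\,(\log_2 B)^d t/(8\rho)$ makes the whole continual release of the sketch vector $\rho$-zCDP. Query answers are post-processing of released sketch cells, so they inherit the guarantee. Since FastGaMe reproduces the exact distribution of $\M_{BT}$, privacy transfers unchanged.

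\textbf{Accuracy.} At any fixed time, each queried cell carries Gaussian noise of variance $(\log_2 n+1)\sigma^2=\Theta\bigl((\log n)^2(\log B)^d t/\rho\bigr)$, and noise is independent across cells. Each point estimate is a median over $t$ rows of (sketch error + noise). Using \Cref{Lemma: Space of one instance of PCS} with $\beta'=\beta/\mathrm{poly}$ and noise scale $\sigma_{\mathrm{eff}}=\sqrt{\log n}\,\sigma$, choose $b=\BU{1+n/E'}$ so that each dyadic estimate has error $\BU{E'}$ with $E'\ge\sigma_{\mathrm{eff}}$. A query sums at most $(2\log_2 B)^d$ independent unbiased estimates. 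As in the static analysis (\Cref{app: Static M}), apply a Bernstein-type bound to these independent, unbiased, sub-exponential-like terms, truncated on the high-probability event. This gives total error $\BU{\sqrt{(\log B)^d\log(1/\beta)}\,E'+\dots}$. A union bound over the relevant queries then matches the stated $E$; a crude bound $E'\cdot(\log B)^d$ already suffices given the slack in $E$.

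\textbf{Main obstacle.} The hard step is getting a worst-case guarantee over arbitrary queries and over all $n$ time steps, not just a fixed coordinate and time. The vector changes over time, but $\norm{y^{(D)}}_1\le n$ holds throughout. The sketching error at each time is controlled via the $\ell_1$-to-tail bound (\Cref{Lemma: Lower bound on tail}), and the noise is Gaussian with known variance. I would union bound over the $n$ update times and over the polynomially many canonical dyadic cells queried. This costs $\log(nB^d/\beta)$ factors, which are absorbed in $d\log B\cdot\log(d/\beta)$ and the extra $\log n$ in $E$. That union-bound bookkeeping is where I expect the most care to be needed.

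\textbf{Space and time.} Space is $(\log B)^d\cdot t\cdot b$ counters, plus FastGaMe's $\BU{\log n}$ words per materialized cell. Only cells that are touched need a FastGaMe structure (via hashing/lazy initialization), so the total is $\BU{n(\log B)^d\log(d/\beta)\log(n)/E+1}$. An update touches $(\log B)^d t$ cells. A query touches $(2\log B)^d t$ cells, each handled in $\BU{1}$ by \Cref{thm: Constant look-up time}, and then computes the medians and dyadic decomposition in $\BU{(\log B)^d\max(\log(d/\beta),d\log B)}$.
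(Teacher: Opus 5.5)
Your core reduction is the paper's own. You treat the $(\log_2 B)^d$ Private CountSketches as one vector of $\ell_2$ sensitivity $\sqrt{(\log_2 B)^d t}$, run FastGaMe per cell with $T=n$ and $\sigma^2=(\log_2 n+2)(\log_2 B)^d t/(8\rho)$, and inherit $\rho$-zCDP by post-processing. You then observe that at any fixed time the queried cells carry independent $\mathcal{N}(0,(\log_2 n+1)\sigma^2)$ noise, so the static analysis of \Cref{Lemma: Bernstein bound} applies verbatim with $\sigma_{\mathrm{eff}}=\Theta(\log(n)(\log B)^{d/2}\sqrt{t/\rho})$ in place of $\sigma$. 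That is essentially the paper's whole argument, and it suffices. The trouble is in two claims you add on top of it.

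\textbf{The crude bound does not suffice.} The remark that $E'\cdot(\log B)^d$ ``already suffices'' is false for $d\ge 3$. With $E'\ge\sigma_{\mathrm{eff}}$ it gives error of order $(\log B)^{3d/2}\log(n)\sqrt{t/\rho}$, while $E$ at its threshold is of order $d(\log B)^{d+1}\log(n)\log(d/\beta)/\sqrt{\rho}$. With $t=\Theta(\log(d/\beta))$ the ratio is about $(\log B)^{d/2-1}/(d\sqrt{\log(d/\beta)})$, which is unbounded for $d\ge 3$ and $\beta$ near $1/\log B$. The slack of $E$ over the Bernstein bound is only about one factor of $\log B$. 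So the $(\log B)^{d/2}$ saved by cancellation of the independent, unbiased per-sketch errors is indispensable, not optional.

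\textbf{The ``main obstacle'' is not part of the statement, and your treatment of it breaks the stated bounds.} As in \Cref{Theorem: Reduced space}, whose error bound comes from \Cref{Lemma: Bernstein bound} for one specific query, the guarantee is for each fixed query at each fixed time. The paper obtains it by noting that at each step the output distribution is that of the static mechanism with inflated noise; it takes no union bound over time. The only union bound needed is the truncation over the at most $(2\log_2 B)^d$ cell estimates of that single query. This costs $\log((2\log_2 B)^d/\beta)=\BU{d\log(d/\beta)}$ thanks to the hypothesis $\beta<1/\log B$.

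A union bound over all $n$ times and all dyadic cells instead needs per-cell failure probability about $\beta/(nB^d)$, i.e.\ $t=\Theta(\log(nB^d/\beta))$ repetitions. Since $t$ enters the space linearly, and enters $\sigma^2$ through the sensitivity $\sqrt{(\log_2 B)^d t}$, this has two consequences:
\begin{itemize}
\item The space grows by a factor $\log(nB^d/\beta)/\log(d/\beta)$, contradicting the claimed $\log(d/\beta)$ dependence.
\item The error picks up a factor of order $\log(n)/(d\log B\log(d/\beta))$ when $n$ is large. $E$ does not absorb this: its single $\log n$ is already consumed by the $\log n$ growth of the FastGaMe noise standard deviation, and $n$ is not tied to $B$.
\end{itemize}
A bound holding simultaneously over all times is a stronger statement that must be paid for in the parameters, as the remark after \Cref{thm:continual-private-join} does for join sizes.
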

\Cref{Theorem: zCDP upper bound under continual observation} follows from the observation that the $\widetilde{CS}$ collection can be treated as a single large vector of counters with $\ell_2$ sensitivity $\sqrt{(\log_2 B)^d t}$, where $t$ is the number of repetitions in Private CountSketch.
Hence, we can combine the static mechanism from \Cref{sec:private-ortogonal-range-queries-using-CS} with FastGaMe from \Cref{sec:FastGaMe} to allow continual release over $T=n$ steps.
To maintain the same privacy parameter, the parameter $\sigma^2$ used should be increased by a factor $\BU{\log n}$ relative to the static case.
Space increases by a multiplicative factor $\BU{\log(n)}$ due to the space requirements of FastGaMe, storing $\BU{\log(n)}$ noise values for each entry in the sketch.
At each step, the output distribution is the same as in the static setting with these adjusted space and noise parameters.

\section{Private Join Size Estimation}
\label{sec:private-join-size-estimation}
\providecommand{\Var}{\mathrm{Var}}
\providecommand{\Prb}{\mathrm{Pr}}
\providecommand{\CS}{\mathrm{CS}}
\providecommand{\med}{\mathrm{median}}
\providecommand{\ip}[2]{\langle #1,#2 \rangle}

Join size estimation is a central primitive in streaming algorithms and database systems.
Given two relations $R$ and $S$ over a common key domain, let $f_R(a)$ and $f_S(a)$ denote the frequency of key $a$ in the two relations and let $x,y \in \R^d$ be the frequency vectors. 
The join size is
\[
J(R,S) = \sum_{a} f_R(a)f_S(a) = \ip{x}{y} \enspace .
\]
The case where $R$ and $S$ are identical is called a \emph{self-join}, and the problem then becomes that of estimating the second frequency moment ($F_2$).

To estimate the inner product $\ip{x}{y}$ from the CountSketches $\CS(x)$ and $\CS(y)$ (defined in~\Cref{sec:definitions}) we use the same hash functions for both sketches~\cite{AGMS99,WDLAS09}.
The standard one-row inner-product estimator is
\(
Z_i = \ip{\CS_i(x)}{\CS_i(y)}
\), and the overall estimator is the median:
\(
\widehat J = \med\{Z_1,\dots,Z_k\}
\).

Since CountSketches can be maintained efficiently under updates, using them is a natural approach for approximating join sizes in dynamic settings.
Recent work has shown guarantees for CountSketch estimators based on taking the median of a small number of rows $k$~\citep{LPT21}.
\begin{theorem}[Larsen--Pagh--Tet{\v e}k~\cite{LPT21}]\label{thm:join-countsketch-median}
Let \(Z=\ip{\CS_1(x)}{\CS_1(y)}\) denote a single-row Count\-Sketch inner-product estimator using fully random hashing, so that
\(\E[Z]=\ip{x}{y}\), and let $\widehat J$ denote the median of \(k\) independent copies of \(Z\), where $k \geq 3$ is odd.
For \(k=3\)
\[
\Var(\widehat J)
\le
\min\left\{
\BU{\frac{\norm{x}_1^2\norm{y}_1^2}{b^2}},
\BU{\frac{\norm{x}_2^2\norm{y}_2^2}{b}}
\right\}.
\]
More generally, for every odd $k \ge 3$ and integer $q \ge 1$:
\[
\E\left[|\widehat J-\ip{x}{y}|^{(k+1)q/2}\right]
\le
\tfrac{2^k}{\sqrt{k}} \cdot
\E\left[|Z-\ip{x}{y}|^q\right]^{(k+1)/2} \enspace .
\]
\end{theorem}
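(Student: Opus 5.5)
We would prove the two parts separately, starting with the general moment bound, since the $k=3$ variance bound essentially follows from it together with single-row moment estimates.

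\textbf{Median moment bound.} Write $W_i = |Z_i - \ip{x}{y}|$ for the $k$ i.i.d.\ copies. The key observation is that $|\widehat J - \ip{x}{y}| \ge s$ forces at least $(k+1)/2$ of the $Z_i$ to lie on the same side of $\ip{x}{y}$ at distance at least $s$. In particular, at least $(k+1)/2$ of the $W_i$ are $\ge s$. Hence, pointwise,
\[
|\widehat J-\ip{x}{y}|^{(k+1)q/2} \le \min_{S:|S|=(k+1)/2}\ \prod_{i\in S} W_i^{q} \le \sum_{S\subseteq[k],\,|S|=(k+1)/2}\ \prod_{i\in S} W_i^{q}.
\]
The first inequality holds because the median error is at most the smallest $W_i$ in some set of $(k+1)/2$ indices. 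Taking expectations and using independence, each product has expectation $\E[W^q]^{(k+1)/2}$. The number of such sets is $\binom{k}{(k+1)/2}\le 2^k/\sqrt{k}$ up to the standard central-binomial estimate, which gives the claimed bound. The only care needed is the binomial constant; here we would use $\binom{k}{\lceil k/2\rceil}\le 2^k/\sqrt{\pi k/2}\le 2^k/\sqrt{k}$.

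\textbf{Variance for $k=3$.} Apply the moment bound with $k=3$, $q=1$. This gives $\E[(\widehat J-\ip{x}{y})^2]\le \tfrac{8}{\sqrt3}\,\E[|Z-\ip{x}{y}|]^2$, and $\Var(\widehat J)$ is at most this mean squared error. It then suffices to bound the first absolute moment of a single row, $\E|Z-\ip{x}{y}|$, by both $\BU{\norm{x}_1\norm{y}_1/b}$ and $\BU{\norm{x}_2\norm{y}_2/\sqrt b}$.

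The $\ell_2$ bound is standard. By Cauchy--Schwarz, $\E|Z-\ip{x}{y}|\le \sqrt{\Var Z}$. Writing $Z-\ip{x}{y}=\sum_{a\ne c} s(a)s(c)x_ay_c\1[h(a)=h(c)]$, pairwise-independent signs give
\[
\Var Z=\tfrac1b\sum_{a\ne c}\bigl(x_a^2y_c^2+x_ay_ax_cy_c\bigr)\le \tfrac{2}{b}\norm{x}_2^2\norm{y}_2^2 .
\]

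\textbf{Main obstacle: the $\ell_1$ bound.} The difficulty is to gain the full factor $1/b$ in the first moment rather than $1/\sqrt b$. We would bound the absolute value crudely by dropping the signs:
\[
\E|Z-\ip{x}{y}|\le \sum_{a\ne c}|x_a||y_c|\Pr[h(a)=h(c)]\le \norm{x}_1\norm{y}_1/b .
\]
This is valid because the triangle inequality removes the sign cancellation, and the collision probability alone supplies the $1/b$ factor. Squaring gives $\BU{\norm{x}_1^2\norm{y}_1^2/b^2}$, and combining the $\ell_1$ and $\ell_2$ bounds yields the stated minimum.

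This ordering avoids any fourth-moment computation. Full randomness is used only to justify the i.i.d.\ structure of the rows and the collision probability $1/b$.
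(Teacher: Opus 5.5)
Your proof is correct and establishes exactly the stated bounds. The paper does not prove this theorem; it imports it from~\cite{LPT21}, so there is no in-paper argument to compare with. Your two single-row estimates, $\mathbb{E}|Z-\langle x,y\rangle|\le \norm{x}_1\norm{y}_1/b$ and $\mathrm{Var}(Z)\le \tfrac{2}{b}\norm{x}_2^2\norm{y}_2^2$, are the same ones the paper itself uses, via Markov and Chebyshev, in \Cref{lem:one-row-cs-concentration}. The subset-counting bound
\[
\mathbb{E}\bigl[|\widehat J-\langle x,y\rangle|^{(k+1)q/2}\bigr]\le \binom{k}{(k+1)/2}\,\mathbb{E}\bigl[|Z-\langle x,y\rangle|^q\bigr]^{(k+1)/2},
\]
specialised to $k=3$, $q=1$, gives the variance claim with no fourth-moment computation. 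Your central-binomial estimate $\binom{k}{(k+1)/2}\le 2^k/\sqrt{k}$ is also right.

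Two small corrections are needed.

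First, the first inequality in your display is false as written. Take $k=3$, $Z_1=\langle x,y\rangle$ and $Z_2=Z_3=\langle x,y\rangle+1$. The median error is $1$, but $W_1W_2=0$, so the minimum over $S$ is $0$. Your verbal justification actually shows that for \emph{some} $S$ of size $(k+1)/2$ one has $|\widehat J-\langle x,y\rangle|\le \min_{i\in S}W_i$. That is the bound with $\max_S$ in place of $\min_S$. Since you immediately pass to the sum over all $S$, nothing else changes.

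Second, your variance computation is attributed to pairwise-independent signs, but that is not enough. It needs the cross terms $\mathbb{E}[s(a)s(c)s(a')s(c')]$ with four distinct indices to vanish, which requires $4$-wise independent signs. Under the theorem's full-randomness assumption this holds automatically.
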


Thus the median operation effectively raises the moment decay to the power $(k+1)/2$.
Combined with one-row moment bounds this yields tail bounds as a function of the sketch width $b$.

\subsection{Join Size Estimation Using Private CountSketch}

\citet{PCS} gave a precise analysis of the error distribution of \emph{single-entry estimates}~$\tilde x_a$ obtained from a Private CountSketch with rows
\(\widetilde{\CS}_i(x)=\CS_i(x)+\nu_i\), where
\(
\nu_i \sim N(0,\sigma^2 I_b)
\), 
but did not extend this analysis to inner products.
We extend the analysis of Private CountSketch to inner product estimates.
Again, it is important to use the same hash functions for the sketches of $x$ and $y$, but we add independent Gaussian noise to all sketch entries, ensuring privacy of both relations.
That is, for each row $i$, let
\(\widetilde{\CS}_i(y)=\CS_i(y)+\xi_i\)
where
\(
\xi_i \sim N(0,\sigma^2 I_b)
\) is chosen independently of the noise vector \(\nu\) used for \(\CS(x)\).
The one-row inner-product estimator is
\(
\widetilde Z_i
=
\ip{\widetilde{\CS}_i(x)}{\widetilde{\CS}_i(y)}
\) and the corresponding median estimator is
\(
\widetilde J = \med\{\widetilde Z_1,\dots,\widetilde Z_k\}
\).
We show the following bound on the accuracy of $\widetilde J$:

\begin{lemma}\label{lem:private-median-regimes}
Let $k\ge 3$ be odd, let $\eta\in(0,1/2)$, and set
\(
C_\eta=\eta^{-2/(k+1)}
\).
Assume that $b\ge 8C_\eta$. Then
\(
\Prb\left[|\widetilde J-\ip{x}{y}|> c\,\Delta_{x,y,\eta,k,b}\right]\le \eta
\)
where $c$ is a universal constant and
\[
\Delta_{x,y,\eta,k,b}
=
\min\left\{
    \sqrt{\frac{C_\eta}{b}}\,\norm{x}_2\norm{y}_2,
    \frac{C_\eta}{b}\,\norm{x}_1\norm{y}_1
\right\}
+ \sigma(\norm{x}_2+\norm{y}_2)\sqrt{\ln(1/\eta)}
+ \sigma^2\sqrt{b\ln(1/\eta)}
\enspace .
\]
\end{lemma}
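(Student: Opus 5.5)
We split each one-row private estimator into a sketching error and two noise errors, bound each with probability tending to fail at most a constant, and then let the median amplify the constant failure probability to $\eta$. Write
\[
\widetilde Z_i-\ip{x}{y} \;=\; \underbrace{(Z_i-\ip{x}{y})}_{E^{(1)}_i} \;+\; \underbrace{\ip{\CS_i(x)}{\xi_i}+\ip{\nu_i}{\CS_i(y)}}_{E^{(2)}_i} \;+\; \underbrace{\ip{\nu_i}{\xi_i}}_{E^{(3)}_i}.
\]
Since the median of $k$ values exceeds a threshold only if at least $(k+1)/2$ of them do, it suffices to bound each term separately. The events for different rows are independent, because the rows use independent hash functions and independent noise.

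\textbf{Noise terms.} We bound the two noise terms by direct union bounds over rows, without using the median.

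For $E^{(2)}_i$: conditioned on the hashing, it is Gaussian with variance $\sigma^2(\norm{\CS_i(x)}_2^2+\norm{\CS_i(y)}_2^2)$. We need $\norm{\CS_i(x)}_2=\BU{\norm{x}_2}$ with good probability. We have $\E\norm{\CS_i(x)}_2^2=\norm{x}_2^2$, so Markov gives only constant probability, which is too weak. Instead, we bound $\norm{\CS_i(x)}_2^2$ by $\norm{x}_2^2$ plus the self-join error term, which is already controlled by the sketching-error argument below. Alternatively, we fold $E^{(2)}_i$ into the median argument: use a constant-probability bound per row for the norm, and the Gaussian tail $\sigma\norm{x}_2\sqrt{\ln(1/\eta)}$ for the Gaussian part.

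For $E^{(3)}_i=\sum_{j\le b}\nu_{ij}\xi_{ij}$: this is a sum of $b$ independent products of Gaussians, which are sub-exponential. Bernstein's inequality gives $|E^{(3)}_i|\le c\,\sigma^2(\sqrt{b\ln(1/\eta)}+\ln(1/\eta))$ with probability $1-\eta/k$. The assumption $b\ge 8C_\eta$ (together with $k$ constant in the intended regime, or absorbing $\ln k$) is what makes the $\ln(1/\eta)$ term dominated by $\sqrt{b\ln(1/\eta)}$. We should double-check that the hypothesis suffices for this; if not, we keep both terms.

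\textbf{Sketching error.} This is the main obstacle. We apply \Cref{thm:join-countsketch-median} with $q=2$:
\[
\E|\widehat J-\ip{x}{y}|^{k+1}\le \tfrac{2^k}{\sqrt k}\,\Var(Z)^{(k+1)/2}.
\]
Markov's inequality then gives
\[
\Pr\bigl[|\widehat J-\ip{x}{y}|>\lambda\bigr]\le 2^k\bigl(\Var(Z)/\lambda^2\bigr)^{(k+1)/2}.
\]
Setting $\lambda^2=c\,C_\eta\Var(Z)$ makes this at most $\eta$, since $C_\eta^{(k+1)/2}=1/\eta$ and the factor $2^k$ is absorbed by $c^{(k+1)/2}$. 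The standard one-row variance is $\Var(Z)\le 2\norm{x}_2^2\norm{y}_2^2/b$, which yields the first term of the minimum.

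For the $\ell_1$ term, a second-moment bound is not sharp. Instead we would use a higher moment $q$, or a direct argument: the error comes from collisions among the heavy coordinates. Concretely, with probability $1-\BU{1/C_\eta}$ per row, the error is $\BU{C_\eta\norm{x}_1\norm{y}_1/b}$; this follows from a Markov bound on $\sum_{a\neq a'}|x_a||y_{a'}|\1[h(a)=h(a')]$, whose expectation is $\norm{x}_1\norm{y}_1/b$. Per-row failure probability $p=1/(4C_\eta)$, say, combined with the median gives $\binom{k}{(k+1)/2}p^{(k+1)/2}\le 2^k C_\eta^{-(k+1)/2}4^{-(k+1)/2}\le\eta$. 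The requirement $b\ge 8C_\eta$ ensures these per-row constants are meaningful.

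\textbf{Combining.} The cleanest route is to run the median argument jointly. Call row $i$ good if all three of $E^{(1)}_i,E^{(2)}_i,E^{(3)}_i$ are within their respective bounds, with per-row failure probability $p\lesssim 1/(4C_\eta)$ for the hashing parts. The Gaussian parts use the $\sqrt{\ln(1/\eta)}$ tail, giving failure probability $\ll\eta$ directly. If a majority of rows are good, the median lies within $c\,\Delta_{x,y,\eta,k,b}$, and the binomial count above bounds the probability that no majority is good by $\eta$.
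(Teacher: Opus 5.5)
Your closing ``Combining'' paragraph has the right skeleton, and it is the paper's. You call row $i$ good when all three error terms are below their thresholds, make each row fail with probability at most $2\alpha$ where $\alpha=\tfrac12\,2^{-2k/(k+1)}\eta^{2/(k+1)}$, and finish with $\sum_{j\ge (k+1)/2}\binom{k}{j}(2\alpha)^j\le 2^k(2\alpha)^{(k+1)/2}=\eta$. The route you describe first, however, fails and should be dropped. Bounding the noise ``without using the median'' is legitimate only via monotonicity of the median: if $|E^{(2)}_i+E^{(3)}_i|\le M$ in \emph{every} row, then $|\widetilde J-\widehat J|\le M$. So it needs simultaneous control of all $k$ rows, which costs $\sqrt{\ln(k/\eta)}$; that is not $\BU{\sqrt{\ln(1/\eta)}}$ uniformly in $k$. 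For $E^{(2)}_i$ it also needs $\|\mathrm{CS}_i(x)\|_2=\BU{\norm{x}_2}$ in every row. Chebyshev on the self-join estimator gives this at level $\eta/k$ only if $b\gtrsim k/\eta$, far beyond $b\ge 8C_\eta$. Deferring this to ``the sketching-error argument below'' is circular, because a median argument only controls a majority of rows. A Markov-type norm bound at the per-row level $\asymp 1/C_\eta$ gives only $\|\mathrm{CS}_i(x)\|_2=\BU{\sqrt{C_\eta}\,\norm{x}_2}$. For the same reason, the LPT21 moment bound concerns the non-private median $\widehat J$ and cannot be transferred to $\widetilde J$. Inside the joint argument you need per-row Chebyshev/Markov bounds on $E^{(1)}_i$ and on both self-join errors at level $\asymp 1/C_\eta$, as in the paper's one-row lemma, where $b\ge 1/\alpha$ yields $\|\mathrm{CS}_i(x)\|_2\le 2\norm{x}_2$.

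The substantive gap is the calibration of the noise tails, which you explicitly leave open. Your claim that the Gaussian parts fail with probability $\ll\eta$ per row is false for $E^{(3)}_i$. For large $k$ we have $C_\eta\approx 1$, so the hypothesis allows $b\ll\ln(1/\eta)$. Then a single product $\nu_{i1}\xi_{i1}$ already makes $|E^{(3)}_i|$ exceed $c\,\sigma^2\sqrt{b\ln(1/\eta)}$ with probability at least $\exp\bigl(-\BU{\sqrt{b\ln(1/\eta)}}\bigr)\gg\eta$. ``Keeping both terms'' would add $\sigma^2\ln(1/\eta)$, which is not part of $\Delta_{x,y,\eta,k,b}$. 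The missing idea is that each component of the per-row event only needs to fail with probability $\asymp\alpha\asymp\eta^{2/(k+1)}$, not $\eta$:
Bernstein at level $\alpha$ gives $\sigma^2\bigl(\sqrt{b\ln(4/\alpha)}+\ln(4/\alpha)\bigr)$.
The hypothesis yields $b\ge 8C_\eta\ge 1/\alpha$, hence $\ln(4/\alpha)\le 2b$, so the linear term is absorbed into $\sqrt{2b\ln(4/\alpha)}$.
Moreover $\ln(4/\alpha)\le\ln(32C_\eta)=\BU{\ln(1/\eta)}$ since $\eta<1/2$.
The cross term is handled at the same level. This is exactly where $b\ge 8C_\eta$ enters the paper's proof, and with this calibration your sketch becomes that proof.
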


The three terms in the error bound $\Delta_{x,y,\eta,k,b}$ are, respectively, the median-amplified Count\-Sketch error, the sketch--noise cross term, and the noise--noise product term.
The CountSketch term switches from an $\ell_2$ bound to an $\ell_1$ bound when $b \approx C_\eta\norm{x}_1^2\norm{y}_1^2/(\norm{x}_2^2\norm{y}_2^2)$.
After this point the CountSketch contribution decreases as $\BU{1/b}$, while the noise--noise term increases as $\BU{\sqrt{b}}$; these balance at $b \approx (C_\eta\norm{x}_1\norm{y}_1/(\sigma^2\sqrt{\log(1/\eta)}))^{2/3}$.
Depending on the skew of $x$ and $y$ and the noise scale~$\sigma$, some parameter regimes may be empty, and the sketch--noise term may dominate as a baseline privacy error.

To show \Cref{lem:private-median-regimes} we need a bound on the single-row accuracy of (non-private) CountSketch:
\begin{lemma}\label{lem:one-row-cs-concentration}
Fix $x,y\in\R^d$ and let $\eta\in(0,1/2)$. For each pair of CountSketch rows $\CS_i(x)$ and $\CS_i(y)$, the following bounds simultaneously hold with probability at least $1-\eta$:
\begin{equation}\label{eq:one-row-cs-concentration}
\begin{aligned}
\left|\ip{\CS_i(x)}{\CS_i(y)}-\ip{x}{y}\right|
&\le
\min\left\{
\sqrt{\tfrac{8}{b\eta}}\norm{x}_2\norm{y}_2,
\tfrac{4}{b\eta}\norm{x}_1\norm{y}_1
\right\}
\\
\norm{\CS_i(x)}_2^2
\le
\left(1+\sqrt{\tfrac{8}{b\eta}}\right)\norm{x}_2^2
\qquad & \qquad
\norm{\CS_i(y)}_2^2
\le
\left(1+\sqrt{\tfrac{8}{b\eta}}\right)\norm{y}_2^2
\end{aligned}
\enspace .
\end{equation}
\end{lemma}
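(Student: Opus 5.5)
We prove each of the three bounds with probability at least $1-\eta/3$ and finish with a union bound. All three follow from Chebyshev's inequality, using the variance of a single-row CountSketch under fully random hashing and signs. Write $Z=\ip{\CS_i(x)}{\CS_i(y)}$. Expanding,
\[
Z-\ip{x}{y}=\sum_{a\neq a'} s(a)s(a')\1[h(a)=h(a')]\,x_a y_{a'} .
\]
The summands have mean zero. Two summands are correlated only when they involve the same unordered pair $\{a,a'\}$, because the product of signs is otherwise a fresh Rademacher variable. This gives
\[
\Var(Z)=\frac{1}{b}\sum_{a\neq a'}\bigl(x_a^2y_{a'}^2+x_ay_ax_{a'}y_{a'}\bigr)\le \frac{2}{b}\norm{x}_2^2\norm{y}_2^2 ,
\]
where the cross term is handled by Cauchy--Schwarz, $\bigl(\sum_a x_ay_a\bigr)^2\le\norm{x}_2^2\norm{y}_2^2$. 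Chebyshev then gives $|Z-\ip{x}{y}|\le\sqrt{\tfrac{2\cdot 3}{b\eta}}\norm{x}_2\norm{y}_2\le\sqrt{\tfrac{8}{b\eta}}\norm{x}_2\norm{y}_2$ with probability $1-\eta/3$. The constant $8$ leaves room for the union bound over the four events.

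For the $\ell_1$ bound, Chebyshev is too weak, so we use Markov's inequality on the absolute deviation:
\[
\E|Z-\ip{x}{y}|\le\sum_{a\neq a'}|x_a||y_{a'}|\Pr[h(a)=h(a')]\le\frac{\norm{x}_1\norm{y}_1}{b}.
\]
Markov then gives the bound $\tfrac{4}{b\eta}\norm{x}_1\norm{y}_1$ with failure probability $\eta/4$. Both deviation bounds hold on the same event, so the minimum holds there.

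For the norms, apply the same computation with $y=x$: $\norm{\CS_i(x)}_2^2$ is the self inner-product estimator, with mean $\norm{x}_2^2$ and variance at most $\tfrac{2}{b}\norm{x}_2^4$. Chebyshev gives $\norm{\CS_i(x)}_2^2\le(1+\sqrt{8/(b\eta)})\norm{x}_2^2$ with failure probability at most $\eta/4$, and the same argument applies to $y$.

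There are four events, each failing with probability at most $\eta/4$:
\begin{itemize}
\item the $\ell_2$ inner-product deviation, with Chebyshev threshold $\sqrt{8/(b\eta)}$, which matches $\Var\le 2/b\cdot(\cdot)$ exactly since $2\cdot4=8$;
\item the $\ell_1$ deviation;
\item the bound on $\norm{\CS_i(x)}_2^2$;
\item the bound on $\norm{\CS_i(y)}_2^2$.
\end{itemize}
The union bound gives total failure probability at most $\eta$.

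The main obstacle is the exact variance computation. One must correctly account for the pairings $(a,a')$ and $(a',a)$, which produce the correlated term $x_ay_ax_{a'}y_{a'}$, and verify that the constant stays at $2$. This fixes the $8$ in the statement.
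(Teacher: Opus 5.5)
Your proposal is correct and follows essentially the same route as the paper. Both use the variance bound $\mathrm{Var}(Z)\le \frac{2}{b}\norm{x}_2^2\norm{y}_2^2$ with Chebyshev at failure probability $\eta/4$ for the $\ell_2$ deviation and for each squared row norm (taking $y=x$), Markov on $\mathbb{E}\,|Z-\langle x,y\rangle|\le \norm{x}_1\norm{y}_1/b$ for the $\ell_1$ deviation, and a union bound over four events; the only blemish is your opening bookkeeping with three events at $\eta/3$, which conflicts with and is superseded by the four-event $\eta/4$ accounting at the end, where the threshold $\sqrt{8/(b\eta)}$ matches Chebyshev exactly.
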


\begin{proof}
The lemma follows along the lines of the CountSketch analysis in~\cite{CountSketchCCF,LPT21}.
For every $x,y\in\R^d$ we have
\(\E[\ip{\CS_i(x)}{\CS_i(y)}]=\ip{x}{y}\) and
\(\Var(\ip{\CS_i(x)}{\CS_i(y)})\le \tfrac{2}{b}\norm{x}_2^2\norm{y}_2^2\).
Chebyshev's inequality with failure probability $\eta/4$ gives
\[
\Prb\left[
\left|\ip{\CS_i(x)}{\CS_i(y)}-\ip{x}{y}\right|>
\sqrt{\tfrac{8}{b\eta}}\norm{x}_2\norm{y}_2
\right]
\le \eta/4 .
\]
For the $L_1$ bound we expand \(\E\left[\left|\ip{\CS_i(x)}{\CS_i(y)}-\ip{x}{y}\right|\right]\):
\[
\begin{aligned}
\E\left[
\sum_{p\neq q}
\mathbf 1\{h_i(p)=h_i(q)\}s_i(p)s_i(q)x_p y_q
\right]
&\le
\frac{1}{b}\sum_{p\neq q}|x_p||y_q|
\le
\norm{x}_1\norm{y}_1 / b \enspace .
\end{aligned}
\]
Markov's inequality gives
\(
\Prb\left[
\left|\ip{\CS_i(x)}{\CS_i(y)}-\ip{x}{y}\right|>
\tfrac{4}{b\eta}\norm{x}_1\norm{y}_1
\right]
\le \eta/4
\).
Applying the same variance bound and Chebyshev's inequality to $(x,x)$ gives the stated bound on
\(\norm{\CS_i(x)}_2^2\), with failure probability at most $\eta/4$, and similarly for $y$.
A union bound over the four events proves the claim.
\end{proof}

We now turn to the estimators $\widetilde Z_i$ based on the rows of Private CountSketch.
\begin{lemma}\label{lem:conditional-concentration}
For $\eta\in(0,1/2)$, $b\ge 1/\eta$, and $x,y\in\R^d$, each row estimator $\widetilde Z_i$ satisfies:
\[
\Prb\left[
|\widetilde Z_i-\ip{x}{y}|>
\min\left\{
\sqrt{\tfrac{8}{b\eta}}\norm{x}_2\norm{y}_2,
\tfrac{4}{b\eta}\norm{x}_1\norm{y}_1
\right\}
+ 
\Lambda^{\mathrm{priv}}_{\eta}
\right]
\le 2\eta,
\]
where
\(
\Lambda^{\mathrm{priv}}_{\eta}
=
3\sigma(\norm{x}_2+\norm{y}_2 + 5\sigma\sqrt{b})\sqrt{\ln(4/\eta)}
\).
\end{lemma}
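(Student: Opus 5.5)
We expand the private row estimator as
\[
\widetilde Z_i - \ip{x}{y} = \bigl(\ip{\CS_i(x)}{\CS_i(y)} - \ip{x}{y}\bigr) + \ip{\CS_i(x)}{\xi_i} + \ip{\nu_i}{\CS_i(y)} + \ip{\nu_i}{\xi_i},
\]
and bound the four terms separately. We then take a union bound over a constant number of events, each with failure probability of order $\eta$, so that the total failure probability is at most $2\eta$. The first term, together with the bounds on $\norm{\CS_i(x)}_2^2$ and $\norm{\CS_i(y)}_2^2$, is controlled by \Cref{lem:one-row-cs-concentration}. That lemma gives an event $G$ of probability at least $1-\eta$ on which the CountSketch error is at most $\min\{\sqrt{8/(b\eta)}\norm{x}_2\norm{y}_2, \tfrac{4}{b\eta}\norm{x}_1\norm{y}_1\}$. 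On the same event, $b\ge 1/\eta$ gives $\sqrt{8/(b\eta)}\le \sqrt{8}\,\eta\cdot\eta^{-1}\cdot\ldots$; more simply $b\eta\ge 1$ yields $1+\sqrt{8/(b\eta)}\le 1+\sqrt 8 < 4$. Hence $\norm{\CS_i(x)}_2\le 2\norm{x}_2$ and $\norm{\CS_i(y)}_2\le 2\norm{y}_2$ on $G$.

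The cross terms are handled by conditioning on the hash and sign functions, which are independent of the noise. Given them, $\ip{\CS_i(x)}{\xi_i}$ is exactly $\mathcal N(0,\sigma^2\norm{\CS_i(x)}_2^2)$. The Gaussian tail bound with failure probability $\eta/4$ gives $|\ip{\CS_i(x)}{\xi_i}|\le \sigma\norm{\CS_i(x)}_2\sqrt{2\ln(8/\eta)}$, and on $G$ this is at most $2\sigma\norm{x}_2\sqrt{2\ln(8/\eta)}$. The same argument applies symmetrically to $\ip{\nu_i}{\CS_i(y)}$. Using $\ln(8/\eta)\le 2\ln(4/\eta)$ for $\eta<1/2$, these two contributions together are at most $3\sigma(\norm{x}_2+\norm{y}_2)\sqrt{\ln(4/\eta)}$, possibly after rounding constants; the factor $3$ is generous enough that some slack exists.

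The main obstacle is the noise--noise term $\ip{\nu_i}{\xi_i}=\sum_{j=1}^b \nu_{ij}\xi_{ij}$, which is a sum of $b$ independent products of Gaussians. These products are sub-exponential rather than sub-Gaussian, so we need a Bernstein-type (Hanson--Wright/Laurent--Massart) bound. Here is how we would get it. Condition on $\xi_i$; then the term is $\mathcal N(0,\sigma^2\norm{\xi_i}_2^2)$. First, the chi-square bound of Laurent--Massart gives $\norm{\xi_i}_2^2\le \sigma^2(b+2\sqrt{b\ln(4/\eta)}+2\ln(4/\eta))$ with failure probability $\eta/4$. Since $b\ge 1/\eta \ge$ roughly $\ln(4/\eta)$, this is at most $5\sigma^2 b$. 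Second, the conditional Gaussian tail with failure probability $\eta/4$ gives $|\ip{\nu_i}{\xi_i}|\le \sigma\cdot\sqrt5\,\sigma\sqrt b\cdot\sqrt{2\ln(8/\eta)}\le 15\sigma^2\sqrt{b\ln(4/\eta)}$. The hypothesis $b\ge 1/\eta$ is exactly what lets the linear $\ln(1/\eta)$ term be absorbed into $\sqrt{b\ln(1/\eta)}$.

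Summing the failure probabilities gives $\eta$ for $G$ plus three events of probability $\eta/4$ each, for a total of at most $2\eta$. On the complement of these events, the triangle inequality yields the stated bound with $\Lambda^{\mathrm{priv}}_\eta$. The only remaining work is checking the numeric constants.
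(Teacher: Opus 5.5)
Your proof is correct in substance and follows the paper's skeleton. Both split the error by the triangle inequality. Both use the event of \Cref{lem:one-row-cs-concentration} (probability at least $1-\eta$), which gives the CountSketch error and, via $b\eta\ge 1$, the bounds $\norm{\mathrm{CS}_i(x)}_2\le 2\norm{x}_2$ and $\norm{\mathrm{CS}_i(y)}_2\le 2\norm{y}_2$. Both handle the cross terms with Gaussian tails given the hash functions.

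The real difference is the noise--noise term. The paper views $\langle\nu_i,\xi_i\rangle$ as a sum of $b$ independent sub-exponential products and applies Bernstein's inequality, citing the Gaussian Johnson--Lindenstrauss analysis in Foucart--Rauhut. It uses $\ln(4/\eta)\le 2b$ to stay in the sub-Gaussian regime. You instead condition on $\xi_i$, bound $\norm{\xi_i}_2^2\le 5\sigma^2 b$ by Laurent--Massart, and apply a one-dimensional Gaussian tail. Your route is more elementary and makes the constant explicit (roughly $2\sqrt5$, far below $15$), at the cost of one extra failure event. The paper's route is a one-line citation with implicit constants. In both, $b\ge 1/\eta$ absorbs the linear $\ln(1/\eta)$ term.

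There are three bookkeeping corrections.

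\emph{Cross-term constant.} As written, your chain does not give the constant $3$. The tail $\sigma\norm{\mathrm{CS}_i(x)}_2\sqrt{2\ln(8/\eta)}$, combined with $\norm{\mathrm{CS}_i(x)}_2\le 2\norm{x}_2$ and $\ln(8/\eta)\le 2\ln(4/\eta)$, yields $4\sigma(\norm{x}_2+\norm{y}_2)\sqrt{\ln(4/\eta)}$, so your claimed slack is not there. Two easy fixes:
\begin{itemize}
\item Do as the paper does: treat $\langle \mathrm{CS}_i(x),\xi_i\rangle+\langle\mathrm{CS}_i(y),\nu_i\rangle$ as one Gaussian of variance at most $4\sigma^2(\norm{x}_2^2+\norm{y}_2^2)$, with failure probability $\eta/2$ and the standard $2e^{-u^2/2}$ tail. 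This gives at most $2\sqrt2\,\sigma(\norm{x}_2+\norm{y}_2)\sqrt{\ln(4/\eta)}$.
\item Keep the two terms separate but use the sharper tail $\Pr[|N|>su]\le e^{-u^2/2}$ for $N\sim\mathcal N(0,s^2)$, with $e^{-u^2/2}=\eta/4$. This also gives $2\sqrt2<3$ per term.
\end{itemize}

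\emph{Event count.} You have four events of probability $\eta/4$, not three: two cross terms, the chi-square bound, and the conditional Gaussian tail. The total is still exactly $\eta+\eta=2\eta$, so the conclusion stands.

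\emph{The step $1/\eta\ge\ln(4/\eta)$.} This fails for $\eta$ just below $1/2$. It is harmless, since $b$ is an integer (so $b\ge 3$ there) and the constant $15$ has ample slack.
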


The lemma says that a single row fails with probability at most $2\eta$ when the error threshold is chosen at the corresponding scale. 
The first term is the CountSketch error from~\citep{LPT21}. 
The terms in $\Lambda^{\mathrm{priv}}_{\eta}$ are privacy-related error terms.

\begin{proof}
By the triangle inequality,
\begin{equation}\label{eq:triangle-inequality}
|\widetilde Z_i-\ip{x}{y}|
\le
\left|\ip{\CS_i(x)}{\CS_i(y)}-\ip{x}{y}\right|
+ |\ip{\CS_i(x)}{\xi_i} + \ip{\CS_i(y)}{\nu_i}| 
+ |\ip{\nu_i}{\xi_i}| \enspace .
\end{equation}
By Lemma~\ref{lem:one-row-cs-concentration}, (\ref{eq:one-row-cs-concentration}) holds with probability at least $1-\eta$, and bounds the first term of~(\ref{eq:triangle-inequality}).
On that event (fixing all hash functions), the first term is bounded, so it suffices to show
\[
\Prb\left[
|\ip{\CS_i(x)}{\xi_i} + \ip{\CS_i(y)}{\nu_i}| + |\ip{\nu_i}{\xi_i}| >
\Lambda^{\mathrm{priv}}_{\eta}
\;\middle|\;
\text{ event (\ref{eq:one-row-cs-concentration}) holds}\right]
\le \eta,
\]
from which a union bound gives the desired probability upper bound $2\eta$.
Since $b\ge 1/\eta$, the $L_2$ row-norm bounds in~(\ref{eq:one-row-cs-concentration}) imply \(\norm{\CS_i(x)}_2\le 2\norm{x}_2\) and \(\norm{\CS_i(y)}_2\le 2\norm{y}_2\).
Thus the sum \(\ip{\CS_i(x)}{\xi_i}+\ip{\CS_i(y)}{\nu_i}\) is Gaussian with variance \(\sigma^2\norm{\CS_i(x)}_2^2 + \sigma^2\norm{\CS_i(y)}_2^2 \leq 4\sigma^2\norm{x}_2^2 + 4\sigma^2\norm{y}_2^2\). Thus a Gaussian tail bound gives
\(
\Prb\left[
\left|\ip{\CS_i(x)}{\xi_i}+\ip{\CS_i(y)}{\nu_i}\right| >
3\sigma(\norm{x}_2+\norm{y}_2)\sqrt{\ln(4/\eta)}
\right]
\le \eta/2
\).
Finally, \(\ip{\nu_i}{\xi_i}\)
is a sum of products of independent $N(0,\sigma^2)$ variables.
After scaling by \(\sigma^2\), this is the same random variable that arises in the Gaussian Johnson--Lindenstrauss inner-product estimate for two orthogonal unit vectors, see, e.g.,~\cite[Lemma~9.8 and Theorem~9.9]{MathematicalIntroductionToCompressiveSensing}.
Each product term is sub-exponential, and Bernstein's inequality together with
\(\ln(4/\eta)\le 2b\) gives
\[
\Prb\left[
|\ip{\nu_i}{\xi_i}|>
15\sigma^2\sqrt{b\ln(4/\eta)}
\right]
\le \eta/2
 \enspace . \qedhere \]
\end{proof}

\begin{lemma}\label{lem:median-amplification}
Let $k\geq 3$ be odd, let \(\eta\in(0,1)\), and suppose
\(
b\ge 8\eta^{-2/(k+1)}
\).
Then
\[
\Prb\Bigg[
|\med_i \widetilde Z_i-\ip{x}{y}|
>
\Delta_{x,y,\eta,k,b}
\Bigg]
\le \eta
\]
where \(\Delta_{x,y,\eta,k,b} = \min\left\{ \sqrt{\tfrac{64}{b\,\eta^{2/(k+1)}}}\norm{x}_2\norm{y}_2, \tfrac{32\,\norm{x}_1\norm{y}_1}{b\,\eta^{2/(k+1)}} \right\}
+
3\sigma(\norm{x}_2+\norm{y}_2+5\sigma\sqrt b)
\sqrt{\ln\left(32\eta^{-2/(k+1)}\right)}\).
\end{lemma}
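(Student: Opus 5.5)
The plan is to apply \Cref{lem:conditional-concentration} to each row with a suitably chosen per-row failure level, and then amplify through the median by a binomial tail bound over the independent rows. Set $p=\eta^{2/(k+1)}/16$. We will invoke \Cref{lem:conditional-concentration} with $\eta' := p/2$, so that each row fails with probability at most $2\eta' = p$.

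First we check the hypotheses. We need $\eta'\in(0,1/2)$, which holds because $\eta^{2/(k+1)}<1$. We also need $b\ge 1/\eta' = 32\eta^{-2/(k+1)}$. Here there is a mismatch of a constant factor with the assumed $b\ge 8\eta^{-2/(k+1)}$. I would handle it by re-deriving \Cref{lem:conditional-concentration}: the condition $b\ge 1/\eta$ was only used to obtain $\norm{\CS_i(x)}_2\le 2\norm{x}_2$ and $\ln(4/\eta)\le 2b$. Both still hold under $b\ge 8\eta^{-2/(k+1)}$. The first is $\sqrt{8/(b\eta')}\le 3$, i.e.\ $64\eta^{-2/(k+1)}/b\le 9$, which is fine. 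The second follows since $\ln(4/\eta')$ is logarithmic in $\eta^{-1}$ while $b$ is polynomial in $\eta^{-1}$; some small constant slack may be needed here. Alternatively, one can choose $p$ slightly differently and absorb the difference into the constants.

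With this choice of $\eta'$, the per-row threshold from \Cref{lem:conditional-concentration} evaluates as follows:
\[
\sqrt{\tfrac{8}{b\eta'}} = \sqrt{\tfrac{256}{b\eta^{2/(k+1)}}},\qquad
\tfrac{4}{b\eta'} = \tfrac{128}{b\eta^{2/(k+1)}},\qquad
\ln(4/\eta') = \ln\!\big(128\eta^{-2/(k+1)}\big).
\]
These match the stated $\Delta$ up to constants (64 vs 256, 32 vs 128, 32 vs 128 inside the log). To hit the exact constants, I would pick $p$ as large as the median step allows rather than $\eta^{2/(k+1)}/16$.

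Now the amplification step. Let $\Delta_0$ denote the per-row threshold. The median deviates from $\ip{x}{y}$ by more than $\Delta_0$ only if at least $(k+1)/2$ of the $k$ independent rows fail. The rows are independent because hash functions and noise are drawn independently per row. A union bound over subsets gives
\[
\Prb\big[\,|\med_i\widetilde Z_i-\ip{x}{y}|>\Delta_0\,\big]
\le \binom{k}{(k+1)/2}p^{(k+1)/2}
\le 2^k p^{(k+1)/2}
= (4p)^{(k+1)/2}/2 .
\]
So taking $p=\eta^{2/(k+1)}/4$ already yields failure probability at most $\eta/2\le\eta$. This larger $p$ gives $\eta'=\eta^{2/(k+1)}/8$, and then $\sqrt{8/(b\eta')}=\sqrt{64/(b\eta^{2/(k+1)})}$, $4/(b\eta')=32/(b\eta^{2/(k+1)})$, and $\ln(4/\eta')=\ln(32\eta^{-2/(k+1)})$. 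These are exactly the stated constants, and the requirement $b\ge 1/\eta' = 8\eta^{-2/(k+1)}$ is precisely the hypothesis. So the clean route is $\eta'=\eta^{2/(k+1)}/8$, which gives per-row failure probability $p=\eta^{2/(k+1)}/4$.

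The main obstacle is verifying that $\eta'<1/2$ and that the side condition $\ln(4/\eta')\le 2b$ inside \Cref{lem:conditional-concentration} holds. The first is immediate. The second needs $\ln(32 x)\le 16x$ with $x=\eta^{-2/(k+1)}\ge1$, which holds since $\ln(32x)\le \ln 32 + x$.
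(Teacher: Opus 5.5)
Your final route, invoking \Cref{lem:conditional-concentration} with $\eta'=\eta^{2/(k+1)}/8$ (per-row failure $p=\eta^{2/(k+1)}/4$) and bounding the median failure by $2^k p^{(k+1)/2}=\eta/2$, is correct and is essentially the paper's proof. The paper instead takes $\alpha=\tfrac12\,2^{-2k/(k+1)}\eta^{2/(k+1)}$, which makes the binomial bound exactly $\eta$, and gets the constants $64$, $32$, $32$ from $2^{2k/(k+1)}<4$; your choice hits those constants directly and leaves slack in the probability.

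Delete the exploratory first attempt with $p=\eta^{2/(k+1)}/16$: there $\eta'=\eta^{2/(k+1)}/32$, so $b\ge 8\eta^{-2/(k+1)}$ only gives $\sqrt{8/(b\eta')}\le\sqrt{32}$, not $\le 3$. Also, $\ln(4/\eta')\le 2b$ is internal to the proof of \Cref{lem:conditional-concentration} and already follows from its hypothesis $b\ge 1/\eta'$, so it needs no separate check.
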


\begin{proof}
Set
\(
\alpha=
\tfrac12\,2^{-2k/(k+1)}\eta^{2/(k+1)}
\).
Then \(\alpha<1/2\) and the assumption on \(b\) implies \(b\ge 1/\alpha\).
Let
\(
T_\alpha=
\min\left\{ \sqrt{\tfrac{8}{b\alpha}}\norm{x}_2\norm{y}_2, \tfrac{4}{b\alpha}\norm{x}_1\norm{y}_1 \right\}
+
\Lambda^{\mathrm{priv}}_{\alpha}
\)
and
\(
B_i=\mathbf 1\{|\widetilde Z_i-\ip{x}{y}|>T_\alpha\}
\).
The variables $B_i$ are independent and Lemma~\ref{lem:conditional-concentration} states that $\Prb[B_i=1]\le 2\alpha$. 
Since $k$ is odd, the median fails only if at least $(k+1)/2$ rows fail. This gives
\[
\Prb\left[
|\med_i \widetilde Z_i-\ip{x}{y}|>T_\alpha
\right]
\le
\sum_{j\ge (k+1)/2}\binom{k}{j}(2\alpha)^j
\le
2^k(2\alpha)^{(k+1)/2}
=\eta \enspace . \qedhere
\]
\end{proof}

\subsection{Maintaining a private join estimate under updates}

We now describe how to maintain the noisy join estimate under continual observation.

\begin{theorem}\label{thm:continual-private-join}
Let $T$ be the time horizon, let $b\geq 1$, let $k\geq 3$ be odd, and let $\rho>0$.
Define
\(
    \sigma_{\mathrm{co}}^2 = k(\log_2(T)+2)^2/(8\rho)
\).
Consider a dynamic stream of updates to two relations $R$ and $S$, where each update changes the frequency of a key in one relation by at most $1$ (increase or decrease).
There is a data structure that maintains estimates $\widetilde J_t$ of the join size $J_t=\ip{x_t}{y_t}$ at each time step $t\in[T]$ with the following guarantees:
\begin{itemize}
\item The worst-case update time is $\BU{k}$ in a Word RAM model (see \Cref{app: Preliminaries}).
\item The space usage is $\BU{kb\log T}$ words, in addition to storing the CountSketch hash functions.
\item The entire sequence of released estimates is $\rho$-zCDP under continual observation.
\item For every fixed time $t$ and every $\eta\in(0,1/2)$ such that $b\geq 8\eta^{-2/(k+1)}$,
letting 
\[
\begin{aligned}
\Delta_t(\eta) = \min\left\{
    \sqrt{\tfrac{\eta^{-2/(k+1)}}{b}}\,\norm{x_t}_2\norm{y_t}_2,
    \tfrac{\eta^{-2/(k+1)\norm{x_t}_1\norm{y_t}_1}}{b}
\right\}
+ (\sigma_{\mathrm{co}}(\norm{x_t}_2+\norm{y_t}_2)
+ \sigma_{\mathrm{co}}^2\sqrt{b}) \sqrt{\ln(1/\eta)},
\end{aligned}
\]
we have
\(
\Prb\left[
|\widetilde J_t-\ip{x_t}{y_t}|
>
c\,\Delta_t(\eta)
\right]
\leq \eta ,
\)
where $c$ is a universal constant.
\end{itemize}
\end{theorem}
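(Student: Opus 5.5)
The plan is to maintain two Private CountSketches, one for $x_t$ and one for $y_t$, with shared hash functions $h_1,\dots,h_k$ and signs $s_1,\dots,s_k$. Each sketch is a linear image of the input, so it is a vector of $kb$ counters receiving turnstile updates, and we apply FastGaMe (\Cref{thm: Constant look-up time}) to it. There are $2kb$ counters in total. Each counter gets its own FastGaMe instance with per-node variance $\sigma^2$, chosen so that the released noise at each leaf has variance $(\log_2(T)+1)\sigma^2 \le \sigma_{\mathrm{co}}^2$. All instances are independent.

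We maintain the $k$ row estimators $\widetilde Z_i=\ip{\widetilde{\CS}_i(x_t)}{\widetilde{\CS}_i(y_t)}$ incrementally. An update to key $a$ in $R$ changes, in each row $i$, only counter $h_i(a)$ of the $x$-sketch, by $s_i(a)$. The noisy value of that counter changes because of the data update and also because FastGaMe releases a fresh noise value at time $t$.

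The main obstacle is that \emph{all} counters get a new noise value when time advances, not only the touched ones. Recomputing $\widetilde Z_i$ from scratch would therefore cost $\Theta(b)$. I would first try to resolve this by releasing the estimate lazily: each counter's noise is materialized only when it is read, and the inner product over untouched counters is kept as a running sum. This is not quite faithful, because untouched counters also get new noise at each leaf.

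The alternative I would commit to is to change the granularity of the continual release. Apply FastGaMe to the $k$ scalar row statistics rather than to the counters, so that each released $\widetilde Z_i$ is the exact inner product plus noise. This does not reproduce the bilinear error structure of \Cref{lem:private-median-regimes}. The cleanest route is therefore to interpret the theorem's accounting as follows. The estimate at time $t$ uses the sketch with noise $\nu^{(t)}$, whose marginal at fixed $t$ is $N(0,\sigma_{\mathrm{co}}^2 I)$. An update touches $O(k)$ counters. For the noise, we exploit that at a fixed time only the noise values of counters whose sketch contents change are re-sampled, and all others retain their prefix values; the correlation structure of \Cref{thm: Constant look-up time} is used so that each $\widetilde Z_i$ changes by $O(1)$ terms per update. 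I expect verifying this $O(k)$ update time to be the delicate step, and I would present it as a per-counter FastGaMe with lazy materialization.

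For \textbf{privacy}, each relation update changes each of the $k$ rows in one counter by $\pm1$. The stacked counter vector therefore has $\ell_2$ sensitivity $\sqrt{k}$. By \Cref{sec:continual-counting}, the Gaussian binary tree mechanism with per-node variance $k(\log_2 T+2)/(8\rho)$ is $\rho$-zCDP for sensitivity $\sqrt k$. The released join estimates are post-processing of the released noisy sketches, and the noise is distributed exactly as in $\M_{BT}$ by \Cref{thm: Constant look-up time}, so the whole sequence is $\rho$-zCDP. The marginal variance is $(\log_2 T+1)\cdot k(\log_2T+2)/(8\rho)\le\sigma_{\mathrm{co}}^2$.

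For \textbf{space}, there are $2kb$ counters, each storing $O(\log T)$ words (the vectors $P$ and $w$), for $O(kb\log T)$ words in total.

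For \textbf{accuracy} at a fixed $t$, the noise vectors $\nu_i,\xi_i$ are independent, with i.i.d.\ $N(0,s^2)$ entries where $s^2\le\sigma_{\mathrm{co}}^2$, and they are independent of the hashing. \Cref{lem:private-median-regimes} then applies verbatim with $\sigma$ replaced by $\sigma_{\mathrm{co}}$. Absorbing constants such as $\ln(32\eta^{-2/(k+1)})\le O(\ln(1/\eta))$ and $C_\eta$ into $c$ gives the bound with $\Delta_t(\eta)$.
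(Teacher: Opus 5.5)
There is a genuine gap: the proposal never commits to one mechanism that delivers all four bullets, and the arguments you give belong to two different mechanisms.

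Your privacy and accuracy paragraphs are correct for the \emph{faithful} scheme. There, all $2kb$ FastGaMe instances release their time-$t$ values and $\widetilde Z_i$ is recomputed from them. But, as you observe, every one of the $b$ noise values in a row moves at each new leaf, so this scheme costs $\Theta(kb)$ per time step. \Cref{thm: Constant look-up time} makes each individual noise value cheap, but no ``correlation structure'' reduces the number of changed terms per row to $\BU{1}$.

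The scheme that does give $\BU{k}$ time is the \emph{lazy} one you first proposed and return to at the end: refresh only the $k$ touched counters and keep each row inner product as a running sum. You yourself noted that it is not distributed as $\M_{BT}$. Your privacy paragraph then nevertheless justifies it by saying the noise is distributed exactly as in $\M_{BT}$.

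The row-statistics alternative fails for a more basic reason than the one you give. One update changes $\langle \mathrm{CS}_i(x),\mathrm{CS}_i(y)\rangle$ by $\pm\mathrm{CS}_i(y)_{h_i(a)}$, which is unbounded, so there is no sensitivity to calibrate noise to.

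The paper's proof commits to the lazy scheme. On an update to key $a$ it queries FastGaMe only for counter $h_i(a)$ in each row. It then swaps that one bucket's contribution in the maintained inner product, giving $\BU{k}$ time including the median.

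It handles the timestamp mismatch for accuracy, which is the step your accuracy paragraph would need for the lazy scheme. At any fixed $t$, each stored noise value was sampled at its counter's last update. It is marginally $\mathcal N(0,\sigma_{\mathrm{node}}^2(\log_2 T+1))$, with variance below $\sigma_{\mathrm{co}}^2$. These values are independent across counters and of the hashing, because refresh times depend only on the data and hash functions. So \Cref{lem:private-median-regimes} applies with $\sigma=\sigma_{\mathrm{co}}$.

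For privacy the paper uses exactly your argument: FastGaMe on the stacked vector, sensitivity $\sqrt{k}$, then post-processing. Your faithfulness worry bites precisely there, and the paper's proof does not address it. With data-dependent refresh times, the estimates are not a post-processing of the full $\M_{BT}$ release.

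Concretely, take the add/remove relation of \Cref{sec:definitions}. If there is no update at step $\tau$, nothing is refreshed and $\widetilde J_\tau=\widetilde J_{\tau-1}$ with probability $1$. In the neighbor with one extra update at $\tau$, every row inner product moves by a continuously distributed amount, so $\widetilde J_\tau\neq\widetilde J_{\tau-1}$ almost surely.

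So adopting the paper's route fixes the time and accuracy bullets. Proving the privacy bullet for an $\BU{k}$-time structure still needs a further idea, not a citation of $\M_{BT}$. Examples would be a data-independent refresh schedule, or fast maintenance of the faithful time-$t$ inner products.
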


Note that if estimates are released at all times $t\in[T]$ and $b\geq 8(T/\eta)^{2/(k+1)}$, then with probability at least $1-\eta$ the bound $|\widetilde J_t-\ip{x_t}{y_t}|\leq c\,\Delta_t(\eta/T)$ holds simultaneously for all $t$.
Unlike the point estimates in~\citep{LPT21}, we note that the error grows with the number of repetitions $k$, since $\sigma_{\mathrm{co}}^2$ is proportional to $k$.
The condition $b\geq 8\eta^{-2/(k+1)}$ is satisfied for $k \gtrsim \log(1/\eta)/(2\log b)$, so error probability polynomially small in $b$ can be achieved with constant $k$.

\begin{proof}
We consider all rows of both sketches as one large vector and apply FastGaMe (\Cref{sec:FastGaMe}) to it.
The space usage for FastGaMe is $\BU{\log T}$ per sketch entry.
An update to one relation changes one counter in each of $k$ CountSketch rows, so the combined $\ell_2$ sensitivity of the vectors of all sketch counters is at most $\sqrt{k}$.
Thus, setting the internal node variance in FastGaMe to
\(
    \sigma_{\mathrm{node}}^2
    =
    k(\log_2(T)+2)/(8\rho),
\)
yields $\rho$-zCDP of the sketches for both relations under continual observation.
Since the estimates are computed by post-processing the same guarantee applies to them.

For an update to key $a$, only one bucket in each CountSketch row can change.
For each of the $k$ rows we refresh the necessary noisy sketch entries using FastGaMe and update the maintained row inner product by replacing the affected bucket contribution.
Each refresh and arithmetic update takes $\BU{1}$ worst-case time, giving $\BU{k}$ update time, including recomputing the median.
Notice that this causes the noise of sketch entries to have different timestamps according to when they were last updated.
However, this does not matter: At any fixed time $t$, the noisy sketch entries used in the maintained estimator are independent Gaussians with variance $\sigma_{\mathrm{node}}^2 (\log(T)+1) < \sigma_{\mathrm{co}}^2$.
Thus \Cref{lem:private-median-regimes} applies with $x=x_t$, $y=y_t$, and $\sigma=\sigma_{\mathrm{co}}$.
The simultaneous guarantee follows by a union bound over all $T$ released estimates, which holds even though noise values are correlated over time.
\end{proof}

\section{Conclusion and Open Questions}
We have shown that dynamic data structures with bounded-norm updates, for example linear sketches, support continual observation with no asymptotic time overhead.
Values are released with correlated Gaussian noise of magnitude within a small constant factor of the best continual observation techniques.
It would be interesting to see if similar results hold for more sophisticated low-space matrix factorization techniques with improved error, such as the Buffered Linear Toeplitz matrices of~\citet{Dvijotham2024efficient} or (perhaps more directly) for the higher-degree tree-based methods like that of~\citet{ImprovedCountingUnderContinualObservationWithPureDifferentialPrivacy}.
Though we have focused on methods with Gaussian noise, it appears likely that analogous results hold under Laplace noise, with pure differential privacy guarantees.

Our two applications show that while continual observation comes at a cost in error compared to a one-shot release, the cost in terms of processing time is constant.
They also illustrate the power and versatility of Private CountSketch in data management applications such as multidimensional indexing and sketch-based join size estimation. There are many other settings in which approximate representation of sparse vectors is a key ingredient, so there are likely more such applications.

\begin{acks}
The authors were supported by a Data Science Distinguished Investigator grant from Novo Nordisk Fonden, and are part of BARC, supported by the VILLUM Foundation grant 54451.
\end{acks}

\newpage
\bibliography{bib2doi}

@inproceedings{PCS,
  author = {Pagh, Rasmus and Thorup, Mikkel},
  booktitle = {Advances in Neural Information Processing Systems (NeurIPS)},
  pages = {25631--25643},
  publisher = {Curran Associates, Inc.},
  title = {Improved Utility Analysis of Private CountSketch},
  url = {https://proceedings.neurips.cc/paper_files/paper/2022/file/a47f5cdff1469751597d78e803fc590f-Paper-Conference.pdf},
  volume = {35},
  year = {2022},
  issn = {1049-5258},
  timestamp = {Mon, 08 Jan 2024 00:00:00 +0100},
  biburl = {https://dblp.org/rec/conf/nips/PaghT22.bib},
  bibsource = {dblp computer science bibliography, https://dblp.org},
  _bib2doi_selected = {dblp:/rec/conf/nips/PaghT22.bib},
  _bib2doi_confirmed = {true},
  _bib2doi_finished = {true},
}

@inproceedings{PureDPRecQueries,
  author = {Dwork, Cynthia and Naor, Moni and Reingold, Omer and Rothblum, Guy N.},
  title = {Pure Differential Privacy for Rectangle Queries via Private Partitions},
  booktitle = {Advances in Cryptology -- ASIACRYPT 2015},
  year = {2015},
  publisher = {Springer Berlin Heidelberg},
  address = {Berlin, Heidelberg},
  pages = {735--751},
  isbn = {978-3-662-48800-3},
  doi = {10.1007/978-3-662-48800-3\_30},
  timestamp = {Thu, 27 Jul 2023 01:00:00 +0200},
  biburl = {https://dblp.org/rec/conf/asiacrypt/DworkNRR15.bib},
  bibsource = {dblp computer science bibliography, https://dblp.org},
  volume = {9453},
  url = {https://doi.org/10.1007/978-3-662-48800-3\_30},
  _bib2doi_selected = {dblp:/rec/conf/asiacrypt/DworkNRR15.bib},
  _bib2doi_confirmed = {true},
  _bib2doi_finished = {true},
}

@article{CountSketchCCF,
  title = {Finding frequent items in data streams},
  journal = {Theoretical Computer Science},
  volume = {312},
  number = {1},
  pages = {3-15},
  year = {2004},
  note = {Conference version in ICALP 2002.},
  issn = {0304-3975},
  doi = {10.1016/S0304-3975(03)00400-6},
  url = {https://doi.org/10.1016/S0304-3975(03)00400-6},
  author = {Moses Charikar and Kevin Chen and Martin Farach-Colton},
  timestamp = {Sat, 30 Sep 2023 01:00:00 +0200},
  biburl = {https://dblp.org/rec/journals/tcs/CharikarCF04.bib},
  bibsource = {dblp computer science bibliography, https://dblp.org},
  _bib2doi_old_doi = {https://doi.org/10.1016/S0304-3975(03)00400-6},
  _bib2doi_selected = {dblp:/rec/journals/tcs/CharikarCF04.bib},
  _bib2doi_confirmed = {true},
  _bib2doi_finished = {true},
}

@inproceedings{ImprovedConcentrationBoundsCountSketch,
  author = {Gregory T. Minton and Eric Price},
  title = {Improved Concentration Bounds for Count-Sketch},
  booktitle = {Proceedings of the 2014 Annual ACM-SIAM Symposium on Discrete Algorithms (SODA)},
  pages = {669-686},
  doi = {10.1137/1.9781611973402.51},
  url = {https://epubs.siam.org/doi/abs/10.1137/1.9781611973402.51},
  year = {2014},
  timestamp = {Mon, 25 Apr 2022 01:00:00 +0200},
  biburl = {https://dblp.org/rec/conf/soda/MintonP14.bib},
  bibsource = {dblp computer science bibliography, https://dblp.org},
  _bib2doi_selected = {dblp:/rec/conf/soda/MintonP14.bib},
  _bib2doi_confirmed = {true},
}

@inproceedings{DifferentiallyPrivateLinearSketches,
  author = {Fuheng Zhao and Dan Qiao and Rachel Redberg and Divyakant Agrawal and Amr El Abbadi and Yu{-}Xiang Wang},
  title = {Differentially Private Linear Sketches: Efficient Implementations and Applications},
  booktitle = {Advances in Neural Information Processing Systems (NeurIPS)},
  year = {2022},
  url = {http://papers.nips.cc/paper_files/paper/2022/hash/525338e0d98401a62950bc7c454eb83d-Abstract-Conference.html},
  timestamp = {Mon, 08 Jan 2024 00:00:00 +0100},
  biburl = {https://dblp.org/rec/conf/nips/ZhaoQRAAW22.bib},
  bibsource = {dblp computer science bibliography, https://dblp.org},
  _bib2doi_selected = {dblp:/rec/conf/nips/ZhaoQRAAW22.bib},
  _bib2doi_confirmed = {true},
}

@book{MathematicalIntroductionToCompressiveSensing,
  author = {Simon Foucart and Holger Rauhut},
  title = {A Mathematical Introduction to Compressive Sensing},
  series = {Applied and Numerical Harmonic Analysis},
  publisher = {Birkh{\"{a}}user},
  year = {2013},
  url = {https://doi.org/10.1007/978-0-8176-4948-7},
  doi = {10.1007/978-0-8176-4948-7},
  isbn = {978-0-8176-4947-0},
  timestamp = {Wed, 14 Jun 2017 01:00:00 +0200},
  biburl = {https://dblp.org/rec/books/daglib/0036092.bib},
  bibsource = {dblp computer science bibliography, https://dblp.org},
  _bib2doi_selected = {dblp:/rec/books/daglib/0036092.bib},
  _bib2doi_confirmed = {true},
}

@inproceedings{Bun2026ConcentratedDP,
  author = {Bun, Mark and Steinke, Thomas},
  editor = {Hirt, Martin and Smith, Adam},
  title = {Concentrated Differential Privacy: Simplifications, Extensions, and Lower Bounds},
  booktitle = {Theory of Cryptography},
  year = {2016},
  publisher = {Springer Berlin Heidelberg},
  address = {Berlin, Heidelberg},
  pages = {635--658},
  isbn = {978-3-662-53641-4},
  timestamp = {Sun, 14 Mar 2021 00:00:00 +0100},
  biburl = {https://dblp.org/rec/conf/tcc/BunS16.bib},
  bibsource = {dblp computer science bibliography, https://dblp.org},
  doi = {10.1007/978-3-662-53641-4_24},
  _bib2doi_selected = {dblp:/rec/conf/tcc/BunS16.bib},
  _bib2doi_confirmed = {true},
}

@inproceedings{OptimalPrivateHalfspaceCountingViaDiscrepancy,
  title = {Optimal private halfspace counting via discrepancy},
  author = {Muthukrishnan, Shanmugavelayutham and Nikolov, Aleksandar},
  booktitle = {Proceedings of the forty-fourth annual ACM symposium on Theory of computing},
  pages = {1285--1292},
  year = {2012},
  timestamp = {Mon, 03 Mar 2025 00:00:00 +0100},
  biburl = {https://dblp.org/rec/conf/stoc/MuthukrishnanN12.bib},
  bibsource = {dblp computer science bibliography, https://dblp.org},
  doi = {10.1145/2213977.2214090},
  url = {https://doi.org/10.1145/2213977.2214090},
  _bib2doi_selected = {dblp:/rec/conf/stoc/MuthukrishnanN12.bib},
  _bib2doi_confirmed = {true},
  _bib2doi_finished = {true},
}

@article{Chan2011ReleaseStatistics,
  author = {Chan, T.-H. Hubert and Shi, Elaine and Song, Dawn},
  title = {Private and Continual Release of Statistics},
  year = {2011},
  issue_date = {November 2011},
  publisher = {Association for Computing Machinery},
  address = {New York, NY, USA},
  volume = {14},
  number = {3},
  issn = {1094-9224},
  url = {https://doi.org/10.1145/2043621.2043626},
  doi = {10.1145/2043621.2043626},
  abstract = {We ask the question: how can Web sites and data aggregators continually release updated statistics, and meanwhile preserve each individual user’s privacy? Suppose we are given a stream of 0’s and 1’s. We propose a differentially private continual counter that outputs at every time step the approximate number of 1’s seen thus far. Our counter construction has error that is only poly-log in the number of time steps. We can extend the basic counter construction to allow Web sites to continually give top-k and hot items suggestions while preserving users’ privacy.},
  journal = {ACM Trans. Inf. Syst. Secur.},
  month = {nov},
  articleno = {26},
  numpages = {24},
  keywords = {streaming algorithm, continual mechanism, Differential privacy},
  timestamp = {Tue, 06 Nov 2018 00:00:00 +0100},
  biburl = {https://dblp.org/rec/journals/tissec/ChanSS11.bib},
  bibsource = {dblp computer science bibliography, https://dblp.org},
  _bib2doi_selected = {dblp:/rec/journals/tissec/ChanSS11.bib},
  _bib2doi_confirmed = {true},
  note = {Appeared in Cryptology {ePrint} Archive 2010/076 and ICALP 2010.},
}

@article{TheAlgorithmicFoundationsofDifferentialPrivacy,
  author = {Dwork, Cynthia and Roth, Aaron},
  title = {The Algorithmic Foundations of Differential Privacy},
  year = {2014},
  issue_date = {Aug 2014},
  publisher = {Now Publishers Inc.},
  address = {Hanover, MA, USA},
  volume = {9},
  number = {3-4},
  issn = {1551-305X},
  url = {https://doi.org/10.1561/0400000042},
  doi = {10.1561/0400000042},
  journal = {Found. Trends Theor. Comput. Sci.},
  month = {aug},
  pages = {211--407},
  numpages = {197},
  timestamp = {Sat, 30 Sep 2023 01:00:00 +0200},
  biburl = {https://dblp.org/rec/journals/fttcs/DworkR14.bib},
  bibsource = {dblp computer science bibliography, https://dblp.org},
  _bib2doi_selected = {dblp:/rec/journals/fttcs/DworkR14.bib},
  _bib2doi_confirmed = {true},
  _bib2doi_finished = {true},
}

@article{Xiao2011Wavelet,
  author = {Xiaokui Xiao and Guozhang Wang and Johannes Gehrke},
  title = {Differential Privacy via Wavelet Transforms},
  journal = {IEEE Transactions on Knowledge and Data Engineering},
  volume = {23},
  number = {8},
  pages = {1200--1214},
  year = {2011},
  doi = {10.1109/TKDE.2010.247},
  url = {https://doi.org/10.1109/TKDE.2010.247},
  bibsource = {dblp computer science bibliography, https://dblp.org},
  note = {Appeared as arXiv 0909.5530 and in ICDE 2010.},
  timestamp = {Thu, 08 Jun 2017 01:00:00 +0200},
  biburl = {https://dblp.org/rec/journals/tkde/XiaoWG11.bib},
  _bib2doi_selected = {dblp:/rec/journals/tkde/XiaoWG11.bib},
  _bib2doi_confirmed = {true},
}

@article{FactorizationNormsAndHereditaryDiscrepancy,
  author = {Ji{\v{r}}{\i} Matou{\v{s}}ek and Aleksandar Nikolov and Kunal Talwar},
  title = {Factorization Norms and Hereditary Discrepancy},
  journal = {International Mathematics Research Notices},
  volume = {2020},
  number = {3},
  pages = {751--780},
  year = {2020},
  doi = {10.1093/imrn/rny033},
  url = {https://doi.org/10.1093/imrn/rny033},
  bibsource = {dblp computer science bibliography, https://dblp.org},
  timestamp = {Thu, 24 Jan 2019 00:00:00 +0100},
  biburl = {https://dblp.org/rec/journals/corr/MatousekNT14.bib},
  _bib2doi_selected = {dblp:/rec/journals/corr/MatousekNT14.bib},
  _bib2doi_confirmed = {true},
  _bib2doi_finished = {true},
}

@article{Dwork2010ContinalObs,
  author = {Dwork, Cynthia and Naor, Moni and Pitassi, Toniann and Rothblum, Guy N.},
  title = {Differential Privacy Under Continual Observation},
  journal = {STOC},
  year = {2010},
  timestamp = {Wed, 14 Nov 2018 00:00:00 +0100},
  biburl = {https://dblp.org/rec/conf/stoc/DworkNPR10.bib},
  bibsource = {dblp computer science bibliography, https://dblp.org},
  doi = {10.1145/1806689.1806787},
  url = {https://doi.org/10.1145/1806689.1806787},
  _bib2doi_selected = {dblp:/rec/conf/stoc/DworkNPR10.bib},
  _bib2doi_confirmed = {true},
  _bib2doi_finished = {true},
}

@inproceedings{DifferentialPrivacyonFullyDynamicStreams,
  author = {Qiu, Yuan and Yi, Ke},
  title = {Differential Privacy on Fully Dynamic Streams},
  booktitle = {Advances in Neural Information Processing Systems (NeurIPS)},
  year = {2025},
  _bib2doi_finished = {true},
}

@inproceedings{AGMS99,
  author = {Alon, Noga and Gibbons, Phillip B. and Matias, Yossi and Szegedy, Mario},
  title = {Tracking Join and Self-Join Sizes in Limited Storage},
  booktitle = {Proceedings of the Eighteenth ACM SIGMOD-SIGACT-SIGART Symposium on Principles of Database Systems (PODS)},
  pages = {10--20},
  year = {1999},
  timestamp = {Thu, 19 Feb 2026 00:00:00 +0100},
  biburl = {https://dblp.org/rec/conf/pods/AlonGMS99.bib},
  bibsource = {dblp computer science bibliography, https://dblp.org},
  doi = {10.1145/303976.303978},
  _bib2doi_selected = {dblp:/rec/conf/pods/AlonGMS99.bib},
  _bib2doi_confirmed = {true},
}

@inproceedings{LPT21,
  author = {Larsen, Kasper Green and Pagh, Rasmus and Tet{\v e}k, Jakub},
  title = {CountSketches, Feature Hashing and the Median of Three},
  booktitle = {Proceedings of the 38th International Conference on Machine Learning (ICML)},
  pages = {6011--6020},
  year = {2021},
  timestamp = {Wed, 25 Aug 2021 01:00:00 +0200},
  biburl = {https://dblp.org/rec/conf/icml/LarsenPT21.bib},
  bibsource = {dblp computer science bibliography, https://dblp.org},
  url = {http://proceedings.mlr.press/v139/larsen21a.html},
  _bib2doi_selected = {dblp:/rec/conf/icml/LarsenPT21.bib},
  _bib2doi_confirmed = {true},
  _bib2doi_finished = {true},
}

@inproceedings{WDLAS09,
  author = {Weinberger, Kilian and Dasgupta, Anirban and Langford, John and Smola, Alex and Attenberg, Josh},
  title = {Feature Hashing for Large Scale Multitask Learning},
  booktitle = {Proceedings of the 26th International Conference on Machine Learning (ICML)},
  pages = {1113--1120},
  year = {2009},
  timestamp = {Tue, 06 Nov 2018 00:00:00 +0100},
  biburl = {https://dblp.org/rec/conf/icml/WeinbergerDLSA09.bib},
  bibsource = {dblp computer science bibliography, https://dblp.org},
  doi = {10.1145/1553374.1553516},
  _bib2doi_selected = {dblp:/rec/conf/icml/WeinbergerDLSA09.bib},
  _bib2doi_confirmed = {true},
  _bib2doi_finished = {true},
}

@inproceedings{Stausholm21,
  author = {Stausholm, Nina Mesing},
  title = {Improved Differentially Private Euclidean Distance Approximation},
  booktitle = {Proceedings of the 40th ACM SIGMOD-SIGACT-SIGAI Symposium on Principles of Database Systems (PODS)},
  pages = {42--56},
  year = {2021},
  timestamp = {Thu, 19 Feb 2026 00:00:00 +0100},
  biburl = {https://dblp.org/rec/conf/pods/Stausholm21.bib},
  bibsource = {dblp computer science bibliography, https://dblp.org},
  doi = {10.1145/3452021.3458328},
  _bib2doi_selected = {dblp:/rec/conf/pods/Stausholm21.bib},
  _bib2doi_confirmed = {true},
}

@inproceedings{HogsgaardKLNS24,
  author = {H{\o}gsgaard, Mikael M{\o}ller and Kamma, Lior and Larsen, Kasper Green and Nelson, Jelani and Schwiegelshohn, Chris},
  title = {Sparse Dimensionality Reduction Revisited},
  booktitle = {Proceedings of the 41st International Conference on Machine Learning (ICML)},
  pages = {18454--18469},
  year = {2024},
  timestamp = {Mon, 09 Feb 2026 00:00:00 +0100},
  biburl = {https://dblp.org/rec/conf/icml/HogsgaardKLNS24.bib},
  bibsource = {dblp computer science bibliography, https://dblp.org},
  url = {https://proceedings.mlr.press/v235/hogsgaard24a.html},
  _bib2doi_selected = {dblp:/rec/conf/icml/HogsgaardKLNS24.bib},
  _bib2doi_confirmed = {true},
}

@article{ScalableDifferentiallyPrivateSketchesUnderContinualObservation,
  author = {Rayne Holland},
  title = {Scalable Differentially Private Sketches under Continual Observation},
  journal = {CoRR},
  volume = {abs/2507.03361},
  year = {2025},
  url = {https://doi.org/10.48550/arXiv.2507.03361},
  doi = {10.48550/arXiv.2507.03361},
  eprinttype = {arXiv},
  eprint = {2507.03361},
  timestamp = {Sat, 06 Sep 2025 01:00:00 +0200},
  biburl = {https://dblp.org/rec/journals/corr/abs-2507-03361.bib},
  bibsource = {dblp computer science bibliography, https://dblp.org},
  _bib2doi_selected = {dblp:/rec/journals/corr/abs-2507-03361.bib},
  _bib2doi_confirmed = {true},
}

@inproceedings{DifferentiallyPrivateContinualReleasesofStreamingFrequencyMomentEstimations,
  author = {Epasto, Alessandro and Mao, Jieming and Medina, Andres Munoz and Mirrokni, Vahab and Vassilvitskii, Sergei and Zhong, Peilin},
  title = {Differentially Private Continual Releases of Streaming Frequency Moment Estimations},
  booktitle = {14th Innovations in Theoretical Computer Science Conference (ITCS 2023)},
  pages = {48:1--48:24},
  series = {Leibniz International Proceedings in Informatics (LIPIcs)},
  isbn = {978-3-95977-263-1},
  issn = {1868-8969},
  year = {2023},
  volume = {251},
  editor = {Tauman Kalai, Yael},
  publisher = {Schloss Dagstuhl -- Leibniz-Zentrum f{\"u}r Informatik},
  address = {Dagstuhl, Germany},
  url = {https://drops.dagstuhl.de/entities/document/10.4230/LIPIcs.ITCS.2023.48},
  urn = {urn:nbn:de:0030-drops-175513},
  doi = {10.4230/LIPIcs.ITCS.2023.48},
  annote = {Keywords: Differential Privacy, Continual Release, Sliding Window, Streaming Algorithms, Distinct Elements, Frequency Moment Estimation},
  timestamp = {Thu, 02 Feb 2023 00:00:00 +0100},
  biburl = {https://dblp.org/rec/conf/innovations/EpastoMMMVZ23.bib},
  bibsource = {dblp computer science bibliography, https://dblp.org},
  _bib2doi_selected = {dblp:/rec/conf/innovations/EpastoMMMVZ23.bib},
  _bib2doi_confirmed = {true},
}

@inproceedings{Chan2017ModerateDimensions,
  author = {Chan, Timothy M.},
  title = {{Orthogonal Range Searching in Moderate Dimensions: k-d Trees and Range Trees Strike Back}},
  booktitle = {33rd International Symposium on Computational Geometry (SoCG 2017)},
  pages = {27:1--27:15},
  series = {Leibniz International Proceedings in Informatics (LIPIcs)},
  volume = {77},
  year = {2017},
  editor = {Aronov, Boris and Katz, Matthew J.},
  publisher = {Schloss Dagstuhl -- Leibniz-Zentrum f{\"u}r Informatik},
  address = {Dagstuhl, Germany},
  doi = {10.4230/LIPIcs.SoCG.2017.27},
  url = {https://drops.dagstuhl.de/entities/document/10.4230/LIPIcs.SoCG.2017.27},
  timestamp = {Wed, 16 Jun 2021 01:00:00 +0200},
  biburl = {https://dblp.org/rec/conf/compgeom/Chan17a.bib},
  bibsource = {dblp computer science bibliography, https://dblp.org},
  _bib2doi_selected = {dblp:/rec/conf/compgeom/Chan17a.bib},
  _bib2doi_confirmed = {true},
}

@inproceedings{Afshani2022HierarchicalCategories,
  author = {Afshani, Peyman and Killmann, Rasmus and Larsen, Kasper Green},
  title = {Hierarchical Categories in Colored Searching},
  booktitle = {33rd International Symposium on Algorithms and Computation (ISAAC 2022)},
  pages = {25:1--25:15},
  series = {Leibniz International Proceedings in Informatics (LIPIcs)},
  volume = {248},
  year = {2022},
  editor = {Bae, Sang Won and Park, Heejin},
  publisher = {Schloss Dagstuhl -- Leibniz-Zentrum f{\"u}r Informatik},
  address = {Dagstuhl, Germany},
  doi = {10.4230/LIPIcs.ISAAC.2022.25},
  url = {https://drops.dagstuhl.de/entities/document/10.4230/LIPIcs.ISAAC.2022.25},
  timestamp = {Tue, 14 Oct 2025 01:00:00 +0200},
  biburl = {https://dblp.org/rec/conf/isaac/AfshaniKL22.bib},
  bibsource = {dblp computer science bibliography, https://dblp.org},
  _bib2doi_selected = {dblp:/rec/conf/isaac/AfshaniKL22.bib},
  _bib2doi_confirmed = {true},
  _bib2doi_finished = {true},
}

@article{TheDiscreteGaussianforDifferentialPrivacy,
  title = {The Discrete {G}aussian for Differential Privacy},
  volume = {12},
  url = {https://journalprivacyconfidentiality.org/index.php/jpc/article/view/784},
  doi = {10.29012/jpc.784},
  number = {1},
  journal = {Journal of Privacy and Confidentiality},
  author = {Canonne, Cl{\'e}ment and Kamath, Gautam and Steinke, Thomas},
  year = {2022},
  month = {July},
  timestamp = {Mon, 28 Aug 2023 01:00:00 +0200},
  biburl = {https://dblp.org/rec/journals/jpc/CanonneKS22.bib},
  bibsource = {dblp computer science bibliography, https://dblp.org},
  _bib2doi_selected = {dblp:/rec/journals/jpc/CanonneKS22.bib},
  _bib2doi_confirmed = {true},
}

@inproceedings{SmoothBinaryMechanismForEfficientPrivateContinualObservation,
  author = {Andersson, Joel Daniel and Pagh, Rasmus},
  booktitle = {Advances in Neural Information Processing Systems},
  pages = {49133--49145},
  publisher = {Curran Associates, Inc.},
  title = {A Smooth Binary Mechanism for Efficient Private Continual Observation},
  url = {https://proceedings.neurips.cc/paper_files/paper/2023/file/99c41fb9fd53abfdd4a0259560ef1c9d-Paper-Conference.pdf},
  volume = {36},
  year = {2023},
  timestamp = {Fri, 01 Mar 2024 00:00:00 +0100},
  biburl = {https://dblp.org/rec/conf/nips/AnderssonP23.bib},
  bibsource = {dblp computer science bibliography, https://dblp.org},
  _bib2doi_selected = {dblp:/rec/conf/nips/AnderssonP23.bib},
  _bib2doi_confirmed = {true},
  _bib2doi_finished = {true},
}

@inproceedings{AlmostTightErrorBoundsOnDifferentiallyPrivateContinualCounting,
  author = {Henzinger, Monika and Upadhyay, Jalaj and Upadhyay, Sarvagya},
  title = {Almost Tight Error Bounds on Differentially Private Continual Counting},
  booktitle = {Proceedings of the 2023 Annual ACM-SIAM Symposium on Discrete Algorithms (SODA)},
  pages = {5003--5039},
  year = {2023},
  doi = {10.1137/1.9781611977554.ch183},
  url = {https://doi.org/10.1137/1.9781611977554.ch183},
  bibsource = {dblp computer science bibliography, https://dblp.org},
  timestamp = {Fri, 17 Feb 2023 00:00:00 +0100},
  biburl = {https://dblp.org/rec/conf/soda/HenzingerUU23.bib},
  _bib2doi_selected = {dblp:/rec/conf/soda/HenzingerUU23.bib},
  _bib2doi_confirmed = {true},
}

@article{ImprovedCountingUnderContinualObservationWithPureDifferentialPrivacy,
  author = {Andersson, Joel Daniel and Pagh, Rasmus and Torkamani, Sahel},
  title = {Improved Counting under Continual Observation with Pure Differential Privacy},
  journal = {CoRR},
  volume = {abs/2408.07021},
  year = {2024},
  doi = {10.48550/arXiv.2408.07021},
  url = {https://doi.org/10.48550/arXiv.2408.07021},
  eprinttype = {arXiv},
  eprint = {2408.07021},
  timestamp = {Mon, 23 Sep 2024 01:00:00 +0200},
  biburl = {https://dblp.org/rec/journals/corr/abs-2408-07021.bib},
  bibsource = {dblp computer science bibliography, https://dblp.org},
  _bib2doi_selected = {dblp:/rec/journals/corr/abs-2408-07021.bib},
  _bib2doi_confirmed = {true},
}

@inproceedings{PanPrivateAlgorithmsViaStatisticsOnSketches,
  author = {Mir, Darakhshan J. and Muthukrishnan, S. and Nikolov, Aleksandar and Wright, Rebecca N.},
  title = {Pan-private Algorithms via Statistics on Sketches},
  booktitle = {Proceedings of the Thirtieth ACM SIGMOD-SIGACT-SIGART Symposium on Principles of Database Systems (PODS)},
  pages = {37--48},
  year = {2011},
  doi = {10.1145/1989284.1989290},
  url = {https://doi.org/10.1145/1989284.1989290},
  timestamp = {Thu, 19 Feb 2026 00:00:00 +0100},
  biburl = {https://dblp.org/rec/conf/pods/MirMNW11.bib},
  bibsource = {dblp computer science bibliography, https://dblp.org},
  _bib2doi_selected = {dblp:/rec/conf/pods/MirMNW11.bib},
  _bib2doi_confirmed = {true},
}

@article{DifferentiallyPrivateMisraGries,
  author = {Lebeda, Christian Janos and Tet{\v e}k, Jakub},
  title = {Better Differentially Private Approximate Histograms and Heavy Hitters using the {Misra-Gries} Sketch},
  journal = {ACM Transactions on Database Systems},
  volume = {50},
  number = {3},
  pages = {9:1--9:26},
  year = {2025},
  doi = {10.1145/3716375},
  url = {https://doi.org/10.1145/3716375},
  timestamp = {Sat, 09 Aug 2025 01:00:00 +0200},
  biburl = {https://dblp.org/rec/journals/tods/LebedaT25.bib},
  bibsource = {dblp computer science bibliography, https://dblp.org},
  _bib2doi_selected = {dblp:/rec/journals/tods/LebedaT25.bib},
  _bib2doi_confirmed = {true},
}

@inproceedings{DifferentiallyPrivateWeightedSampling,
  author = {Cohen, Edith and Geri, Ofir and Sarl{\'o}s, Tam{\'a}s and Stemmer, Uri},
  title = {Differentially Private Weighted Sampling},
  booktitle = {Proceedings of The 24th International Conference on Artificial Intelligence and Statistics (AISTATS)},
  series = {Proceedings of Machine Learning Research},
  volume = {130},
  pages = {3068--3076},
  publisher = {PMLR},
  year = {2021},
  url = {https://proceedings.mlr.press/v130/cohen21b.html},
  timestamp = {Wed, 14 Apr 2021 01:00:00 +0200},
  biburl = {https://dblp.org/rec/conf/aistats/CohenGSS21.bib},
  bibsource = {dblp computer science bibliography, https://dblp.org},
  _bib2doi_selected = {dblp:/rec/conf/aistats/CohenGSS21.bib},
  _bib2doi_confirmed = {true},
  _bib2doi_finished = {true},
}

@inproceedings{SketchesBasedJoinSizeEstimationUnderLocalDifferentialPrivacy,
  author = {Zhang, Meifan and Liu, Xin and Yin, Lihua},
  title = {Sketches-Based Join Size Estimation Under Local Differential Privacy},
  booktitle = {Proceedings of the 40th IEEE International Conference on Data Engineering (ICDE)},
  pages = {1726--1738},
  publisher = {IEEE},
  year = {2024},
  doi = {10.1109/ICDE60146.2024.00140},
  url = {https://doi.org/10.1109/ICDE60146.2024.00140},
  timestamp = {Tue, 24 Mar 2026 00:00:00 +0100},
  biburl = {https://dblp.org/rec/conf/icde/ZhangLY24.bib},
  bibsource = {dblp computer science bibliography, https://dblp.org},
  _bib2doi_old_doi = {10.1109/ICDE60146.2024.00139},
  _bib2doi_selected = {dblp:/rec/conf/icde/ZhangLY24.bib},
  _bib2doi_confirmed = {true},
  _bib2doi_finished = {true},
}

@article{honaker2015efficient,
  title = {Efficient use of differentially private binary trees},
  author = {Honaker, James},
  journal = {Theory and Practice of Differential Privacy (TPDP)},
  volume = {2},
  pages = {26--27},
  year = {2015},
  _bib2doi_finished = {true},
}

@inproceedings{korolova2009releasing,
  title = {Releasing search queries and clicks privately},
  author = {Korolova, Aleksandra and Kenthapadi, Krishnaram and Mishra, Nina and Ntoulas, Alexandros},
  booktitle = {Proceedings of the 18th international conference on World Wide Web (WWW)},
  pages = {171--180},
  year = {2009},
  timestamp = {Tue, 06 Nov 2018 00:00:00 +0100},
  biburl = {https://dblp.org/rec/conf/www/KorolovaKMN09.bib},
  bibsource = {dblp computer science bibliography, https://dblp.org},
  doi = {10.1145/1526709.1526733},
  _bib2doi_selected = {dblp:/rec/conf/www/KorolovaKMN09.bib},
  _bib2doi_confirmed = {true},
}

@inproceedings{cormode2012differentially,
  title = {Differentially private summaries for sparse data},
  author = {Cormode, Graham and Procopiuc, Cecilia and Srivastava, Divesh and Tran, Thanh TL},
  booktitle = {Proceedings of the 15th International Conference on Database Theory},
  pages = {299--311},
  year = {2012},
  timestamp = {Sun, 06 Oct 2024 01:00:00 +0200},
  biburl = {https://dblp.org/rec/conf/icdt/CormodePST12.bib},
  bibsource = {dblp computer science bibliography, https://dblp.org},
  doi = {10.1145/2274576.2274608},
  _bib2doi_selected = {dblp:/rec/conf/icdt/CormodePST12.bib},
  _bib2doi_confirmed = {true},
  _bib2doi_finished = {true},
}

@article{aumuller2022representing,
  title = {Representing sparse vectors with differential privacy, low error, optimal space, and fast access},
  author = {Aum{\"u}ller, Martin and Lebeda, Christian Janos and Pagh, Rasmus},
  journal = {Journal of Privacy and Confidentiality},
  volume = {12},
  number = {2},
  year = {2022},
  timestamp = {Sat, 25 Feb 2023 00:00:00 +0100},
  biburl = {https://dblp.org/rec/journals/jpc/LebedaAP22.bib},
  bibsource = {dblp computer science bibliography, https://dblp.org},
  doi = {10.29012/jpc.809},
  _bib2doi_selected = {dblp:/rec/journals/jpc/LebedaAP22.bib},
  _bib2doi_confirmed = {true},
}

@article{li2015matrix,
  title = {The matrix mechanism: optimizing linear counting queries under differential privacy},
  author = {Li, Chao and Miklau, Gerome and Hay, Michael and McGregor, Andrew and Rastogi, Vibhor},
  journal = {The VLDB journal},
  volume = {24},
  number = {6},
  pages = {757--781},
  year = {2015},
  publisher = {Springer},
  timestamp = {Mon, 11 Sep 2017 01:00:00 +0200},
  biburl = {https://dblp.org/rec/journals/vldb/LiMHMR15.bib},
  bibsource = {dblp computer science bibliography, https://dblp.org},
  doi = {10.1007/s00778-015-0398-x},
  _bib2doi_selected = {dblp:/rec/journals/vldb/LiMHMR15.bib},
  _bib2doi_confirmed = {true},
}

@article{kotsogiannis2019privatesql,
  title = {{PrivateSQL}: a differentially private sql query engine},
  author = {Kotsogiannis, Ios and Tao, Yuchao and He, Xi and Fanaeepour, Maryam and Machanavajjhala, Ashwin and Hay, Michael and Miklau, Gerome},
  journal = {Proceedings of the VLDB Endowment},
  volume = {12},
  number = {11},
  pages = {1371--1384},
  year = {2019},
  publisher = {VLDB Endowment},
  timestamp = {Tue, 16 Aug 2022 01:00:00 +0200},
  biburl = {https://dblp.org/rec/journals/pvldb/KotsogiannisTHF19.bib},
  bibsource = {dblp computer science bibliography, https://dblp.org},
  doi = {10.14778/3342263.3342274},
  _bib2doi_selected = {dblp:/rec/journals/pvldb/KotsogiannisTHF19.bib},
  _bib2doi_confirmed = {true},
}

@inproceedings{Dvijotham2024efficient,
  author = {Krishnamurthy Dj Dvijotham and H. Brendan McMahan and Krishna Pillutla and Thomas Steinke and Abhradeep Thakurta},
  title = {Efficient and Near-Optimal Noise Generation for Streaming Differential Privacy},
  booktitle = {65th {IEEE} Annual Symposium on Foundations of Computer Science, {FOCS} 2024, Chicago, IL, USA, October 27-30, 2024},
  pages = {2306--2317},
  publisher = {{IEEE}},
  year = {2024},
  url = {https://doi.org/10.1109/FOCS61266.2024.00135},
  doi = {10.1109/FOCS61266.2024.00135},
  timestamp = {Mon, 09 Dec 2024 00:00:00 +0100},
  biburl = {https://dblp.org/rec/conf/focs/DvijothamMP0T24.bib},
  bibsource = {dblp computer science bibliography, https://dblp.org},
  _bib2doi_selected = {dblp:/rec/conf/focs/DvijothamMP0T24.bib},
  _bib2doi_confirmed = {true},
}

@inproceedings{andersson2025multiple,
author = {Andersson, Joel Daniel and Retschmeier, Lukas and Nelson, Boel and Pagh, Rasmus},
title = {Private lossless multiple release},
year = {2025},
publisher = {JMLR.org},
booktitle = {Proceedings of 42nd International Conference on Machine Learning (ICML)},
articleno = {59},
numpages = {20},
location = {Vancouver, Canada},
series = {ICML'25}
}

@article{Koufogiannis2017gradual, 
title={Gradual Release of Sensitive Data under Differential Privacy}, volume={7}, url={https://journalprivacyconfidentiality.org/index.php/jpc/article/view/649}, DOI={10.29012/jpc.v7i2.649}, number={2}, journal={Journal of Privacy and Confidentiality}, author={Koufogiannis, Fragkiskos and Han, Shuo and Pappas, George J.}, year={2017}, month={Jan.} }

\newpage
\appendix
\section{Preliminaries Omitted from Paper Body}
\label{app: Preliminaries}

{\bf Machine model.} Our data structure is presented for a variant of the standard Word RAM model with word size at least $\max(\log n,\log T, \log B)$.
In particular, this model allows rank-and-select queries on machine words to be answered in constant time.
To simplify the exposition, we assume that the Word RAM is augmented with the ability to represent and add real-valued Gaussians, and that fully random hash functions are available.
This This Gaussian-realization assumption can likely be removed by making use of discrete Gaussians~\citep{TheDiscreteGaussianforDifferentialPrivacy}.

\begin{definition}[Differential Privacy, \cite{TheAlgorithmicFoundationsofDifferentialPrivacy}]
    A randomized mechanism $\mathcal{M}: \mathcal{X}^c \rightarrow \mathcal{Y}$ is $(\epsilon, \delta)$-differentially private for $\epsilon > 0$ and $\delta \geq 0$ if for all subsets of outputs $\mathcal{S} \subseteq \mathcal{Y}$ and for all neighboring $x,x'\in \mathcal{X}^c$ such that $\norm{x-x'}_1 \leq 1$ it holds that:
    \begin{align*}
        \Pr[\mathcal{M}(x) \in \mathcal{S}] \leq \exp{\epsilon} \Pr[\mathcal{M}(x') \in \mathcal{S}] + \delta
    \end{align*}
    $\mathcal{M}$ satisfies approximate differential privacy when $\delta > 0$ and pure differential privacy when $\delta=0$.
    \label{def:Differential Privacy}
\end{definition}

When referring to $\rho$-zCDP (or zCDP), we refer to the definition of \cite{Bun2026ConcentratedDP}:
\begin{definition}[$\rho$-Zero-Concentrated Differential Privacy ($\rho$-zCDP), \cite{Bun2026ConcentratedDP}]
    A randomized mechanism $M: \mathcal{X}^c \rightarrow \mathcal{Y}$ is $(\varepsilon,\rho)$-zCDP if for all neighboring $x, x' \in \mathcal{X}^c$ such that $x$ and $x'$ only differ in one element and for all $\alpha \in \lr{1, \infty}$:
    \begin{align*}
        D_{\alpha}(M(x)||M(x')) \leq \varepsilon + \rho \alpha
    \end{align*}
    where $D_{\alpha}(M(x)||M(x'))$ is the $\alpha$-R\'enyi divergence between the distribution $M(x)$ and the distribution $M(x')$. \\
    A randomized mechanism $M: \mathcal{X}^c \rightarrow \mathcal{Y}$ is $\rho$-zCDP if it is $(0,\rho)$-zCDP.
\end{definition}

Moreover, we characterize the Gaussian Mechanism under zCDP:
\begin{definition}[Gaussian Mechanism, \cite{Bun2026ConcentratedDP}]
    A function $q: \mathcal{X}^n \rightarrow \R$ has sensitivity $\Delta_2$ if $\forall x,x' \in \mathcal{X}^n$ differing in a single entry it holds that $|q(x) - q(x')| \leq \Delta_2$. Let $q$ be a sensitivity-$\Delta_2$ query. 
    Then the mechanism $\M \; : \; \mathcal{X}^n \rightarrow \R$ that releases a sample from $\MN{q(x), \sigma^2}$ satisfies $\lr{\Delta_2^2/(2\sigma^2)}$-zCDP.
    \label{Lemma: Gaussian Mechanism for zCDP}
\end{definition}

\Cref{Lemma: Gaussian Mechanism for zCDP} generalizes to functions with values in $\R^k$, in which case the sensitivity bound is $\|q(x) - q(x')\|_2 \leq \Delta_2$ and the mechanism releases $\MN{q(x), \sigma^2 I_k}$, where $I_k$ is the identity matrix.

\section{Details for Section~\ref{sec:continual-counting}}\label{app:binary-mechanism-analysis}

Let $L$ be the linear map that takes a vector $y$ of dimension $2T-1$ indexed by the vertices in the complete binary tree such that for $t=0,\dots,T-1$ we have $(Ly)_t = \sum_{v\in P_t} y_v$.
As above we assign numerical indices $v = 0,\dots,T-1$ to the leaves and use indices $v = T,\dots,2T-2$ for the internal nodes.
Next, we describe a linear map $R$ such that the composition $LR$ is the all-ones lower triangular matrix, i.e., $(LRx)_t = A^{(t)}$ .
Define the \emph{sign}, denoted $\text{sign}(v)$, of a non-root node $v$ to be $-1$ if $v$ is a left child and $+1$ if $v$ is a right child, and let $\text{sib}(v)$ denote the sibling of a non-root node $v$.
\[
(Rx)_v = \begin{cases}
    \tfrac{1}{2} \sum_{t=0}^{T-1}x_t \text { if $v$ is the root node }\\
    \tfrac{1}{2} \, \text{sign}(v)\, \sum_{t: \text{sib}(v)\in P_t} x_t \text { if $v$ is a non-root internal node }\\
    \tfrac{1}{2} \, (x_v + \text{sign}(v)\, x_{\text{sib}(v)}) \text { if $v$ is a leaf }\\
\end{cases} \enspace .
\]
That is, the root node is associated with the sum of all inputs multiplied by $\tfrac{1}{2}$, each internal node $v$ is associated with the sum of all inputs corresponding to leaves in its \emph{sibling's} subtree multiplied by $\tfrac{1}{2} \, \text{sign}(v)$, and each leaf node is associated with $\tfrac{1}{2}$ times either a sum or a difference of sibling values.
By definition $(LRx)_t = (L(Rx))_t = \sum_{v\in P_t} (Rx)_v$.
We argue by induction on the depth that for every internal node $w$,
\begin{equation}\label{eq:pathsum}
\sum_{v\in P_w} (Rx)_{v} = \left(\sum_{t=0}^{\max\{i: w\in P_i\}} x_t\right) - \left(\tfrac{1}{2} \sum_{t: w\in P_t} x_t\right) \enspace .
\end{equation}
This is clearly true when $w$ is the root node. For the induction step we assume that the statement is true for $w$'s parent and see that adding $(Rx)_w$ exactly matches the claimed difference.
In turn, for a leaf node $t$ with parent $w$ we have, using (\ref{eq:pathsum}),
\begin{alignat*}{2}
\sum_{v\in P_t} (Rx)_{v}
& = \left(\sum_{v\in P_w} (Rx)_{v}\right) + \tfrac{1}{2}\,(x_t + \text{sign}(t)\,x_{\text{sib}(t)})\\
& = \left(\sum_{i=0}^{\max\{t: w\in P_t\}} x_i\right) - \left(\tfrac{1}{2} \sum_{i: w\in P_i} x_i \right) + \tfrac{1}{2} \, (x_t + \text{sign}(t)\, x_{\text{sib}(t)})\\
& = A^{(t)},
\end{alignat*}
where the last equality is obtained by checking the cases where $t$ is a left and a right child, respectively.
With the linear maps $L$ and $R$ defined, we see that the mechanism can be written in the form of a factorization mechanism:
\(
\tilde{A}^{(t)} = L (Rx + N)_t
\),
where $N \sim\mathcal{N}(0,\sigma^2)^{2T-1}$.
To analyze the privacy we first note that the $\ell_2$-sensitivity of $Rx$ is $\Delta_2(R) = \tfrac{1}{2}\sqrt{\log_2(T)+2}$.
This is because changing one input $x_i$ by at most $1$ changes exactly $(\log_2(T)+2)$ values $(Rx)_v$ by at most $1/2$, one changed node for each internal level and two at the leaf level.

We can now apply the standard zCDP bound for the Gaussian mechanism to $Rx$.
Releasing $Rx+N$ with $N\sim \mathcal{N}(0,\sigma^2)^{2T-1}$ satisfies $\rho$-zCDP with
$\rho \;=\; \Delta_2(R)^2/(2\sigma^2) \;=\; (\log_2(T)+2)/(8\sigma^2)$,
and since $L(Rx+N)$ is a (deterministic) post-processing of $Rx+N$, the released estimates satisfy the same privacy guarantee \citep{Bun2026ConcentratedDP}.
Equivalently, we can choose $\sigma^2 \;=\; (\log_2(T)+2)/(8\rho)$, resulting in
$\text{Var}[\tilde{A}^{(t)}] \;=\; (\log_2(T)+1)\sigma^2 < (\log_2(T)+2)^2/(8\rho)$.
This matches the asymptotic variance of the smooth Gaussian mechanism but reduces a lower-order term that is significant when $T$ is small.

\section{Details for Section \ref{sec:FastGaMe}}
\label{app:fastgame}

\subsection{Proof of Theorem~\ref{thm: Constant look-up time}}
\begin{proof}
Let $h=\log_2(T)$, and index levels on a root-to-leaf path by
$0,1,\ldots,h$, where level $0$ is the root.
For a leaf $t$, let $b(t)=\text{bin}(t-1)$ be its $h$-bit label, and let
$b(t)_{\leq j}$ denote its prefix of length $j$.
In the Gaussian binary-tree mechanism, each node $s\in\{0,1\}^{\leq h}$ has an independent noise
$\eta_s\sim \MN{0,\sigma^2}$, and the released noise at time $t$ is
\[
    \nu^{(t)}=\sum_{j=0}^h \eta_{b(t)_{\leq j}} .
\]
It is useful to write the partial sums on the active path as
\[
    S_t(j)=\sum_{r=0}^j \eta_{b(t)_{\leq r}} \qquad\text{for } 0\leq j\leq h .
\]
Then $\nu^{(t)}=S_t(h)$, $\text{Var}[\nu^{(t)}]=(h+1)\sigma^2$, and if $u$ is the longest common prefix of $b(t_1)$ and $b(t_2)$, then
\[
    \text{Cov}(\nu^{(t_1)},\nu^{(t_2)})=(|u|+1)\sigma^2 .
\]

The data structure is $DS=\langle l,w^{(l)},P^{(l)}\rangle$.
Here $l$ is the last queried time, $w^{(l)}\in\{0,1\}^{h+1}$ is a bit vector indexed by levels $0,\ldots,h$, and if $w^{(l)}_j=1$ then
$P^{(l)}_j$ is the already sampled value of $S_l(j)$ on the path to $b(l)$.
The bit vector fits in one machine word, and $P^{(l)}$ contains $h+1=\BU{\log T}$ words.
We maintain the invariant that, after processing any query sequence, the joint distribution of all released noise values together with all currently valid stored values $P^{(l)}_j$ is exactly the joint distribution of the corresponding variables in the Gaussian binary-tree mechanism.

Initially, no values are stored, and the invariant is trivial.
Consider a new query at time $t$.
If $l=t$, then the leaf has not changed, and we return the already stored value $P^{(l)}_h$.
Assume first that this is the first query.
We sample
\[
    P^{(t)}_0 \sim \MN{0,\sigma^2},
    \qquad
    P^{(t)}_h = P^{(t)}_0 + Z,
    \qquad
    Z\sim \MN{0,h\sigma^2},
\]
with $Z$ independent of $P^{(t)}_0$.
We set $w^{(t)}_0=w^{(t)}_h=1$, set $l=t$, and return $P^{(t)}_h$.
This is exactly the joint distribution of $(S_t(0),S_t(h))$, so the invariant holds.

It remains to handle the case where the previous queried time is $l\neq t$.
Let $u$ be the longest common prefix of $b(l)$ and $b(t)$, and set $m=|u|$.
The old and new active paths are identical up to level $m$ and disjoint below level $m$.
If $w^{(l)}_m=1$, the shared value $S_t(m)=S_l(m)$ is already stored as $P^{(l)}_m$.
Otherwise, let
\[
    a=\max\{j<m: w^{(l)}_j=1\},\qquad
    c=\min\{j>m: w^{(l)}_j=1\}.
\]
The sentinels $w^{(l)}_0=w^{(l)}_h=1$ ensure that these values exist whenever $w^{(l)}_m=0$.
By the Word RAM assumption, $a$ and $c$ are found in $\BU{1}$ worst-case time using rank/select.
Conditioned on $P^{(l)}_a=S_l(a)$ and $P^{(l)}_c=S_l(c)$, the Gaussian random walk value at level $m$ has the Brownian-bridge distribution
\[
    P^{(l)}_m \sim
    \MN{
        P^{(l)}_a + \frac{m-a}{c-a}\bigl(P^{(l)}_c-P^{(l)}_a\bigr),
        \frac{(m-a)(c-m)}{c-a}\sigma^2
    } .
\]
We sample from this conditional distribution, store the result in $P^{(l)}_m$, and set $w^{(l)}_m=1$.
This is the exact conditional law of the missing shared prefix sum in the Gaussian tree, so the induction invariant is preserved.

Below the node $u$, the new path to $b(t)$ uses tree nodes that have never appeared on any previous active path.
Indeed, at the first bit after $u$, the path to $b(l)$ goes left and the path to $b(t)$ goes right; since query times are increasing, every earlier queried leaf lies in the left subtree.
Thus, conditioned on $P^{(l)}_m$, the remaining sum on the new branch is independent of all previously released and stored values.
We sample
\[
    P^{(t)}_h = P^{(l)}_m + Z',
    \qquad
    Z'\sim \MN{0,(h-m)\sigma^2},
\]
with $Z'$ independent of the past, and release $\nu^{(t)}=P^{(t)}_h$.
Again, this is the exact conditional distribution of $S_t(h)$ given the variables already represented in the invariant.

Finally, we update the valid bits.
All valid levels $j \leq m$ remain valid because they lie on the shared prefix of the old and new paths; levels $j > m$ from the old path are no longer on the active path and are marked invalid, except that we set level $h$ valid for the newly sampled leaf value.
This is a constant number of word operations on $w$, and the only numerical samples drawn are a constant number of Gaussians.
Therefore, each query is processed in $\BU{1}$ worst-case time, and the structure uses $\BU{\log T}$ words.

By induction, the complete sequence of released values has the same joint distribution as in the Gaussian binary-tree mechanism.
\end{proof}

\newpage

\subsection{Pseudocode}
\label{Appendix: Pseudocode}

\begin{algorithm}[h!]
\caption{\textsc{NoiseUpdate}$(DS_i, t, \sigma^2, T)$}
\label{alg:lazy-noise}
\begin{algorithmic}[1]
\Require
  Data structure for coordinate $i$ at last query:
  $DS_i=(l_i, w^{(i,l_i)}, P^{(i)})$,
  where $h=\log_2(T)$, $w^{(i,l_i)}\in\{0,1\}^{h+1}$,
  and $P^{(i)}\in\mathbb{R}^{h+1}$ are indexed by levels $0,\ldots,h$.
\Ensure
  Updated $DS_i'=(l_i', w^{(i,t)}, P^{(i)})$ and noise value
  $\nu^{(t)}_i=P^{(i)}_{h}$.
\State {\bf Invariant:} If $w_j^{(i,l_i)} = 1$ then $P_j^{(i)}$ is the sampled
level-$j$ prefix sum on the active path to $\mathrm{bin}(l_i-1)$.

\State $h\gets \log_2(T)$ \Comment{levels are $0,\ldots,h$; leaf labels have $h$ bits}
\If{$l_i$ is \textbf{unset}} \Comment{first query to coordinate $i$}

    \State $w^{(i,t)} \gets 0^{h+1}$
    \State Sample $P^{(i)}_{0} \sim \mathcal{N}(0,\,\sigma^2)$
    \State Sample $Z \sim \mathcal{N}(0,\,h\sigma^2)$
    \State $P^{(i)}_{h} \gets P^{(i)}_{0}+Z$
    \State $w^{(i,t)}_0 \gets 1$, $w^{(i,t)}_h \gets 1$
    \Comment{invariant preserved since $P^{(i)}_{0}$ and $P^{(i)}_{h}$ have been set}
    \State $l_i' \gets t$
    \State \Return $(DS_i'=(l_i',w^{(i,t)},P^{(i)}),\ \nu^{(t)}_i=P^{(i)}_h)$
\EndIf
\If{$l_i = t$} \Comment{already queried at this time}
    \State \Return $(DS_i,\ \nu^{(t)}_i=P^{(i)}_h)$
\EndIf
\State $u \gets \mathrm{bin}(l_i-1)$,\quad $v \gets \mathrm{bin}(t-1)$
\State $m \gets |\mathrm{LCP}(u,v)|$ \Comment{old and new active paths agree through level $m$}
\State $w^{(i,t)} \gets w^{(i,l_i)}$
\If{$w^{(i,l_i)}_m = 0$} \Comment{the shared prefix value has not yet been sampled}
    \State $a \gets \text{pred}_{w^{(i,l_i)}}(m)$,\quad
           $c \gets \text{succ}_{w^{(i,l_i)}}(m)$
    \Comment{nearest valid levels strictly above/below $m$}
    \State $\mu \gets P^{(i)}_{a}
      + \frac{m-a}{c-a}\bigl(P^{(i)}_{c}-P^{(i)}_{a}\bigr)$
    \State $\tau^2 \gets \frac{(m-a)(c-m)}{c-a}\,\sigma^2$
    \State Sample $P^{(i)}_{m} \sim \mathcal{N}(\mu,\,\tau^2)$
      \Comment{Brownian-bridge step conditioning on endpoints}
    \State $w^{(i,t)}_m \gets 1$
      \Comment{shared prefix value is now valid}
\EndIf
\State Sample $Z' \sim \mathcal{N}(0,\,(h-m)\sigma^2)$
\State $P^{(i)}_{h} \gets P^{(i)}_{m}+Z'$
    \Comment{sample new independent tail from level $m$ to leaf $\mathrm{bin}(t-1)$}
\For{$j=m+1$ \textbf{to} $h-1$}\Comment{can be implemented in $\BU{1}$ time with a bit-mask}
    \State $w^{(i,t)}_j \gets 0$
\EndFor
\State $w^{(i,t)}_h \gets 1$
\Comment{invariant preserved since $P^{(i)}_{m}$ and $P^{(i)}_{h}$ have been set}
\State $l_i' \gets t$
\State \Return $(DS_i'=(l_i',w^{(i,t)},P^{(i)}),\ \nu^{(t)}_i=P^{(i)}_h)$
\end{algorithmic}
\end{algorithm}

\section{Details for Section \ref{sec:private-ortogonal-range-queries}} \label{app: Range Counting}

\subsection{Proof of Lemma~\ref{Lemma: Space of one instance of PCS}} \label{app: lemma space one PCS}
Before proving the lemma, we start by introducing a result of \cite{PCS}, which shows that the error of the output of a Private CountSketch (denoted $\Tilde{x}_a$ for any $a \in [n]$) can be bounded in a way that extends the analysis of regular CountSketch.
Recall that $\text{tail}_b(y)$ is the vector obtained from removing the $b$ largest coordinates in absolute value from a vector $y$.
Then in our notation:
\begin{lemma}[\cite{PCS}, Theorem 3.3]
    For every $\alpha \in [0,1]$ and every $a \in [n]$, the estimation error of Private CountSketch with $t$ repetitions, row size $b$ and noise from $\mathcal{N}(0,\sigma^2)$ satisfies:
    \begin{align*}
        \Pr\lrs{|\Tilde{x}_a - x_a| > \alpha\max\lrc{\varphi, \sigma}} < 2\exp\lr{-\BL{\alpha^2 t}}
    \end{align*}
    where $\varphi = \norm{\text{tail}_b(y)}/ \sqrt{b}$.
    \label{Lemma: Private CountSketch estimation}
\end{lemma}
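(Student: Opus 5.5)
The plan is the standard ``median of symmetric, anti-concentrated row errors'' argument in the style of Minton and Price~\citep{ImprovedConcentrationBoundsCountSketch}, with the Gaussian privacy noise used to supply anti-concentration. Write $s=\max\{\varphi,\sigma\}$. Fix the queried coordinate $a$ and, for each repetition $i$, define the row error
\[
E_i = s_i(a)\,\widetilde{\text{CS}}_i(x)_{h_i(a)} - x_a = \sum_{c\neq a,\ h_i(c)=h_i(a)} s_i(a)s_i(c)\,x_c + s_i(a)\,(\nu_i)_{h_i(a)} .
\]
The $E_i$ are independent across $i$, since hash functions and noise are independent. Each $E_i$ is symmetric about $0$, because of the random sign $s_i(a)$ and the symmetric Gaussian noise.

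Since $\tilde x_a - x_a = \text{median}_i E_i$, the event $\tilde x_a - x_a > \alpha s$ requires at least $t/2$ of the $E_i$ to exceed $\alpha s$, and symmetrically for the lower tail. It therefore suffices to prove the one-row bound
\[
\Pr[E_i > \alpha s] \le \tfrac12 - c\,\alpha \qquad \text{for all } \alpha\in[0,1]
\]
and some constant $c>0$. Hoeffding's inequality on the indicators then gives probability $\exp(-\Omega(\alpha^2 t))$ for each side, and a union bound over the two tails gives the factor $2$. By symmetry, the one-row bound is equivalent to the small-ball estimate $\Pr[|E_i|\le \alpha s]\ge 2c\alpha$. (For $\alpha$ near $1$ the statement already follows from a constant-probability bound $\Pr[|E_i|\le s]\ge 1/2+c$, but the small-ball form covers all $\alpha$ uniformly.)

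To prove the small-ball estimate, I would split $E_i = Y_i + G_i$, where $Y_i$ is the hashing error and $G_i\sim\MN{0,\sigma^2}$ is independent of $Y_i$. Let $H$ be the set of the $b$ largest coordinates of $x$ other than $a$; I would use a constant-factor smaller heavy set if the constants require it. Further split $Y_i$ into a heavy part (collisions with $H$) and a tail part. The probability that some element of $H$ collides with $a$ can be made at most a small constant by adjusting constants in $b$. The tail part has variance at most $\norm{\text{tail}_b(x)}_2^2/b=\varphi^2$, so Chebyshev gives $|Y_i|\le C s$ with probability at least $3/4$.

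\textbf{Case $\sigma\ge \varphi$.} Conditioned on any value of $Y_i$ with $|Y_i|\le C\sigma$, the density of $Y_i+G_i$ on $[-\alpha\sigma,\alpha\sigma]$ is at least $e^{-(C+1)^2/2}/(\sigma\sqrt{2\pi})$. Hence $\Pr[|E_i|\le\alpha\sigma]\ge \tfrac34\cdot\Omega(\alpha)$, which is what we need.

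\textbf{Case $\varphi>\sigma$.} The Gaussian noise may be too narrow to help, and anti-concentration has to come from the sketch itself. Here I would invoke, or adapt, the Minton--Price lemma: conditioned on no heavy collision, the tail sum $\sum s_i(c)x_c$ over colliding tail coordinates satisfies $\Pr[|Y_i|\le \alpha\varphi]\ge \Omega(\alpha)$. Their argument splits the tail into coordinates that are ``large'' relative to $\varphi$, which collide rarely and are handled by a direct counting argument, and many small ones, whose random-sign sum is approximately Gaussian by Berry--Esseen. Adding the independent symmetric $G_i$ preserves this small-ball bound up to constants, because convolving with a symmetric unimodal density cannot decrease the mass of a symmetric interval by more than a constant factor once $\sigma\le\varphi$. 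I expect this case to be the main obstacle, since it is exactly where the fine $\alpha$-dependence needs the nontrivial anti-concentration of CountSketch rather than a moment bound. If adapting Minton--Price cleanly proves difficult, the fallback is to establish only the constant-$\alpha$ version via the Chebyshev step above, which still yields $\exp(-\Omega(t))$ for $\alpha=\Theta(1)$ and suffices for \Cref{Lemma: Space of one instance of PCS}.
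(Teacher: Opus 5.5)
The paper gives no proof of this statement. It is imported from \citet{PCS}, which extends the Minton--Price analysis of CountSketch, so there is no internal argument to compare routes with.

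\textbf{What is right, and the main gap.} Your reduction is correct: the row errors $E_i$ are independent and symmetric, and $\tilde x_a - x_a > \alpha s$ forces at least half of them above $\alpha s$. That leaves a small-ball bound $\Pr[|E_i|\le \alpha s]\ge c\alpha$ followed by Hoeffding, and your case $\sigma\ge\varphi$ is fine. The gap is the case $\varphi>\sigma$. It carries the whole $\alpha$-dependence, and you only defer it to ``invoke or adapt Minton--Price''. The mechanism you sketch for it does not close.
\begin{itemize}
\item For Berry--Esseen to be accurate to $O(\alpha)$ at scale $\alpha\varphi$, the large/small threshold must be about $\alpha\varphi$.
\item But the tail can contain about $b/\alpha^2$ coordinates of that size (take them all equal to $\alpha\varphi$). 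The probability that none of them collides with $a$ is then $e^{-\Omega(1/\alpha^2)}\ll\alpha$.
\item In that example the mass $\Theta(\alpha)$ near $0$ comes from a lattice local-limit effect, not from avoiding collisions. A higher threshold instead leaves Berry--Esseen with $\Omega(1)$ error.
\end{itemize}

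\textbf{The missing idea is Fourier positivity.} Condition on $(h_i(a),s_i(a))$. Then $E_i$ is a sum of independent terms $s_i(a)s_i(c)x_c\mathbf{1}[h_i(c)=h_i(a)]$, each with characteristic function $1-\frac{1}{b}+\frac{1}{b}\cos(\theta x_c)\ge 1-\frac{2}{b}\ge 0$ for $b\ge 2$. Add to this a Gaussian with characteristic function $e^{-\sigma^2\theta^2/2}$, and $E_i$ has a nonnegative characteristic function. Test against the triangle function $g_\epsilon(u)=\max\{0,1-|u|/(2\epsilon)\}$, whose Fourier transform is nonnegative. This gives $\mathbb{E}\,g_\epsilon(E_i-y)\le\mathbb{E}\,g_\epsilon(E_i)$, hence $\Pr[|E_i-y|\le\epsilon]\le 2\Pr[|E_i|\le 2\epsilon]$ for every $y$. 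Now cover $[-Cs,Cs]$ by $O(1/\alpha)$ intervals of radius $\alpha s/2$ and use $\Pr[|E_i|\le Cs]\ge c_0$. This yields $\Pr[|E_i|\le\alpha s]=\Omega(\alpha)$ in one step, with no case split and with heavy coordinates and noise both included.

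\textbf{Smaller points.}
\begin{itemize}
\item With $|H|=b$ and row size $b$, the probability that some heavy coordinate collides with $a$ is at least $1-1/e$. It cannot be made small: $b$ is fixed by the statement, and shrinking $H$ replaces $\varphi$ by a larger tail norm. It need not be small, though. The bound $(1-1/b)^b\ge 1/4$ for $b\ge 2$, independent of the tail and the noise, gives $\Pr[|E_i|\le 4s]\ge 9/64$. By symmetry any constant suffices, so your requirement of $1/2+c$ is more than needed.
\item For $b=1$ the statement actually fails: a single coordinate $x_c=M\gg\sigma$ gives $\varphi=0$ but median error $\approx M$.
\item Your justification for adding $G_i$ in the hard case is wrong as stated. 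An atom of $Y_i$ at $0$, convolved with $\mathcal{N}(0,\sigma^2)$, keeps only a fraction $\Theta(\alpha\varphi/\sigma)$ of its mass in $[-\alpha\varphi,\alpha\varphi]$, which is not a constant fraction. The step can be repaired by using the small-ball bound for $Y_i$ at scale $\sigma$, but Fourier positivity makes it unnecessary.
\item Your Chebyshev fallback proves the bound only for some constant $\alpha=C>1$. That does suffice for the paper's sole use in \Cref{Lemma: Space of one instance of PCS}, which only claims $O(E)$ error.
\end{itemize}
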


\noindent
Based on this, we can prove \Cref{Lemma: Space of one instance of PCS}:
\begin{proof}
    Without loss of generality we can assume $E < \norm{y}_1$.
    Choosing \(
        b=\left\lceil \frac{\norm{y}_1}{2E}\right\rceil
    \),
    by \Cref{Lemma: Lower bound on tail},
    \begin{align*}
        \varphi
        = \frac{\norm{\text{tail}_b(y)}_2}{\sqrt{b}}
        \leq \frac{\norm{y}_1}{2b}
        \leq E.
    \end{align*}
    Since the lemma assumes $E\geq\sigma$, we have $\max\lrc{\varphi,\sigma}\leq E$.
    Applying \Cref{Lemma: Private CountSketch estimation} with $\alpha=1$:
    \begin{align*}
        \Pr\lrs{|\Tilde{x}_a-x_a|>E} < 2\exp\lr{-\BL{t}}.
    \end{align*}
    Setting $t=\T{\log(1/\beta)}$ makes this probability at most $\beta$.
    The row size is $b=\BU{1+\norm{y}_1/E}$, so the total space is
    \(s=tb=\BU{t(1+\norm{y}_1/E)}\)
    words.
\end{proof}

\subsection{Details and guarantees for Static Private Range Queries} \label{app: Static M}

\begin{figure}
\makebox[\linewidth]{
    \begin{tabular}{l|c|c|c}
         {\bf Reference} & {\bf Privacy} & {\bf Maximum error} & {\bf Space}\\ \hline\hline
         \citep{Xiao2011Wavelet} & $\varepsilon$-DP & $(\log B)^{1.5d+1}/\varepsilon$ & $B^d$ \\
\hline
         \citep{Chan2011ReleaseStatistics} & $\varepsilon$-DP & $(\log B)^{1.5d+1}/\varepsilon$ & $B^d$ \\
\hline
         \multirow{2}{*}{\citep{PureDPRecQueries}} & \multirow{2}{*}{$\varepsilon$-DP} & $(\log(B)+ (\log n)^{1.5d+2})/\varepsilon$ & $n^d$\\
\cline{3-4}
         &  & $(\log n)^{2d+1}\log(B)/\varepsilon$ & $n (\log n)^d$\\
\hline
         \multirow{2}{*}{\citep{OptimalPrivateHalfspaceCountingViaDiscrepancy} + \citep{FactorizationNormsAndHereditaryDiscrepancy}
        } & \multirow{2}{*}{$(\varepsilon,\tfrac{1}{100})$-DP} & \cellcolor{red!25} $\BL{(\log n)^{d-1}/\varepsilon}$ & \multirow{2}{*}{any}\\
\cline{3-3}
         &  & \cellcolor{red!25}$\BL{(\log B)^{d-1}/\varepsilon}$ & \\
        \hline
          \multirow{1}{*}{{\bf This paper}} &  $\varepsilon^2/2$-zCDP & $E\geq (\log B)^{d+2}/\varepsilon$ & $n (\log B)^{d+1}/ E$\\
          \cline{2-4}
          \hline
    \end{tabular}}
    \caption{Overview of results on static private orthogonal range counting for a set of $n$ points in $[B]^d$. %
    Lower bounds in terms of $n$ and $B$, for worst-case inputs and no restriction on the other parameter, are shown with red background, and hold in particular for $\varepsilon^2/2$-zCDP.
    For simplicity, we assume that both $d$ and the privacy parameter $\varepsilon$ are $\BU{1}$, and order-notation is suppressed.
    Maximum error upper bounds hold with probability $1-1/B$.
    All listed static data structures support an orthogonal range counting query in time $\BU{(\log B)^d}$ or $\BU{(\log B)^{d+1}}$.
    An arbitrary number of queries can be performed with query time $\BU{(\log B)^{d+1}}$.%
    }
    \Description{A table comparing privacy guarantees, maximum error, and space bounds for static private orthogonal range counting.}
    \label{figure:rangecounting-overview-static}
\end{figure}

We now consider the static mechanism $\M$ described in \Cref{sec:private-ortogonal-range-queries}.
An overview of results in the static setting and comparison with previous work is shown in \Cref{figure:rangecounting-overview-static}.
\begin{theorem}
    Let $B>1$ be a power of 2. For every constant integer $d < B$, $\beta \in (0,1/2)$, and $\rho > 0$,
    let $\tau=\log(d/\beta)+d\log\log B$ and let
    \(
        E = \BL{d\tau(\log B)^{d+1+\log_B(4/\beta)}/\sqrt{\rho}}
    \)
    denote some fixed estimation error.
    There is a mechanism that takes as input a static set of $n$ points in $[B]^d$, and outputs a data structure with the following properties:
    \begin{itemize}
        \item Satisfies $\rho$-zero-concentrated differential privacy.
        \item Space usage is $\BU{\frac{d n (\log B)^{d}\tau}{E}+1}$ words.
        \item Supports orthogonal range counting queries that
        \begin{itemize}
            \item return unbiased estimators of the true range count,
            \item have worst-case additive estimation error $\BU{E}$ with probability at least $1-\beta$, and
            \item can be answered in worst-case time $\BU{(2 \log B)^d\tau}$.
        \end{itemize}
    \end{itemize}
    \label{Theorem: Reduced space}
\end{theorem}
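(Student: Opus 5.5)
We prove \Cref{Theorem: Reduced space} with the static mechanism $\M$ of \Cref{sec:private-ortogonal-range-queries-using-CS}: one Private CountSketch per dyadic component $D$, all with the same repetition count $t=\Theta(\tau)$ and noise variance $\sigma^2$. Privacy, space, query time and unbiasedness are bookkeeping. The error bound is the real work.

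\textbf{Privacy.} A single point update changes each $y^{(D)}$ in exactly one coordinate by $1$. Across the $(\log_2 B)^d$ components this changes one counter per row of each sketch. So the whole $\widetilde{CS}$ collection, viewed as one vector, has $\ell_2$ sensitivity $\sqrt{t(\log_2 B)^d}$. By \Cref{Lemma: Gaussian Mechanism for zCDP} we get $\rho$-zCDP with
\[
\sigma^2 = \frac{t(\log_2 B)^d}{2\rho}, \qquad \text{so} \qquad \sigma = \Theta\!\left(\sqrt{\tau(\log B)^d/\rho}\right).
\]

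\textbf{Space.} Each $y^{(D)}$ has $\norm{y^{(D)}}_1\le n$. Apply \Cref{Lemma: Space of one instance of PCS} per component with a per-instance target error $E'$ (chosen below). Summing over the $(\log_2 B)^d$ instances gives $\BU{(\log B)^d t(1+n/E')}$ words. Zero-estimator replacement handles the case $E'\ge n$.

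\textbf{Query time and unbiasedness.} A query touches at most $2^d$ dyadic rectangles in each of $(\log_2 B)^d$ components, i.e.\ $(2\log_2 B)^d$ point estimates, each a median over $t$ rows. This gives time $\BU{(2\log B)^d\tau}$. Unbiasedness follows by linearity: each Private CountSketch estimate is unbiased, as stated in the text.

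\textbf{Error (main step).} The query answer is a sum of $K\le(2\log_2 B)^d$ errors $e_1,\dots,e_K$. These are independent across components, since hash functions and noise are independent per component. Within one component the $\le 2^d$ errors may be dependent, so we group them per component. I would proceed in two steps.

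First, a truncation step. Apply \Cref{Lemma: Private CountSketch estimation} with $\alpha$ a large constant, so that each $|e_j|\le \alpha\max\{\varphi,\sigma\}\le E'$ except with probability $\exp(-\Omega(t))$. Choosing $t=\Theta(\log(d/\beta)+d\log\log B)$ makes a union bound over all $K$ estimates fail with probability at most $\beta/2$. This is exactly why $\tau$ contains the $d\log\log B$ term.

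Second, a concentration step. Conditioned on the per-component hashing, each $e_j$ is a median of roughly symmetric errors, and so is approximately mean zero with sub-Gaussian tails at scale $\max\{\varphi,\sigma\}$. A Bernstein bound over the $(\log_2 B)^d$ independent component sums, each bounded by $2^dE'$, then gives a total error of
\[
\BU{2^dE'\sqrt{(\log B)^d\log(1/\beta)} + 2^dE'\log(1/\beta)}.
\]
A cruder alternative is the worst-case sum $KE'$. That yields an $E$ scaling like $(\log B)^d E'$, which is closer to the stated $(\log B)^{d+1}$ form. The extra factor $(\log B)^{\log_B(4/\beta)}$ in $E$ suggests the authors absorbed the failure-probability slack into the exponent, e.g.\ by setting $E'$ via $\beta$-dependent $\alpha$.

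I expect the main obstacle to be the centering issue. Median estimates are unbiased overall, but conditioned on the truncation event their means may shift slightly. I would handle this by showing the shift is at most $E'\exp(-\Omega(t))$, which is negligible. Failing that, I would fall back to the worst-case triangle-inequality sum with $E'=E/(2\log B)^d$. This still gives space $\BU{dn(\log B)^d\tau/E}$ up to the stated constant-$d$ factors, and requires $E'\ge\sigma$, i.e.\ $E\gtrsim (\log B)^{d}\sigma\sqrt{\cdot}$, matching the stated lower bound on $E$.
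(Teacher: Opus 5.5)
There is a genuine gap, and it is in the parameter you leave open: the per-instance target $E'$. Your space formula $\BU{(\log B)^d t(1+n/E')}$ meets the claimed $\BU{dn(\log B)^d\tau/E}$ only if $E'=\Omega(E/d)$. But a query adds up to $(2\log_2 B)^d$ estimates, each guaranteed only to within $\BU{E'}$, and neither of your aggregation steps brings the total back to $\BU{E}$.

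\begin{itemize}
\item Your Bernstein bound forces $E'\lesssim E/(2^d(\log B)^{d/2}\sqrt{\log(1/\beta)})$.
\item The fallback $E'=E/(2\log_2 B)^d$, put into the same formula, gives $\BU{2^d n(\log B)^{2d}\tau/E}$ words. That is off by $(\log B)^d$, which is not a constant depending only on $d$.
\item The fallback's side condition $E'\ge\sigma=\sqrt{t(\log_2 B)^d/(2\rho)}$ reads $E\gtrsim 2^d(\log B)^{3d/2}\sqrt{\tau/\rho}$. The stated lower bound $d\tau(\log B)^{d+1+\log_B(4/\beta)}/\sqrt{\rho}$ does not imply this once $d\ge 3$ and $\beta$ is constant.
\end{itemize}

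So your closing claim that the fallback ``still gives'' the stated space and ``matches'' the lower bound on $E$ is an arithmetic error. No concrete choice of $E'$ in the proposal delivers both the error bound and the space bound. A minor point: \Cref{Lemma: Private CountSketch estimation} is stated only for $\alpha\in[0,1]$, but $\alpha=1$ suffices.

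The paper ties the per-estimate scale to $\sigma$, not to $E$. It takes row size $b=\Theta(n/\sigma)$, so each estimate errs by $\BU{\sigma}$ except with probability $\exp(-\Omega(t))$. It clips each component's aggregate error $X_i$ symmetrically to $[-\sigma,\sigma]$ and uses symmetry of $X_i$ to get $\E[Y_i]=0$ exactly. That is its answer to your centering concern.

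It then applies Bernstein with $\sum_i\E[Y_i^2]=\BU{4^d t(\log B)^{2d}/\rho}$ and threshold $t(\log B)^{c+d+1}/\sqrt{\rho}$, where $c=\log_B(4/\beta)$ makes the failure probability $2B^{-c}=\beta/2$. This is where the factor $(\log B)^{\log_B(4/\beta)}$ comes from; a $\beta$-dependent $\alpha$ plays no role. It is also the Bernstein route with $E'=\sigma$, not the worst-case sum, that gives a bound of the stated form. The worst-case sum with $E'=\sigma$ gives $(\log B)^{3d/2}$, the prior-work exponent.

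Be aware, though, that the paper's final space count plugs in $b=\BU{1+n/E}$ from \Cref{Lemma: Space of one instance of PCS}. That is not the $b=\Theta(n/\sigma)$ used in the error analysis of \Cref{Lemma: Bernstein bound}. With $b=\Theta(n/\sigma)$ the structure uses $\BU{(\log B)^d\tau n/\sigma}$ words. This exceeds the stated bound by a factor $\Theta(E/(d\sigma))$, which is at least polylogarithmic in $B$. So the tension you ran into between per-estimate error and total error is real, and the paper's written argument does not resolve it for the space claim either.
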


We summarize the mechanism $\M$ from \Cref{sec:private-ortogonal-range-queries} here:
The mechanism outputs $(\log_2 B)^d$ Private CountSketches, each representing point counts in a partition of $[B]^d$ into dyadic rectangles.
Each sketch is parameterized with variance parameter $\sigma^2 = t(\log B)^d/(2\rho)$, row size $b = \BU{n/\sigma}$, and $t$ repetitions (to be specified).
Using composition over all sketches this ensures $\rho$-zero-concentrated differential privacy.
Range counting queries are answered by summing at most $(2\log_2 B)^d$ estimates, at most $2^d$ from each sketch.
Let $\hat{X}_i$ denote the sum of estimates from sketch $i$, in some arbitrary numbering of the sketches.
Since Private CountSketch is unbiased, $\E[\hat{X}_i]$ equals the true number of points in the queried dyadic rectangles.
The following lemma bounds the error of a query.
\begin{lemma}\label{Lemma: Bernstein bound}
    Consider any specific range counting query and let $X_i = \hat{X}_i - \E[\hat{X}_i]$ be the signed estimation error from the $i$th Private CountSketch such that the total error is $|\sum_i X_i|$.
    Then for any choice of $c$, for sufficiently large $B$, row size $b = \BU{n/\sigma}$, and $t = \T{c\log B + d\log\log B}$ the total error satisfies:
    \begin{align}\label{eq:bernstein-bound}
        \Pr\lrs{\lra{\sum_i X_i} \geq \frac{t(\log B)^{c+d+1}}{\sqrt{\rho}}} \leq 2B^{-c} \enspace .
    \end{align}
\end{lemma}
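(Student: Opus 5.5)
The plan is to treat the $X_i$ as independent, zero-mean random variables, one per Private CountSketch, and control each with the tail bound of \Cref{Lemma: Private CountSketch estimation}. Then Bernstein's inequality for sub-exponential (or bounded, after truncation) variables gives concentration of $\sum_i X_i$. Independence holds because the $(\log_2 B)^d$ sketches use independent hash functions and independent noise. Zero mean holds because each $\hat X_i$ is unbiased, so $\E[X_i]=0$.

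\textbf{Step 1: per-sketch tail bound.} Each $X_i$ is a sum of at most $2^d$ single-coordinate errors $\tilde x_a - x_a$ from sketch $i$. With $b=\BU{n/\sigma}$, \Cref{Lemma: Lower bound on tail} (using $\norm{y^{(D)}}_1\le n$) gives $\varphi=\norm{\text{tail}_b(y)}_2/\sqrt b \le n/(2b) = \BU{\sigma}$, so $\max\{\varphi,\sigma\}=\BU{\sigma}$. Applying \Cref{Lemma: Private CountSketch estimation} with $\alpha=1$, each coordinate error exceeds $C\sigma$ with probability $2\exp(-\BL{t})$. A union bound over the $2^d$ coordinates, together with $t=\T{c\log B + d\log\log B}$, makes the event that any coordinate in any sketch exceeds $C\sigma$ have probability at most $2^d(\log B)^d\cdot 2\exp(-\BL{t}) \le B^{-c}$. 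This is where the $d\log\log B$ term in $t$ is used.

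\textbf{Step 2: Bernstein on the truncated variables.} On the good event, $|X_i|\le M:=2^dC\sigma$. Truncating at $M$ and recentering changes the means only by a negligible amount, which can be controlled using the exponential tail. The variance satisfies $\E[X_i^2]=\BU{4^d\sigma^2}$, either from the same tail integrated or directly from the median-estimator variance. Bernstein then gives
\[
\Pr\Bigl[\bigl|\textstyle\sum_i X_i\bigr|\ge \lambda\Bigr]\le 2\exp\!\Bigl(-\tfrac{\lambda^2/2}{(\log B)^d 4^d\sigma^2 + M\lambda/3}\Bigr).
\]
Choose $\lambda$ so that the exponent is at least $c\ln B$, i.e.\ $\lambda \approx 2^d\sigma\bigl(\sqrt{c(\log B)^{d}\ln B}+c\ln B\bigr)$. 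Substituting $\sigma^2=t(\log B)^d/(2\rho)$ gives $\lambda=\BU{2^d\sqrt{t}\,c\,(\log B)^{d+1/2}/\sqrt\rho}$. This is comfortably below the stated threshold $t(\log B)^{c+d+1}/\sqrt\rho$ for large $B$, since $2^d$ and $c$ are absorbed by the extra $(\log B)^{c}$ factor and $\sqrt t\le t$. Adding the $B^{-c}$ failure probability from Step 1 gives the total bound $2B^{-c}$.

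\textbf{Main obstacle.} The delicate step is the interface between Step 1 and Step 2. The variables $X_i$ are not literally bounded, and conditioning on the good event could destroy independence or zero mean. I would handle this by defining truncated variables $X_i\mathbf 1\{|X_i|\le M\}$, which remain independent, and bounding their mean shift by $\E[|X_i|\mathbf 1\{|X_i|>M\}]$, which is tiny by the exponential tail. Alternatively, I would prove a sub-exponential norm bound $\norm{X_i}_{\psi_1}=\BU{2^d\sigma}$ directly from the tail and apply Bernstein's inequality for sub-exponential sums, which avoids truncation entirely. In either route, the generous threshold in \eqref{eq:bernstein-bound} leaves ample room for the constants.
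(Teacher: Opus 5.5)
Your proposal is correct and follows the paper's proof. Both clip each $X_i$, charge the clipping event to a union bound over the at most $(2\log_2 B)^d$ coordinate estimates via the Private CountSketch tail lemma (this is where the $d\log\log B$ term in $t$ is needed), and apply Bernstein's inequality to the independent bounded clipped variables with per-sketch variance $O(4^d\sigma^2)$, landing far below $t(\log B)^{c+d+1}/\sqrt{\rho}$.

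The one difference is how the mean of the truncated variables is handled. You bound it by $\mathbb{E}[|X_i|\mathbf{1}\{|X_i|>M\}]$, whereas the paper appeals to symmetry of $X_i$, which is not obvious when several estimates from one sketch are summed, so your choice is the more careful one. Via Cauchy--Schwarz your bound needs the same $O(4^d\sigma^2)$ second-moment bound the paper asserts. It cannot come from integrating the tail lemma, which only covers $\alpha\le 1$, and for the same reason your $\psi_1$ alternative does not follow from the stated lemma.

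Also require the Bernstein exponent to be at least $c\ln B+\ln 2$, so that the two failure probabilities sum to $2B^{-c}$ rather than $3B^{-c}$.
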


\begin{proof}
    We define the following clipped version of $X_i$:
    \[Y_i = \begin{cases}
        -\sigma, \text{ if } X_i < -\sigma \\
        X_i, \text{ if $X_i \in [-\sigma,\sigma]$}\\
        \sigma, \text{ if } X_i > \sigma \\
    \end{cases}\]
    Let $K = t(\log B)^{c+d+1}/\sqrt{\rho}$ be the target error in (\ref{eq:bernstein-bound}).
    Since $\sum_i X_i = \sum_i Y_i$ unless one of the variables $X_i$ exceeds $\sigma$ in absolute value, we can bound the probability in (\ref{eq:bernstein-bound}) as follows:
    \begin{align*}
        \Pr\lrs{\lra{\sum_i X_i} \geq K}
        \leq \sum_i \Pr\lrs{|X_i| > \sigma }
        + \Pr\lrs{\lra{\sum_i Y_i} \geq K} \enspace .
    \end{align*}
    By \Cref{Lemma: Space of one instance of PCS}, making the constants in the bounds for row size $b$ and repetition number $t$ large enough we have, for $B$ sufficiently large,
    \begin{align*}
        \sum_i \Pr\lrs{|X_i| \geq \sigma }
        \leq (2\log_2 B)^d \exp(-\BL{t}) \leq B^{-c} \enspace .
    \end{align*}
    To bound the second term, we use Bernstein's inequality.
    We have $|Y_i|\leq \sigma$ by definition, and since $X_i$ is symmetric around 0 we have $\E[Y_i] = 0$.
    Bernstein's inequality states:
    \begin{align}\label{eq:bernstein}
        \Pr\lrs{\lra{\sum_i Y_i} \geq K}
        &\leq 2\exp\lr{
            \frac{-\frac{1}{2}K^2}{
                \sum_i\E\lrs{Y_i^2} + \frac{1}{3}\sigma K
            }
        } \enspace .
    \end{align}
    It remains to lower bound the exponent of (\ref{eq:bernstein}).
    Since $Y_i$ is $X_i$ clipped to $[-\sigma, \sigma]$, we have $\E[Y_i^2] \leq \E[X_i^2]$, so to get an upper bound for the denominator it suffices to upper bound the second moment of $X_i$.
    For a fixed sketch $i$, the range query uses at most $2^d$ estimates from that sketch.
    We can write their signed errors as $Z_{i,1},\ldots,Z_{i,q_i}$, where $q_i\leq 2^d$, so that $X_i=\sum_{j=1}^{q_i} Z_{i,j}$.
    By \Cref{Lemma: Space of one instance of PCS} and the choice $b=\BU{n/\sigma}$, each fixed estimate has second moment $\E[Z_{i,j}^2]=\BU{\sigma^2}$.
    By Cauchy--Schwarz,
    \[
        \E[X_i^2]
        =
        \E\left[\left(\sum_{j=1}^{q_i} Z_{i,j}\right)^2\right]
        \leq q_i \sum_{j=1}^{q_i}\E[Z_{i,j}^2]
        \leq \BU{4^d\sigma^2}.
    \]
    Summing over the $(\log_2 B)^d$ sketches and using
    $\sigma^2=t(\log B)^d/(2\rho)$ gives
    \[
        \sum_i \E[X_i^2]
        \leq \BU{4^d(\log B)^d\sigma^2}
        =
        \BU{4^d t(\log B)^{2d}/\rho} .
    \]
    The denominator of (\ref{eq:bernstein}) satisfies
    \begin{align*}
        \sum_i\E\lrs{Y_i^2} + \frac{1}{3}\sigma K
        &\leq
        \BU{4^d t(\log B)^{2d}/\rho}
        +
        \BU{t^{3/2}(\log B)^{c+\frac{3d}{2}+1}/\rho} .
    \end{align*}
    Since $d$ is constant and
    $t=\T{c\log B+d\log\log B}$, the hidden constants in
    $t$ can be chosen so that, for sufficiently large $B$,
    \[
        \frac{K^2}{
            \sum_i\E\lrs{Y_i^2} + \frac{1}{3}\sigma K
        }
        \geq 2c\ln B + 2\ln 2 .
    \]
    Therefore
    \(
        \Pr\lrs{\lra{\sum_i Y_i}\geq K}
        \leq B^{-c}
    \), finishing the proof.
\end{proof}

We are now ready to show \Cref{Theorem: Reduced space}.
\begin{proof}
    Choose $c=\log_B(4/\beta)$, so that $2B^{-c}=\beta/2$.
    Use the mechanism summarized above with
    \[
        t=\T{\max\lrc{\log(d/\beta), c\log B+d\log\log B}}=\T{\tau}
        \quad\text{and}\quad
        \sigma^2=t(\log_2 B)^d/(2\rho) \enspace .
    \]
    Each point update changes one counter in each repetition of each of the
    $(\log_2 B)^d$ dyadic sketches, so the squared $\ell_2$-sensitivity of the
    full released sketch vector is $t(\log_2 B)^d$.
    By \Cref{Lemma: Gaussian Mechanism for zCDP}, the above choice of
    $\sigma^2$ gives $\rho$-zCDP.

    Unbiasedness follows by linearity: Each Private CountSketch point estimate
    is unbiased, and each range answer is the sum of the relevant dyadic count
    estimates.
    \Cref{Lemma: Bernstein bound} controls the error when summing over the dyadic
    decomposition of a given query, giving error probability at most~$\beta/2$.
    Hence the range-query error is
    \[
        \BU{t(\log B)^{c+d+1}/\sqrt{\rho}}
        =
        \BU{\tau(\log B)^{d+1+\log_B(4/\beta)}/\sqrt{\rho}}.
    \]
    This is \(\BU{E}\) due to the lower bound on $E$.
    It remains to account for space and query time.
    When $E\geq n$ (using a privately estimated count) we use the constant-space zero estimator.
    Otherwise, each dyadic sketch stores $t$ rows of size
    $b=\BU{1+n/E}$ by \Cref{Lemma: Space of one instance of PCS}.
    Since there are $(\log_2 B)^d$ sketches, the total space is
    \[
        \BU{(\log B)^d t(1+n/E)}
        =
        \BU{\frac{d n(\log B)^d\tau}{E}+1} \enspace .
    \]
    A query probes at most $2^d$ entries in each of the $(\log_2 B)^d$ sketches,
    and each probe reads $t$ repetitions.
    This gives query time $\BU{(2\log B)^d\tau}$.
\end{proof}

\end{document}